\newtheorem{definition}{\noindent \noindent {\bf
Definition}}[section]
\newtheorem{lem}{{\bf Lemma}}[section]
\newtheorem{prop}{{\bf Proposition}}[section]
\def\r{\ensuremath{\mathbb{R}}}
\def\rk{{\mathbb R}^{k}}
\def\tkm{T^1_kM}
\def\tkq{T^1_kQ}
\def\d{{\rm d}}
\def\derpar#1#2{\ds\frac{\partial{#1}}{\partial{#2}}}
\def\bea{\begin{eqnarray}}
\def\eea{\end{eqnarray}}
\def\beq{\begin{equation}}
\def\eeq{\end{equation}}
\def\beann{\begin{eqnarray*}}
\def\eeann{\end{eqnarray*}}
\newcommand{\ds}{\displaystyle}
\renewcommand{\neq}{=\hspace{-3.5mm}/\hspace{2mm}}
\def\fpd#1#2{{\frac{\partial #1}{\partial #2}}}
\def\spd#1#2#3{{\frac{\partial^2 #1}
{\partial #2\partial #3}}}
\def\vectorfields#1{{\mathfrak X}(#1)}
\def\cinfty#1{C^{\scriptscriptstyle\infty}(#1)}
\def\g{\mathfrak{g}}
 \definecolor{ochre}{rgb}{0.8, 0.47, 0.13}
\title{Symmetry reduction, integrability and reconstruction in $k$-symplectic field theory}
\author{L.\  B\'{u}a${}^a$, T.\ Mestdag${}^b$ and M.\ Salgado${}^a$
\\[2mm]${}^a$ Departamento de Xeometr{\'i}a e Topolox{\'i}a,\\
  Universidade de Santiago de Compostela, Spain.
\\[2mm]${}^b$ Department of Mathematics, Ghent University,\\ Krijgslaan 281, B--9000 Gent, Belgium}
\date{}
\begin{document}

%%%%% Portada %%%%%%%%%%%%%%%%%%%%%%%%%%%%%%%%%%%%%%%%%
\maketitle

%\tableofcontents

%\pagestyle{plain}
%\pagestyle{myheadings}
\date{}

\begin{abstract}
We investigate the reduction process of a $k$-symplectic field theory whose Lagrangian is invariant under a symmetry group. We give explicit coordinate expressions of the resulting reduced partial differential equations, the so-called Lagrange-Poincar\'e field equations. We discuss two issues about reconstructing a solution from a given solution of the reduced equations. The first one is an interpretation of the integrability conditions, in terms of the curvatures of some connections. The second includes the introduction of the concept of a $k$-connection to provide a reconstruction method. We show that an invariant Lagrangian, under suitable regularity conditions, defines a `mechanical' $k$-connection.
\end{abstract}

{\bf Keywords:} {classical field theories, symmetry, reduction, integrability, reconstruction}

 {\bf MSC:} 	37J15 , 53Z05 , 70S05, 70S10

%%%%% \'{I}ndice %%%%%%%%%%%%%%%%%%%%%%%%%%%%%%%%%%%%%%%%%%

%\setcounter{tocdepth}{1}% inserta todos los ep\'{\i}grafes hasta el nivel \paragraph en la tabla de contenidos

%\tableofcontents % Insertar tabla de contenidos

%\renewcommand{\baselinestretch}{1.5}
%\renewcommand{\arraystretch}{0.66}

%%%%% Cuerpo del documento %%%%%%%%%%%%%%%%%%%%%%%%%%%%

\section{Introduction}\label{section 1}

 %\tom{name of the file:} \verb!NEWVERSION2.tex!.  \tom{Version of \today. }

The Lagrangian equations of a first-order field theory are a set of second-order partial differential equations in the unknown fields $\phi^A(t)$, depending on $k$ parameters $t^\alpha$. For a Lagrangian $L(q^A,u^A_\alpha)$, they are of the form
\begin{equation}\label{lfield}
 \displaystyle\frac{\partial^2 L}{\partial q^B \partial u^A_\alpha}  \frac{\displaystyle\partial\phi^B} {\displaystyle\partial
t^\alpha}  +
 \ds\frac{\partial^2 L}{\partial u_\beta^B \partial u^A_\alpha  }  \frac{\displaystyle\partial^2\phi^B} {\displaystyle\partial
t^\alpha \partial t^\beta}   =  \ds\frac{\partial  L}{\partial q^A},
  \end{equation}
  with $(q^A=\phi^A(t),u^A_\alpha = \partial \phi^A/t^\alpha(t))$. In the literature, there exist many geometric models that describe classical Lagrangian field equations. Just to name a few, we mention the polysymplectic \cite{sarda1,Kana}, the $n$-symplectic \cite{no1},  the
  $k$-cosymplectic \cite{mod2},  the
multisymplectic  \cite{CCI91,bar1,GIM1,KijTul} and the jet \cite{Krupka,Saunders} formalisms. The main differences between all these models depend on e.g.\ the choice one makes for the geometric and the differentiable structure of both the space of parameters $t^\alpha$ (such as e.g. spacetime) and the space of fields $\phi^A$. The model we will use in this paper is the one of $k$-symplectic field theory, as developed in e.g.\ the papers \cite{BBS,gunther,fam,rsv07}. The space where the derivatives of the fields, $\partial\phi^a/\partial t^\alpha$, live is identified in this setting with the so-called tangent bundle of $k^1$-velocities $T^1_kQ$. In many ways, one may think of $k$-symplectic field theory as the model that resembles the closest the standard symplectic formalism of both Lagrangian and Hamiltonian mechanics on a tangent and a cotangent bundle, respectively. It characterises the (regular) field theory in terms of a certain class of so-called `$k$-vector fields' on $T^1_kQ$, which are literally collections of $k$ individual vector fields.

In the last few years there has been an increasing interest in field theories with symmetry, and in their reduction (see e.g.\ \cite{CGR,CRS,Ellis,Ellis2,marrero,JV} and the references therein). Depending on the nature of the space of fields, the reduced PDEs are often referred to as the `Lagrange-Poincar\'e field equations' or the `Euler-Poincar\'e field equations'. The general idea behind symmetry reduction is that, when a dynamical system (be it a set of ODEs or PDEs) is invariant under the action of a symmetry Lie group, the system can be reduced to one in fewer variables which is presumably easier to solve. The second step in the process is to reconstruct a solution of the original dynamical system from a given solution of the reduced system.

 %     \begin{color}{blue}
%Let us remark that in \cite{marrero} a generalization of the classical Marsden-Weinstein reduction procedure  for symplectic manifolds is  developed  for polysymplectic manifolds \cite{gunther}, which allow
% to reduce polysymplectic Hamiltonian systems with symmetries, such as those appearing in certain kinds of classical field theories.
%\end{color}

  The main goal of the paper is to show how both the reduction and reconstruction process works in the context of $k$-symplectic field theories. The method that has been followed the most up to now in the literature (for different geometric models of Lagrangian field theories), depends on a reduction of the variational principle that generates the Lagrangian equations.  By contrast, we will show that the $k$-symplectic model is ideal to follow a somewhat different procedure, which is similar to the one that has been used in the paper \cite{MC} for Lagrangian systems with symmetry. In our formulation of reduction below, we will bring the $k$-vector fields to the front, rather than the (unreduced or reduced) PDEs they produce.

 After some preliminaries (in Section~\ref{sec2}) we discuss in Section~\ref{sec5} some results about the integrability conditions of the PDEs that can be associated to an arbitrary invariant $k$-vector field ${\mathbf X}$ on a manifold $M$. Under the assumption that the reduced equations on $M/G$ are integrable, we will give an interpretation of the remaining integrability conditions in terms of the  curvature of some connection $\omega^{\breve\phi,{\mathbf X}}$.

  We then specify to the case where $M=T^1_kQ$, and the dynamics to those given by a Lagrangian $k$-vector field. In Section~\ref{sec3} we present a new formulation of the Lagrangian  $k$-vector fields on $T^1_kQ$ in terms of a non-standard local frame of vector fields on $Q$. In the presence of a Lagrangian with a symmetry group $G$, we identify in Section~\ref{sec4} the action under which the Lagrangian $k$-vector fields are invariant, and we show that they can be reduced to $k$-vector fields on the reduced space $(T^1_kQ)/G$. We end Section~\ref{sec4} with a computation of the coordinate expressions of these vector fields and their associated PDEs (which represent the Lagrange-Poincar\'e PDEs in this context).

At the end of Section~\ref{sec4} we turn back to our interpretation of the integrability conditions, for the case of a Lagrangian $k$-vector field and we make the link, in our setting, to some results about `reconstruction' that have appeared in the paper \cite{Ellis}. There, a big role is played by two connections ${\mathcal A}^\rho$ and   ${\mathcal A}^{\bar\sigma}$. We will show how these connections (i.e.\ their analogues, when translated to our setting) appear in our discussion about integrability, by decomposing the connection $\omega^{\breve\phi,{\mathbf X}}$ into two parts.

The integrability conditions only guarantee that a solution may be reconstructed, but they do not tell one how to do so. In Section~\ref{sec6} we discuss, first for a $k$-vector field $\mathbf X$ on $M$, a reconstruction method that allows one to re-assemble the solution, from a given solution of the reduced equations and from a map that takes values in the symmetry Lie group. This part of the problem involves the introduction of a new concept, that of a principal $k$-connection on the principal bundle $M\to M/G$. It is an appropriate generalization, to the level of $k$-tangent bundles, of the notion of a principal
connection. We end Section~\ref{sec6} by showing that, on $M=T^1_kQ$,  such a connection is naturally available for a Lagrangian field theory with symmetry (up to a certain regularity condition  on the Lagrangian). Since it resembles the so-called mechanical connection which appears in the context of a Lagrangian system whose kinetic energy is associated to a Riemannian metric (see e.g.\ \cite{MC} for a discussion on this topic), we have kept that name also for the case of field theories. We end the paper with an application of our results to the context of harmonic maps.

\section{Integrability of a $k$-vector field}  \label{sec2}

In this section we recall the concept of a $k$-vector field, and of an integral section of a $k$-vector field. Parts of this section can be found in more detail in the papers \cite{BBS,fam,rsv07}. We finish the section with a useful integrability criterion for a $k$-vector field in terms of an associated connection.

\subsection{Connections and curvature}

In what follows we will often use non-linear connections, on many bundles. To set notations, let us recall briefly their definition. Let $p: E \to B$ be a fibre bundle. For $e\in E$, the vertical space $V_eE$ at $e$ is given by the kernel of $T_ep: T_eE \to T_{p(e)}B$. It gives rise to the so-called vertical distribution $VE=\{V_eE  | e\in E\}$. We can put this in a short exact sequence of vector bundles over $M$,
\begin{equation} \label{seq}
0 \to VE \to TE \to E\times_B TB \to 0,
\end{equation}
where the middle arrow $j: TE \to E\times_B TB$ is given by $v_e \mapsto (e, Tp(v_e))$. A connection on $p$ is either given by a right splitting $\gamma: E\times_B TE \to TE$ (i.e.\ a linear  map satisfying $j \circ \gamma = id$), or by the corresponding left splitting $\omega = id -\gamma    \circ j: TE \to VE\subset TE$.

The above short exact sequence naturally extends to the level of sections of the corresponding bundles over $M$,
\[
0 \to Sec(VE) \to \vectorfields{E} \to Sec(E\times_B TB) \to 0.
\]
A splitting of (\ref{seq}) induces a splitting of the second sequence. When we interpret $\omega: \vectorfields{E} \to \vectorfields{E}$ as a (1,1) tensor field on $E$, we will call it the {\sl connection form}, or the {\sl vertical projection}. The map $h:id-\omega$ is the {\sl horizontal projection} of the connection. Since vector fields $T$ on $B$ can be thought of as basic sections in $Sec(E\times_B TB)$, we may define the {\sl horizontal lift of $T$} as the vector field $T^h$ of $E$, given by $T^h(e) = \gamma(e,T(\pi(e)))$.

The {\sl curvature} of the connection is the (1,2) tensor field on $E$, given by $(X,Y) \mapsto -\omega([hX,hY])$, for two vector fields $X,Y\in\vectorfields{E}$. In what follows, however, we will also often use the word `curvature' for the restriction of that map to two horizontal lifts and use the notation
\[
K(T,S) = - \omega([T^h,S^h])  \in\vectorfields{E}
\]
when  $T,S\in\vectorfields{B}$.

\subsection{The tangent bundle of $k^1$-velocities}

Let $\tau_M\colon TM \to M$ be the tangent bundle of a differentiable manifold $M$. We will use the notation $T^1_kM$ for the Whitney sum
$TM\oplus\stackrel{k}{\dots}\oplus TM$ of $k$ copies of $TM$ and $\tau^1_M$
 for the corresponding projection $\tau^1_M \colon T^1_kM\to M$ which maps $({u_1},\ldots ,
{u_k})$ onto the point $m\in M$ on which all $u_\alpha$'s $\tau_M$-project.
$T^1_kM$ can be identified with the manifold $J^1_0({\bf R}^k,M)$ of $k^1$-velocities of
$M$. These are $1$-jets of maps from $\rk$ to $M$ with source at $0\in {\bf R}^k$.
For this reason the manifold $T^1_kM$ is called {\sl the tangent bundle of
$k^1$-velocities of $M$}.

In what follows, we will denote coordinates on $\r^k$ by  $ (t^\alpha) = (t^1,\ldots, t^k)$ and use bold face letters ${\mathbf u}$ to denote
elements $({u_1},\ldots , {u_k})$ in $T^1_kM$. If $(x^I)$ (with $I=1\ldots \dim M$) are local coordinates on $U \subset M$ then the
induced local coordinates   $(x^I,
u^I)$ on $TU=\tau_M^{-1}(U)$ are given by
$$
x^I( u_m)=x^I(m),\qquad u^I(u_m)=u_m(x^I), \quad u_m\in T_mM.
$$
These naturally induce coordinates $(x^I ,
u^I_\alpha)$ (with $I=1\ldots  \dim M;\, \alpha=1\ldots k$) for a point ${\mathbf u}$ in
$T^1_kU=(\tau_M^1)^{-1}(U)$,
such that  $u^I_\alpha$ are the components of the $\alpha$'th vector ${u_\alpha}$ of ${\mathbf u}$ along the natural basis of $T_mM$.

Let $\varphi\colon M\to N$ be a differentiable map. In what follows, we will  make use of the {\sl canonical prolongation of $\varphi$}, which is  the induced
map $T^1_k\varphi:T^1_kM \to  T^1_k N$  defined by
\[
  T^1_k\varphi({\mathbf v})=
(T_m\varphi(v_1),\ldots,T_m\varphi(v_k)) \  .
\]
The {\sl first prolongation $\psi^{(1)}$ of a map $\psi: \r^k \to M$} is the map $\r^k\to T^1_kM$, defined by
\[
\psi^{(1)}(t)=
 \left(T_t\psi\left(\derpar{}{t^1}\Big\vert_t\right),\ldots,
T_t\psi\left(\derpar{}{t^k}\Big\vert_t\right)\right) \, .
\]
In local coordinates, we have
\begin{equation}\label{localfi11}
\psi^{(1)}(t)=\left( \psi^I (t), \frac{\partial\psi^I}{\partial t^\alpha} (t)\right), \qquad  1\leq \alpha\leq k\, ,\, 1\leq I\leq \dim M \, .
\end{equation}

\begin{definition}
 A  {\rm $k$-vector} field on $M$ is a section ${\bf X}: M \to T^1_kM$
of the vector bundle $\tau^1_M:T^1_kM\to M$.
\end{definition}

Given that $T^{1}_{k}M$ is the Whitney sum  of $k$ copies of $TM$, by projecting a $k$-vector field ${\bf X}$  onto
every factor, we see that it consists of a family of $k$ vector
fields $X_\alpha = \tau_\alpha \circ{\bf X}$ on $M$. Here,
$\tau_\alpha\colon T^1_kM \rightarrow TM$ stands for the canonical projection
on the $\alpha^{th}$-copy of $TM$ in $T^1_kM$. We will denote the set of $k$-vector fields on $M$ by  $\mathfrak{X}^k(M)$. It is a $\cinfty{M}$-module.

\begin{definition}
\label{integsect} An integral section  of a $k$-vector
field ${\mathbf X}$, passing through a point
$m\in M$, is a map $\psi: U_0\subset \r^k \rightarrow M$,
defined on some neighbourhood  $U_0$ of $0\in \r^k$,  such that $\psi(0)=m$ and ${\mathbf X}\circ\psi=\psi^{(1)}$, where  $\psi^{(1)}$ is the first
prolongation of $\psi$.

 A $k$-vector field ${\mathbf X}$ on $M$ is integrable if there exists an integral section passing through every point of $M$.
\end{definition}

An integral section $\psi$ of the $k$-vector field ${\mathbf X}$, consisting of the vector fields $X_{\alpha} = X^I_\alpha \partial /\partial x^I$,  satisfies
\begin{equation}\label{condInteg0}
T_t\psi \left(\frac{\partial}{\partial
t^\alpha}\Big\vert_t\right)=X_{\alpha}(\psi (t)),
\end{equation}

and therefore, in view of (\ref{localfi11}), we get in local coordinates
\begin{equation}\label{condInteg}
\fpd{\psi^I}{t^\alpha} = X^I_\alpha \circ \psi, \quad 1\leq I \leq \dim M, \,\, 1\leq \alpha\leq k.
\end{equation}

As is the case for any vector bundle, by considering products of the bundle with its dual, we may consider forms and tensor fields on it. For example, we will speak of a {\em (1,1) $k$-tensor field on $M$} when we mean a (1,1) tensor field on $\tau^1_k$, i.e.\ is a $\cinfty{M}$-linear  map  $\mathfrak{X}^k(M) \to\mathfrak{X}^k(M)$. Locally, we can write for ${\mathbf X} = (X_\alpha)$ that ${\mathbf A}({\mathbf X}) = {\mathbf Y}$, with $Y^J_\beta = A^{J\alpha}_{\beta I} X^I_\alpha$. As a special case, one may consider a (1,1) tensor field $A$ on $M$, and extend it to a (1,1) $k$-tensor field, by putting $Y_\beta=A(X_\beta)$. Then $A^{J\alpha}_{\beta I} = A^J_I \delta^\beta_\alpha$. In an analogous terminology, we will speak of $(r,s)$ $k$-tensor fields on $M$.

\begin{definition}
Let  $X$ be a vector field on $M$. The Lie derivative ${\mathcal L}_X$ of a $k$-vector field $\mathbf Y$ on $M$ is the $k$-vector field  ${\mathcal L}_X \mathbf Y$ on $M$ whose $\alpha$th component is given by the vector field $[X, Y_\alpha]$ on $M$.
\end{definition}
 An equivalent formulation that makes use of the flow $\phi_t$ of the vector field $X$ is then
\[
{\mathcal L}_X {\mathbf Y}(m) =
\lim_{t\mapsto 0} \frac{T^1_k\phi_t({\mathbf Y}(m)) - {\mathbf Y}(m) }{t}.
\]
The corresponding Leibnitz-property is then ${\mathcal L}_X (f{\mathbf Y}) = X(f){\mathbf Y} +f {\mathcal L}_X \mathbf Y$. We can easily extend the Lie derivative to $k$-tensor fields. In particular, we have
\begin{equation}\label{lieder}
({\mathcal L}_X{\mathbf A})(\mathbf Y) = {\mathcal L}_X({\mathbf A}(\mathbf Y)) - {\mathbf A}({\mathcal L}_X \mathbf Y)
\end{equation}
for a (1,1) $k$-tensor field.

\subsection{The connection associated to a $k$-vector field}

An arbitrary section of the trivial bundle $\pi\colon\rk\times M \to \rk$ can be written as $(Id_{\rk},\phi):\rk \to \rk\times M$. Elements of the first jet bundle $J^1\pi$
can then be identified with couples in $\rk\times T^1_kM$, as follows:
$$j^1_t(Id_{\rk},\phi) \equiv \left(t,(\ldots, \phi_*(t)(\frac{\partial}{\partial t^\alpha}\Big\vert_{t}),
\ldots)\right)=(t,\phi^{(1)}(t)).$$

Each $k$-vector field ${\bf X}=(X_1,\ldots, X_k)$ defines a special type of {\sl jet field on $\pi$}, given by
\[
(Id_{\rk}, {\bf X})\colon \rk\times M \to \rk\times T^1_kM,
\]
 that is: a section of the canonical projection $\pi_{1,0}:J^1\pi \equiv \rk\times T^1_kM \to \rk\times M$.
  In local fibered coordinates the section is given by  $$(Id_{\rk}, {\bf X})(t^\alpha,x^I)=(t^\alpha,x^I,X^I_\alpha(x)).$$

A section  $\bar{\psi}=(Id_{\rk},\psi)$ of $\pi$ is called an integral section of the jet field $(Id_{\rk}, {\bf X})$ if $j^1\bar{\psi}=(Id_{\rk}, {\bf X})\circ\bar{\psi}$. This means, locally,  that
$\psi\colon \rk \to M$ must satisfy the equations (\ref{condInteg}), or that it must be an integral section of ${\bf X}$.
The $k$-vector field ${\bf X}$ is therefore integrable if and only if it is  associated jet field $(Id_{\rk},{\bf X})$ of $\pi$ is.

It is well-known that jet fields may be interpreted as connections (see \cite{Saunders} for details). In particular, the jet field $(Id_{\rk}, {\bf X})$  of a $k$-vector field can be identified with a connection on
the trivial bundle $\rk\times M$,  i.e.\ with a splitting of the short exact sequence
\[
0 \to V(\r^k\times M)\equiv\r^k \times TM   \to T(\r^k\times M) \to (\r^k\times M)\times_{\r^k\times M}T\r^k \equiv M \times T\r^k \to 0
\]
of vector bundles over $\rk\times M$. The right splitting  $\gamma^{\bf X}\colon  M \times T\r^k \to T(\rk \times M)$ of the {\sl connection associated to  ${\bf X}$} is given by
 \begin{equation}\label{locgam0}\gamma^{\bf X}  (m,T^\alpha \derpar{}{t^\alpha}\Big\vert_{t})= T^\alpha \left(
 \derpar{}{t^\alpha}\Big\vert_{(t,m)} +X^I_\alpha(m)\derpar{}{x^I}\Big\vert_{(t,m)}\right)
 \end{equation}
 and there is a similar expression for its horizontal lift.
The left splitting $\omega^{\bf X}\colon  T(\rk\times M) \to \r^k \times TM$  is, under the identifications made, given by
 \begin{equation}\label{locomega0}\omega^{\bf X}
 (T^\alpha \derpar{}{t^\alpha}\Big\vert_{(t,m)}+ Y^I   \derpar{}{x^I}\Big\vert_{(t,m)}         )=
\left(t ,(Y^I-X^I_\alpha\, T^\alpha)  \derpar{}{x^I}\Big\vert_{m}\right).
 \end{equation}

One easily verifies (see e.g.\ Proposition~4.6.10 in \cite{Saunders}) that the jet field $(Id_{\rk}, {\bf X})$
 is integrable, or equivalently: that the $k$-vector field ${\bf X}$ is integrable, if and only if the curvature
$K^{\bf X}$ of the associated connection vanishes. This is equivalent with $[X_\alpha,X_\beta]=0$, or
\begin{equation}\label{condInteg1}
X^I_\alpha\fpd{X^J_\beta}{x^I} - X^I_\beta\fpd{X^J_\alpha}{x^I} = 0,
\end{equation}
for all $x\in M$.

\section{Integrability of an invariant $k$-vector field}\label{sec5}

In this section we study the above integrability criterion for the case of a $G$-invariant $k$-vector field. We examine its relationship to the integrability criterion of its reduced $k$-vector field.

 \subsection{Invariant $k$-vector fields} \label{sec41}

 Let $\Phi : G {\bf \times} M \to M$ be a free and proper action of a connected Lie group $G$ on $M$. Then, the projection $\pi_M:M\to M/G$ on the set of
equivalence classes defines a principal bundle structure on $M$. A  vector field $W$ on $M$ is said to be invariant if
$$ T_m\Phi_g(W(m))=W(\Phi_g(m))\, .
$$
In that case, the relation
 \begin{equation}\label{defbreve}
 \breve{W} \circ \pi_M=T\pi_M \circ W
 \end{equation}
 uniquely defines a {\sl reduced vector field $\breve{W}$ on $M/G$}.

Likewise, if $F:M \to \r$ is an invariant function on $M$ it can be reduced to a function $f:M/G \to \r$ with
$
f \circ \pi_M\, = \, F$.
We also have that
\begin{equation}\label{wf}
W(F) \, =\, W(f \circ \pi_M)\,  = \, \breve{W}(f) \circ \pi_M,
\end{equation}
that is, $\breve{W}(f)$ is the reduced function on $M /G$ of the invariant function $W(F)$ on $M$.

   We will
denote by $\Phi^{\tkm}:G\times T^1_kM \to T^1_kM$ the $k$-tangent action, given by $\Phi^{\tkm} (g,{\mathbf v}) = T^1_k\Phi_g({\mathbf v})$, or
\[
\Phi^{\tkm} (g,v_{1}, \ldots, v_{k}) =
\left(T_m\Phi_g (v_{1}) , \ldots , T_m\Phi_g
(v_{k})\right),
\]
where $m=\tau^1_M({\mathbf v})$ and $g \in G$. The action $\Phi^{\tkm}:G\times \tkm \to \tkm$ is also
free and proper and, therefore, $\pi_{\tkm}:\tkm \to (\tkm)/G$ is a
principal bundle too.

\begin{definition}
A $k$-vector field ${\mathbf X}$ on $M$ is $G$-invariant if $\Phi^{T^1_kM}_g \circ {\mathbf X} = {\mathbf X} \circ\Phi_g$.
\end{definition}
Thus, a $k$-vector field ${\mathbf X}$ on $M$ is $G$-invariant if
$$T_m\Phi_g( X_\alpha(m) )=   X_\alpha(\Phi_g(m)) \,\quad m\in M ,\,\,  1\leq \alpha\leq k$$
and therefore is each composing vector field $X_\alpha$ a  $G$-invariant vector field on $M$.

Let us denote by $\xi_M$ the fundamental vector field for the action $\Phi$, associated to an element $\xi$ of the Lie algebra $\g$.  Recall that, if $G$ is connected, a function $f$ on $M$ is invariant if and only if $\xi_M(f)=0$ for all $\xi\in \mathfrak{g}$. Likewise, a vector field $X$ on $M$ is invariant if and only if $[X,\xi_M]=0$ for all $\xi\in \mathfrak{g}$.
In terms of the Lie derivative we had introduced in Section~\ref{sec2}, we obtain that ${\mathbf X}$ is invariant if and only if ${\mathcal L}_{\xi_M}{\mathbf X} = {\mathbf 0}$, for all $\xi\in\g$.

 \begin{definition} \label{defred}
The reduced $k$-vector field of a $G$-invariant $k$-vector field ${\mathbf X} =(X_\alpha)$ on $M$ is the $k$-vector field $\breve{\mathbf X}$ on $M/G$ whose composing parts are given by the reduced vector fields $\breve{X}_\alpha$ of $X_\alpha$, given by
\[
T\pi_M \circ X_\alpha = \breve{X}_\alpha \circ \pi_M.
\]
\end{definition}

From relation (\ref{condInteg0}) and the above definition of $\breve{X}_\alpha$ we can easily conclude:
\begin{prop} \label{m}
If $\phi$ is an integral section of an invariant $k$-vector field $\mathbf{X}$ on $M$, then  $\breve\phi=\pi_M\circ\phi$ is an integral section of the reduced $k$-vector field $\breve{\mathbf X}$ on $M/G$.
\end{prop}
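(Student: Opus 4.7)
The plan is to verify the integral-section condition for $\breve\phi$ directly, by invoking the chain rule for the composition $\breve\phi = \pi_M\circ\phi$ together with the defining relation of the reduced vector fields $\breve X_\alpha$. First, I would recall from Definition \ref{integsect} (specifically equation (\ref{condInteg0})) that the hypothesis on $\phi$ amounts to the $k$ pointwise identities $T_t\phi(\partial/\partial t^\alpha|_t) = X_\alpha(\phi(t))$, one for each $\alpha$ and each $t$ in the domain of $\phi$. The goal is to establish the analogous relations for $\breve\phi$ with $\breve X_\alpha$ in place of $X_\alpha$, and then repackage them into the single equation $\breve{\mathbf X}\circ\breve\phi = \breve\phi^{(1)}$ that expresses integrality on $M/G$.

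The key step is a one-line chain-rule computation, performed for each $t$ and each $\alpha$:
\begin{align*}
T_t\breve\phi\left(\frac{\partial}{\partial t^\alpha}\Big\vert_t\right)
&= T_{\phi(t)}\pi_M\left(T_t\phi\left(\frac{\partial}{\partial t^\alpha}\Big\vert_t\right)\right) \\
&= T_{\phi(t)}\pi_M\bigl(X_\alpha(\phi(t))\bigr) \\
&= \breve X_\alpha\bigl(\pi_M(\phi(t))\bigr) = \breve X_\alpha(\breve\phi(t)),
\end{align*}
where the first equality is the chain rule, the second uses the assumption that $\phi$ is an integral section of $\mathbf X$, and the third uses the defining relation $T\pi_M\circ X_\alpha = \breve X_\alpha\circ\pi_M$ of Definition \ref{defred}. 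Comparing with (\ref{localfi11}) componentwise across $\alpha=1,\dots,k$, this reads exactly $\breve\phi^{(1)}(t) = \breve{\mathbf X}(\breve\phi(t))$.

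Since the entire argument reduces to one application of the chain rule combined with the $\pi_M$-relatedness of $X_\alpha$ and $\breve X_\alpha$, there is no genuine obstacle here: $G$-invariance of $\mathbf X$ is used only implicitly, in that it is precisely what guarantees the reduced vector fields $\breve X_\alpha$ exist and are well defined on $M/G$ in the first place. In essence, Proposition \ref{m} is a naturality statement saying that the first-prolongation operation commutes with the quotient projection $\pi_M$, so that integral sections descend.
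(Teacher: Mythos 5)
Your proof is correct and follows exactly the route the paper intends: the paper simply remarks that the proposition follows "from relation (\ref{condInteg0}) and the above definition of $\breve{X}_\alpha$", and your chain-rule computation combined with the $\pi_M$-relatedness $T\pi_M\circ X_\alpha=\breve{X}_\alpha\circ\pi_M$ is precisely that argument spelled out. Your closing remark about where $G$-invariance enters (only to ensure $\breve{X}_\alpha$ is well defined) is also accurate.
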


\subsection{Integrability and curvature}\label{sec51}

In this section we consider  a principal fibre bundle $\pi_M: M \to M/G$. We wish to examine how the integrability of an invariant $k$-vector field on $M$ relates to the integrability of its reduced $k$-vector field on $M/G$.

From now we will use local coordinates on $M$ defined as follows. Let $U\subset M/G$ be an open set over which $M$ is locally trivial, so that
$(\pi_M)^{-1}(U)\simeq
U\times G$. We will use coordinates $(x^i,x^a)$ on a suitable open subset $(\pi_M)^{-1}(U)$  (containing $U\times \, e$) such that $(x^i)$ are coordinates on $U$, and $(x^a)$  are coordinates on the fibre $G$. Then, the local expression of the projection $\pi_M: M \to M/G$ is:
\begin{equation}\label{TriviCoord}
\begin{array}{ccc}
(\pi_M)^{-1}(U)=U\times G & \to & U \\
\noalign{\medskip} (x^I)=(x^i,x^a) & \mapsto & (x^i).
\end{array}
\end{equation}

In these coordinates, the left action of $G$ onto $(\pi_M)^{-1}(U)=U\times G$ is given
by
\[
\Phi_g(x,h)  =  (x,gh).
\]

We can write any $k$-vector field ${\mathbf X}$ on $M$ as
\begin{equation}\label{locX}
 X_\alpha= X^i_\alpha \fpd{}{x^i} + {\tilde X}^a_\alpha \fpd{}{x^a}.
  \end{equation}
If ${\mathbf X}$ is $G$-invariant then the functions $X_\alpha^i$ are invariant functions on $M$. They can therefore be identified with functions on $M/G$. The reduced $k$-vector field $\breve{\mathbf X} =({\breve X}_\alpha)$ we had defined in Definition~\ref{defred} is given by
\[
{\breve X}_\alpha= X^i_\alpha \fpd{}{x^i}.
\]

From (\ref{locgam0}) we know that the connection associated to the reduced $k$-vector field $\breve{\mathbf X}$ is given by right splitting  \begin{equation}\label{locgam1}\gamma^{\breve{\mathbf X}}  ([m],T^\alpha \derpar{}{t^\alpha}\Big\vert_{t})= T^\alpha \left(
 \derpar{}{t^\alpha}\Big\vert_{(t,[m])} +X^i_\alpha(m)\derpar{}{x^i}\Big\vert_{(t,[m])}\right)
 \end{equation}
of the short exact sequence
\[
0 \to \r^k \times T(M /G)    \to T(\r^k\times (M/G) )\to
(M/G) \times T\r^k
       \to 0.
\]
We will denote its  curvature by  $K^{\breve{\mathbf X}}$.

 \begin{prop}\label{CaractInteg}

(1) If  ${\mathbf X}$ is integrable, then so is also the reduced $k$-vector field  $\breve{\mathbf X}$, i.e.\ $K^{\breve{\mathbf X}}=0$.

(2) If  $\breve{\mathbf X}$ is integrable then the vector fields $[X_\alpha,X_\beta]$ take values in the vertical distribution  of $\pi_M$, which can be identified with $M\times\g$.
\end{prop}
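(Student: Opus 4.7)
The plan is to exploit the fact that each $X_\alpha$ is $\pi_M$-related to its reduction $\breve{X}_\alpha$ (this is literally the content of Definition~\ref{defred}) together with the standard naturality of the Lie bracket under smooth maps. From $T\pi_M \circ X_\alpha = \breve{X}_\alpha \circ \pi_M$ one immediately deduces
\[
T\pi_M \circ [X_\alpha, X_\beta] \;=\; [\breve{X}_\alpha, \breve{X}_\beta] \circ \pi_M, \qquad (\star)
\]
which is the single identity that drives both parts of the proposition.

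For (1), the integrability criterion~(\ref{condInteg1}) says $\mathbf{X}$ is integrable iff $[X_\alpha, X_\beta] = 0$ for all $\alpha, \beta$. Plugging this into $(\star)$ yields $[\breve{X}_\alpha, \breve{X}_\beta] \circ \pi_M = 0$, and surjectivity of $\pi_M$ forces $[\breve{X}_\alpha, \breve{X}_\beta] = 0$; applying~(\ref{condInteg1}) on the base $M/G$ then gives $K^{\breve{\mathbf X}} = 0$. For (2), assuming $\breve{\mathbf{X}}$ integrable, $(\star)$ reduces to $T_m\pi_M\bigl([X_\alpha, X_\beta](m)\bigr) = 0$ for every $m \in M$, so $[X_\alpha, X_\beta](m) \in \ker T_m\pi_M = V_m M$. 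The remaining identification $VM \cong M \times \mathfrak{g}$ is simply the canonical trivialisation of the vertical bundle of a principal $G$-bundle, realised by the fundamental vector fields $(m,\xi) \mapsto \xi_M(m)$.

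If one prefers a hands-on verification, the same conclusions can be reached directly in the trivialising coordinates~(\ref{TriviCoord}). Writing $X_\alpha$ as in~(\ref{locX}) and using that $G$-invariance of $\mathbf{X}$ forces each $X_\alpha^i$ to be independent of the fibre coordinates $x^a$, the $\partial/\partial x^i$-components of $[X_\alpha, X_\beta]$ reduce to $X_\alpha^j\,\partial X_\beta^i/\partial x^j - X_\beta^j\,\partial X_\alpha^i/\partial x^j$, which is exactly the pull-back to $M$ of the $i$-th component of $[\breve{X}_\alpha, \breve{X}_\beta]$. Both statements then follow at once from the integrability criterion~(\ref{condInteg1}). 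There is no real obstacle in either route: the only bookkeeping needed is that invariance kills the $\partial/\partial x^a$-derivatives of the reduced coefficients $X_\alpha^i$, so the fibre components $\tilde{X}_\alpha^a$ never interfere with the projectable part of the bracket.
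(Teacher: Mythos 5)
Your proof is correct and follows essentially the same route as the paper: the paper's entire argument is the single identity $T\pi_M\circ [X_\alpha,X_\beta] = [\breve X_\alpha,\breve X_\beta]\circ\pi_M$, which is exactly your $(\star)$, with both conclusions read off from it as you do. The coordinate verification you add is consistent with the paper's surrounding discussion (e.g.\ the computation leading to (\ref{rest})) but is not needed.
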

\begin{proof}
Both properties easily follow from the fact that $T\pi_M\circ [X_\alpha,X_\beta] = [{\breve X}_\alpha, {\breve X}_\beta]\circ\pi_M$.
\end{proof}

When ${\breve{\mathbf X}}$ is integrable  the remaining vertical part of the bracket $[X_\alpha,X_\beta]$   is  locally given by \begin{equation}\label{rest}
\left[X^i_\alpha \fpd{}{x^i}, {\tilde X}^a_\beta \fpd{}{x^a}\right] - \left[X^i_\beta \fpd{}{x^i}, {\tilde X}^a_\alpha \fpd{}{x^a}\right] +     \left[ {\tilde X}^a_\alpha \fpd{}{x^a}, {\tilde X}^b_\beta \fpd{}{x^b}\right].
\end{equation}
In the calculation of these brackets one should take into account that all partial derivatives of the functions $X^i_\alpha$ with respect to variables $x^a$ vanish, because the components $X^i_\alpha$ are $G$-invariant.

In the remainder of this section, we will show that we can also give an interpretation of that vertical part (\ref{rest}), as the curvature of some connection.

Let $\breve\phi: \r^k \to M/G$ be an integral section of the reduced $k$-vector field $\breve{\mathbf X}$ on $M/G$. Consider the pull-back bundle $\pi_2: {\breve\phi}^*M \to \r^k$:
\[\xymatrix@=10mm{ {\breve\phi}^*M\ar[r]^-{} \ar[d]_-{\pi_2} & M \ar[d]^-{\pi_M}& \mbox{with}\, {\breve\phi}^*M=\left\{ (t,m)\, : \, \pi_M(m)= \breve\phi(t)\right\}\\
\r^k \ar[r]^-{ \breve\phi} & M/G &}
\]
This is a $G$-principal bundle. Let us use $i$ for the inclusion $i: \breve\phi^*M \to \r^k \times M$. We will use $p$ for a point in ${\breve\phi}^*M$, and $(t,m)$ for its inclusion $i(p)$ in $\r^k\times M$. Then, $\breve\phi(t)=\pi_M(m)$ in $M/G$.

\begin{lem}\label{RelIntphihat}\
\begin{enumerate}
\item If $\breve\phi: \r^k \to M/G$ is  an integral section of the reduced $\breve{\mathbf X}$
then  $\breve\phi$ satisfies
\begin{equation} \label{propcon}
\gamma^{\breve{\mathbf X}} (\breve\phi(t),v_t) = T_t\hat\phi (v_t), \quad \mbox{for all $v_t\in T_t\r^k$}.
\end{equation}
where $\hat\phi: \r^k \to \r^k\times M/G, t \mapsto (t,\breve\phi(t))$ and where $\gamma^{\breve{\mathbf X}}$ is  the  connection associated to $\breve{\mathbf X}$.

\item
The following diagram is commutative
\[\xymatrix@=10mm{ {\breve\phi}^*M\ar[r]^-{i} \ar[d]_-{\pi_2} & \rk\times M \ar[d]^-{\bar\pi_M=(Id_{\rk},\pi_M)}
&&&  p\ar[r]^-{i} \ar[d]_-{\pi_2} & (t,m) \ar[d]^-{\bar\pi_M=(Id_{\rk},\pi_M)}\\
\r^k \ar[r]^-{ \hat\phi} &\rk\times  M/G &&& t\ar[r]^-{ \hat\phi} &(t,[m])}
\] that is
\begin{equation}\label{pi2phihat}
\bar\pi_M\circ i = \hat\phi\circ\pi_2\,
\end{equation}
\end{enumerate}\end{lem}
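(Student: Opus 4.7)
The two parts of the lemma are essentially immediate from the definitions; I will handle them separately.

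For part (1), the plan is a direct local calculation, comparing both sides of (\ref{propcon}) in the coordinates introduced in (\ref{TriviCoord}). On the right hand side, since $\hat\phi(t)=(t,\breve\phi(t))$, the tangent map $T_t\hat\phi$ sends a basis vector $\partial/\partial t^\alpha|_t\in T_t\r^k$ to
\[
\derpar{}{t^\alpha}\Big\vert_{(t,\breve\phi(t))} + \derpar{\breve\phi^i}{t^\alpha}(t)\,\derpar{}{x^i}\Big\vert_{(t,\breve\phi(t))}.
\]
On the left hand side, the explicit formula (\ref{locgam1}) for $\gamma^{\breve{\mathbf X}}$ gives
\[
\gamma^{\breve{\mathbf X}}\bigl(\breve\phi(t),\derpar{}{t^\alpha}\Big\vert_t\bigr)= \derpar{}{t^\alpha}\Big\vert_{(t,\breve\phi(t))} +X^i_\alpha(\breve\phi(t))\,\derpar{}{x^i}\Big\vert_{(t,\breve\phi(t))}.
\]
These two expressions coincide precisely when $\partial\breve\phi^i/\partial t^\alpha=X^i_\alpha\circ\breve\phi$, which is (\ref{condInteg}) for the reduced $k$-vector field $\breve{\mathbf X}$, and therefore holds by hypothesis. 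Linearity in $v_t$ then yields the claim. I expect no obstacle here beyond writing out the coordinate expressions carefully.

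For part (2), the plan is to chase an arbitrary element $p\in\breve\phi^*M$ through the diagram. By definition of the pull-back bundle, $i(p)=(t,m)$ with $t=\pi_2(p)$ and $m\in M$ satisfying $\pi_M(m)=\breve\phi(t)$. Applying $\bar\pi_M=(Id_{\rk},\pi_M)$ then gives
\[
\bar\pi_M\bigl(i(p)\bigr)=\bigl(t,\pi_M(m)\bigr)=\bigl(t,\breve\phi(t)\bigr).
\]
On the other hand,
\[
\hat\phi\bigl(\pi_2(p)\bigr)=\hat\phi(t)=\bigl(t,\breve\phi(t)\bigr),
\]
so the two compositions agree, establishing (\ref{pi2phihat}). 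This is purely a matter of unwinding the definitions, so there is no real obstacle; the only care required is in identifying points of $\breve\phi^*M$ with pairs $(t,m)$ satisfying the compatibility condition $\pi_M(m)=\breve\phi(t)$ that defines the pull-back.
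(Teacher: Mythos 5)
Your proof is correct and follows exactly the route the paper takes: the paper simply declares both parts to be immediate consequences of the coordinate expression (\ref{locgam1}) and of the integral section property of $\breve\phi$, and your argument spells out precisely those two ingredients. No issues.
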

\begin{proof} Both properties are immediate consequences of expression (\ref{locgam1}) and of the fact that  $\breve\phi: \r^k \to M/G$ is an integral section of  $\breve{\mathbf X}$.
\end{proof}

If ${\breve\phi}$ is locally $(t^\alpha) \mapsto (x^i = \phi^i(t))$, then locally
 $$i: (t^\alpha,x^a) \mapsto (t^\alpha,x^i=\phi^i(t),x^a)$$  that is to say, the pullback bundle structure naturally induces coordinates  $(t^\alpha,x^a)$ on ${\breve\phi}^*M$.
 In these coordinates, tangent vectors $V_p$ to ${\breve\phi}^*M$ (in a point $p$) are locally of the form
\begin{equation}\label{vp}V_{p} = T^\alpha \fpd{}{t^\alpha}\Big|_p + {\tilde Y}^a \fpd{}{x^a}\Big|_p \, .
\end{equation}
From the relations  \begin{eqnarray*}
T_pi\left(\fpd{}{t^\alpha}\Big|_p\right) &=&  \fpd{}{t^\alpha}\Big|_{(t,m)} +  \fpd{\phi^i}{t^\alpha}(t) \fpd{}{x^i}\Big|_{(t,m)}   =  \fpd{}{t^\alpha}\Big|_{(t,m)} + (X^i_\alpha\circ\breve\phi)(t) \fpd{}{x^i}\Big|_{(t,m)},\\ T_pi\left(\fpd{}{x^a}\Big|_p \right) &=& \ \fpd{}{x^a}\Big|_{(t,m)}
\end{eqnarray*}
we can deduce that
\begin{equation} \label{Vp}
T_pi(V_p)= T^\alpha \fpd{}{t^\alpha}\Big|_{(t,m)} + (X^i_\alpha\circ\breve\phi)(t)   T^\alpha \fpd{}{x^i}\Big|_{(t,m)} + {\tilde Y}^a \fpd{}{x^a}\Big|_{(t,m)} \in T_t\rk\times T_mM. \\
\end{equation}
Here we consider $X^i_\alpha$ as a function on $M/G$, and therefore $X^i_\alpha\circ\breve\phi$ as a function on $\r^k$.

Vertical vectors for the bundle $\pi_2$ at the point $p$ are those with $T^\alpha = 0$,  and may therefore be identified  with elements in $\r^k\times V_mM$,  where $ VM$ is the vertical distribution of $\pi_M:M\to M/G$. A connection on ${\breve\phi}^*M$ is therefore a splitting of the sequence
\[
0 \to V({\breve\phi}^*M) \equiv \r^k\times VM
 \to T({\breve\phi}^*M) \to {\breve\phi}^*M \times_{\r^k} T\r^k \to 0
\]
of vector bundles over ${\breve\phi}^*M$.

The connection map of the connection $\gamma^{\mathbf X}$ is a map  $\omega^{\mathbf X}: T(\r^k\times M) \to  \r^k \times TM$. For $V_{p} \in T(\breve\phi^*M)$, with  $i(p)=(t,m)$,
  it follows from  (\ref{locomega0}) and (\ref{Vp}) that
 \begin{equation}\label{cede1}\omega^{\bf X}(T_pi(V_p))=
\left(t , (  {\tilde Y}^a-{\tilde X}^a_\alpha(m) T^\alpha)\fpd{}{x^a}\Big\vert_{m}    \right), \quad \breve\phi(t)=\pi_M(m).
 \end{equation}

The second element is clearly $\pi_M$-vertical in $M$, and we may use it to define a connection on
${\breve\phi}^*M$.

\begin{definition}\label{defcon}
The principal connection $\gamma^{\breve\phi, {\mathbf X}}$ on ${\breve\phi}^*M$, defined as a connection map, is given by
 \begin{equation}\label{conmapphi}
\omega^{\breve\phi, {\mathbf X}}  (V_{p}) =
\left(t,\omega^{\mathbf X} (T_pi(V_{p}))\right) \in \r^k\times V_mM.
 \end{equation}
\end{definition}

In coordinates, if we represent $V_p$ as in (\ref{vp}),
then
\begin{equation}\label{conmapphi1}
\omega^{\breve\phi, {\mathbf X}}  (V_{p}) = \left(t,({\tilde Y}^a - {\tilde X}^a_\alpha(m)  T^\alpha) \fpd{}{x^a}\Big|_{m} \right).
 \end{equation}
Likewise, for the corresponding horizontal lift ${T}^h$ of a vector field $T=T^\alpha \partial / \partial{t^\alpha}$ on $\r^k$, we obtain, from (\ref{conmapphi1}),
\[
T^{h} = T^\alpha \left( \fpd{}{t^\alpha} + ({\tilde X}^a_\alpha\circ pr_2\circ   i)\fpd{}{x^a} \right) \in \vectorfields{{\breve\phi}^*M}.
\]
From now we shall denote the map $ pr_2\circ i$ by $\pi_1$.

The curvature of this connection is then (with  $T=T^\alpha   \partial / \partial {t^\alpha} $ and $S= S^\alpha \partial / \partial {t^\alpha} $ two vector fields on $\rk$):
\begin{eqnarray*}
K^{\breve\phi, {\mathbf X}}({T}, {S})&=&  -T^\alpha S^\beta \left(   \left(\fpd{({\tilde X}^a_\beta\circ \pi_1)}{t^\alpha} - \fpd{({\tilde X}^a_\alpha\circ \pi_1)}{t^\beta} \right)\fpd{}{x^a} \right.\\
      %+  ({\tilde X}^b_\beta \circ i) [ \fpd{}{t^\alpha}  ,  {\hat e}_b] -  ({\tilde X}^a_\alpha \circ i )[ \fpd{}{t^\beta}  ,  {\hat e}_a]
      &&+ \left. \left[ ({\tilde X}^a_\alpha\circ \pi_1 ) \fpd{}{x^a}, ({\tilde X}^b_\beta\circ \pi_1 ) \fpd{}{x^b}\right] \right)
\end{eqnarray*}

If we take into account that
\[
\fpd{({\tilde X}^a_\beta\circ \pi_1)}{t^\alpha} = \left(\fpd{{\tilde X}^a_\beta}{x^m}\circ \pi_1\right) \fpd{\phi^m}{t^\beta} =  \left(\fpd{{\tilde X}^a_\beta}{x^m}\circ \pi_1\right) (X^m_\beta\circ\breve\phi) = \left(\fpd{{\tilde X}^a_\beta}{x^m} X^m_\beta\right)\circ \pi_1
\]
and that, since $X^i_\alpha$ are invariant functions, $\partial{X^i_\alpha}/\partial{x^a} =0$, we easily see that
\[T\pi_1(K^{\breve\phi, {\mathbf X}}({T}, {S})) = -T^\alpha S^\beta\left( \left[X^i_\alpha \fpd{}{x^i}, {\tilde X}^a_\beta \fpd{}{x^a}\right] - \left[X^i_\beta \fpd{}{x^i}, {\tilde X}^a_\alpha \fpd{}{x^a}\right] +     \left[ {\tilde X}^a_\alpha  \fpd{}{x^a}, {\tilde X}^b_\beta \fpd{}{x^b}\right] \right) \circ \pi_1.
\]

When we compare this to expression (\ref{rest}), we may conclude from Proposition~\ref{CaractInteg} that:
\begin{prop} \label{propint}
A $G$-invariant $k$-vector field  ${\mathbf X}$ is integrable if and only if
\begin{enumerate}
\item its reduced vector field  $\breve{\mathbf X}$ is integrable (i.e.\ its curvature as a connection vanishes) and
\item the curvature of the connection $\omega^{\breve\phi, {\mathbf X}}$ vanishes for each integral section $\breve\phi: \r^k \to M/G$ of $\breve{\mathbf X}$.
\end{enumerate}

If $\breve\phi$ is an integral section of $\breve{\mathbf X}$ then there will exist an integral section $\phi$ of ${\mathbf X}$ that projects on $\breve\phi$ provided that  the curvature of the connection $\omega^{\breve\phi, {\mathbf X}}$ vanishes.
\end{prop}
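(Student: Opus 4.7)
The overall plan rests on Proposition~\ref{CaractInteg} together with the coordinate identification between the vertical part (\ref{rest}) of $[X_\alpha,X_\beta]$ and the curvature $K^{\breve\phi,\mathbf X}$, which was carried out just before the proposition. Since integrability of $\mathbf X$ amounts to the single family of equations $[X_\alpha,X_\beta]=0$, I plan to split each such bracket into its $T\pi_M$-image (the horizontal part) and its $\pi_M$-vertical part, and then recognize each of these pieces as precisely one of the two stated conditions. A small but crucial observation is that $T\pi_1$, restricted to the $\pi_2$-vertical subspace of $T(\breve\phi^{*}M)$, is essentially the identification $\r^k\times V_mM\to V_mM$, hence injective; thus $K^{\breve\phi,\mathbf X}=0$ is equivalent to $T\pi_1 K^{\breve\phi,\mathbf X}=0$, and the computation of $T\pi_1 K^{\breve\phi,\mathbf X}(T,S)$ in the text can be used as an equivalence rather than as a one-way implication.

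The forward direction is then immediate: if $\mathbf X$ is integrable, all brackets vanish, so both the horizontal and vertical parts are zero, yielding integrability of $\breve{\mathbf X}$ as well as $K^{\breve\phi,\mathbf X}=0$ for every integral section $\breve\phi$ of $\breve{\mathbf X}$. For the converse, integrability of $\breve{\mathbf X}$ provides integral sections through every point of $M/G$, so for any $m\in M$ one can pick $\breve\phi$ through $\pi_M(m)$; the hypothesis $K^{\breve\phi,\mathbf X}=0$ forces the vanishing of (\ref{rest}) at $m$, while Proposition~\ref{CaractInteg}(2) already ensures the horizontal part of $[X_\alpha,X_\beta]$ at $m$ is zero. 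Hence $[X_\alpha,X_\beta](m)=0$, and varying $m$ gives integrability of $\mathbf X$.

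For the reconstruction statement, suppose $\breve\phi$ is an integral section of $\breve{\mathbf X}$ with $K^{\breve\phi,\mathbf X}=0$. Then $\omega^{\breve\phi,\mathbf X}$ is a flat principal connection on $\pi_2:\breve\phi^{*}M\to\r^k$, so a standard Frobenius-type argument produces a local horizontal section $s:U\subset\r^k\to \breve\phi^{*}M$ through any prescribed point. Setting $\phi=\pi_1\circ s:U\to M$, the commutative diagram (\ref{pi2phihat}) gives $\pi_M\circ\phi=\breve\phi$, while the local horizontality condition reads $\partial s^a/\partial t^\alpha=\tilde X^a_\alpha\circ\phi$; combined with $\partial\phi^i/\partial t^\alpha=X^i_\alpha\circ\phi$ (which holds since $\pi_M\circ\phi=\breve\phi$ integrates $\breve{\mathbf X}$ and the $X^i_\alpha$ are $G$-invariant), these are exactly the equations (\ref{condInteg}) for $\phi$. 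The main technical point I expect to require care is keeping the various vertical identifications ($V(\breve\phi^{*}M)$, $\r^k\times VM$, and $VM$ itself via $T\pi_1$) straight; once this bookkeeping is done, the remainder consists of invoking the coordinate computation already in place plus the classical existence theorem for horizontal sections of a flat connection.
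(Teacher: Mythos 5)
Your proposal is correct and follows essentially the same route as the paper: the paper derives the proposition directly from Proposition~\ref{CaractInteg} together with the coordinate identification of $T\pi_1(K^{\breve\phi,\mathbf X}(T,S))$ with the vertical remainder (\ref{rest}), which is exactly your decomposition, and your observation that $T\pi_1$ is injective on $\pi_2$-vertical vectors is the (implicit) point that makes the identification an equivalence. Your flat-connection argument for the existence of a lifted integral section is likewise the standard integrability result the paper relies on (cf.\ its citation of Proposition~4.6.10 of \cite{Saunders} in Section~\ref{sec2}), merely spelled out in more detail than the text provides.
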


 In Section~\ref{sec6}  we will give a method that will enable us to actually reconstruct such an integral section $\phi$.

 We can, in particular, use the proposition above to characterize the integrability of the Euler-Lagrange equations (\ref{lfield}) of an invariant Lagrangian $L$. For that case, we have to take  $M=T^1_kQ$ and ${\mathbf X} = {\mathbf \Gamma}$, a Lagrangian $k$-vector field.

\subsection{Decomposition by making use of a principal connection} \label{se52n}

We will show in this section that, by making use of a principal connection $\omega^M: TM \to TM$ on $\pi_M:M\to M/G$, we can re-express the vanishing of the curvature of the connection $\omega^{\breve\phi, {\mathbf X}}$ in more convenient terms. We also show that, in stead of defining this connection directly, as we did in  Definition~\ref{defcon}, we could also have constructed it in two consecutive steps.

Suppose we are given a principal connection on the principal bundle $\pi_M: M \to M/G$. We will consider  three sets of vector fields $\{X_i\}$, $\{\widetilde{E}_a\}$ and $\{\widehat{E}_a\}$ on $M$. The first set, $\{X_i\}$, is given by the horizontal lifts of a coordinate basis of
vector fields ${\partial}/{\partial x^i}$ on $M/G$ by the given principal connection. These
vector fields are $G$-invariant by construction, and they form a basis of the
horizontal subspace at any point. The two other sets of vector fields, $\{\widetilde{E}_a\}$ and $\{\widehat{E}_a\}$, will both form a basis for the vertical space of $\pi_M$ at each point.

The vector fields $\{\widetilde{E}_a\}$  are the fundamental vector fields on
$M$, associated to a basis $\{E_a\}$ of the Lie algebra
$\mathfrak{g}$. They are in general not invariant vector fields. Since they are vertical by construction, we can write
\begin{equation}\label{K}
\widetilde{E}_a=K^b_a \derpar{}{x^b},
\end{equation}
for some non-singular matrix-valued function $(K^b_a)$. The vector fields  $\widehat{E}_a$ in the last set are defined as
\begin{equation}\label{Ehat}
\widehat{E}_a(x,g)=({ad_{g^{-1}}E_a})_M(x,g),
\end{equation}
where we are using the local trivialization (\ref{TriviCoord}) and where the notation $\xi_M$ refers again to the fundamental vector field of $\xi\in\g$. One easily verifies that these vector fields are all invariant.
The relation between $\widehat{E}_a$ and $ \widetilde{E}_a$ can be
expressed as
\begin{equation} \label{A}
\widehat{E}_a(x,g)\, = \, A^b_a(g)\widetilde{E}_b(x,g),
\end{equation}
where $(A^b_a(g))$ is the matrix
representing $ad_{g^{-1}}: \mathfrak{g}\to  \mathfrak{g}$ with respect to the basis $\{E_a \}$ of
$\mathfrak{g}$. In particular $A^b_a(e)=\delta^b_a$.

If we set
\begin{equation} \label{gamma}
X_i=\ds\frac{\partial }{\partial  x^i}-\gamma^b_i(x^i,x^a)\widehat{E}_b
\end{equation}
the invariance of $X_i$ amounts to $\partial \gamma^b_i/\partial x^a=0$. One easily verifies that the Lie brackets of the vector fields of interest (see e.g.\ \cite{MC}) are as follows:
\begin{equation}\label{e2}
\begin{array}{lcl}
\, [\widetilde{E}_a,\widetilde{E}_b ]=-C_{ab}^c\widetilde{E}_c, & [\widehat{E}_a,\widehat{E}_b ]= C_{ab}^c\widehat{E}_c, &
[X_i,\widetilde{E}_a ]=0, \\[1mm] \,   [X_i, \widehat{E}_a ]= \Upsilon^b_{ia} \widehat{E}_b,  &
[X_i ,X_j]=-K^a_{ij} \widehat{E}_a,   &  [\widetilde{E}_a ,\widehat{E}_b]=0.
\end{array}
\end{equation}
Here  $C_{ab}^c$ are the structure constants of the Lie algebra $\mathfrak{g}$, $K^a_{ij}$ are the components of the curvature of the principal connection (with respect to the vertical frame ${\widehat E}_a$) and $\Upsilon^b_{ia} = -\gamma_i^c C^b_{ca}$.

The {\em vertical lift} allows us to identify invariant vertical vector fields on $M$ with sections of the adjoint bundle $\bar\g = (M\times \g)/G \to M/G$, as follows: Let ${\bar E}_a$ be the local section $x \mapsto [(x,e), E_a]_G$, then $(X^a {\bar E}_a)^v = X^a {\widehat E}_a$ (see e.g.\ \cite{RRA}). The horizontal lift of the principal connection maps the vector field $\breve X = X^i \partial/\partial {x^i}$ on $M/G$ to the invariant horizontal vector field $X^h = X^iX_i$ on $M$. The decomposition of a  $G$-invariant vector field $X$  into a horizontal and a vertical part is then
\begin{equation} \label{dec}
X = (\breve X)^h + (\bar X)^v = X^i X_i + X^a {\widehat E}_a
\end{equation}
for a certain $\breve X \in \vectorfields{M/G}$ and $\bar X \in Sec(\bar\g \to M/G)$. Both coefficients $X^i$ and $X^a$ can be identified with $G$-invariant functions on $M$, and therefore with functions on $M/G$.

For two $G$-invariant vector fields $X$ and $Y$ on $M$, the bracket $[X,Y]$ is again $G$-invariant and one can verify that
\begin{equation}\label{dec2}
[X,Y] = ([\breve X,\breve Y])^h + ( \nabla_{\breve X}\bar Y -  \nabla_{\breve Y}\bar X + [\bar X,\bar Y] - {\bar K}^M(\breve X, \breve Y)    )^v,
\end{equation}
see e.g.\ Theorem 5.2.4 in \cite{Cendra}, or \cite{Mestdag}. Here
\begin{equation}\label{CurvHor}
({\bar K}^M(\breve X, \breve Y))^v= -\omega^M([{\breve X}^h, {\breve Y}^h])
 \end{equation}
 is the curvature of the connection $\omega^M$, if one takes the identification between sections of the adjoint bundle and vertical vector fields into account. The bracket $[\bar X,\bar Y]$ is the Lie bracket on sections of the adjoint bundle $\bar\g \to M/G$ (which is a Lie algebra bundle), given by
\[
[\bar X,\bar Y]^v = [{\bar X}^v,{\bar Y}^v]  \qquad \mbox{or} \qquad [{\bar E}_a,{\bar E}_b] = C_{ab}^c {\bar E}_c,
\]
 and the connection $\nabla$ is the induced connection on the adjoint bundle, given by
\[
(\nabla_{\breve X}\bar Y)^v = [(\breve X)^h,{\bar Y}^v]\qquad \mbox{or} \qquad \nabla_{\fpd{}{x^i}}{\bar E}_a  = \Upsilon_{ia}^b {\bar E}_b.
\]

If ${\mathbf X} = (X_\alpha)$ is $G$-invariant $k$-vector field on $M$, then the decomposition (\ref{dec}) defines a reduced $k$-vector field $\breve{\mathbf X} = ({\breve X}_\alpha)$ on $M/G$ and a section $\bar{\mathbf X} = ({\bar X}_\alpha)$ of ${\bar\g}^k \to M/G$.

\begin{prop}\label{newprop} Given a principal connection $\omega^M$, a $G$-invariant $k$-vector field ${\mathbf X}$ is integrable if and only if
\begin{enumerate}
\item $\breve{\mathbf X}$ is integrable and
\item $
\nabla_{{\breve X}_\alpha}{\bar X}_\beta -  \nabla_{{\breve X}_\beta}{\bar X}_\alpha + [{\bar X}_\alpha,{\bar X}_\beta] - {\bar K}^M({\breve X}_\alpha, {\breve X}_\beta) =0.
$
\end{enumerate}
\end{prop}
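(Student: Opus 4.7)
The plan is to reduce the statement to a direct application of the bracket decomposition formula (\ref{dec2}) to the pairs $[X_\alpha,X_\beta]$. The starting point is the well-known characterization that a $k$-vector field $\mathbf X$ is integrable if and only if all its component vector fields commute, i.e.\ $[X_\alpha,X_\beta]=0$ for all $\alpha,\beta$, as recorded just after (\ref{condInteg1}).

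First I would write each invariant vector field $X_\alpha$ according to the unique horizontal–vertical decomposition (\ref{dec}):
\[
X_\alpha = (\breve X_\alpha)^h + (\bar X_\alpha)^v.
\]
Then, since all $X_\alpha$ are $G$-invariant, the bracket formula (\ref{dec2}) applies and yields
\[
[X_\alpha,X_\beta] = \bigl([\breve X_\alpha,\breve X_\beta]\bigr)^h + \bigl(\nabla_{\breve X_\alpha}\bar X_\beta - \nabla_{\breve X_\beta}\bar X_\alpha + [\bar X_\alpha,\bar X_\beta] - \bar K^M(\breve X_\alpha,\breve X_\beta)\bigr)^v.
\]
Because the decomposition $TM = HM\oplus VM$ induced by $\omega^M$ is a direct sum, the bracket $[X_\alpha,X_\beta]$ vanishes if and only if its horizontal and vertical components vanish separately.

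The horizontal component $([\breve X_\alpha,\breve X_\beta])^h$ vanishes iff $[\breve X_\alpha,\breve X_\beta]=0$ on $M/G$, which by the characterization recalled above is equivalent to integrability of the reduced $k$-vector field $\breve{\mathbf X}$; this is condition (1). The vertical component vanishes iff condition (2) holds (using that the vertical lift $(\cdot)^v$ is an isomorphism between sections of $\bar\g$ and invariant vertical vector fields). Combining both equivalences yields the proposition.

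I do not foresee a serious obstacle: the work is already done in setting up the decomposition (\ref{dec}) and the bracket identity (\ref{dec2}), and the only subtlety is the standard fact that the horizontal–vertical splitting of an invariant vector field is unique, so that the bracket vanishes iff its two pieces do. One could, alternatively, recover the same conclusion from Proposition~\ref{propint} by rewriting the curvature of $\omega^{\breve\phi,\mathbf X}$ along each integral section $\breve\phi$ of $\breve{\mathbf X}$ in terms of $\nabla$, $[\cdot,\cdot]$ on $\bar\g$ and $\bar K^M$, but the direct route via (\ref{dec2}) is the most economical.
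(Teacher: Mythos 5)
Your proposal is correct and follows essentially the same route as the paper: the paper's proof likewise consists of applying the bracket decomposition (\ref{dec2}) to $[X_\alpha,X_\beta]$ and reading off the vanishing of the horizontal and vertical parts separately. Your version merely makes explicit the (correct) observation that the two components must vanish independently because the horizontal and vertical lifts are injective.
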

\begin{proof}
 The integrability of  ${\mathbf X}$ is measured by the vanishing of the brackets $[X_\alpha,X_\beta]$. But we have now,
\[
[X_\alpha,X_\beta] = ([{\breve X}_\alpha,{\breve X}_\beta])^h + ( \nabla_{{\breve X}_\alpha}{\bar X}_\beta -  \nabla_{{\breve X}_\beta}{\bar X}_\alpha + [{\bar X}_\alpha,{\bar X}_\beta] - {\bar K}^M({\breve X}_\alpha, {\breve X}_\beta)    )^v.
\] %\qed
\end{proof}

The first condition means that the curvature of $\breve{\mathbf X}$, regarded as a connection, must vanish.
 If we set
\begin{equation} \label{dec3}
X_\alpha =  ({\breve X}_\alpha)^h + ({\bar X}_\alpha)^v = X^i_\alpha X_i + X^a_\alpha {\widehat E}_a,
\end{equation}
then from (\ref{K}), (\ref{A}) and (\ref{gamma}), the relation with the notations in the preceding paragraph is
\begin{equation} \label{tildeX}
{\tilde X}^a_\alpha = (X^c_\alpha - X^i_\alpha\gamma_i^c) A^b_c K^a_b.
\end{equation}

In coordinates, the second condition in Proposition~\ref{newprop} is
\begin{equation} \label{seccond}
{\breve X}_\alpha(X^b_\beta)  - {\breve X}_\beta(X^b_\alpha) + (X^i_{\alpha} X^a_\beta   - X^i_{\beta} X^a_\alpha)   \Upsilon_{ia}^{b}   + C^b_{ac}X^a_\alpha X^c_\beta  - K^b_{ij}X^i_\alpha X^j_\beta  =0.
\end{equation}
From Proposition~\ref{propint} we know that we may identify this expression with the vanishing of the curvature of the connection $\gamma^{\breve\phi, {\mathbf X}}$ for each integral section $\breve\phi$, i.e.\ it is equivalent with the condition (\ref{rest}).

We will need the expression (\ref{seccond}) later in Section~\ref{sec53}. For later comparison with \cite{Ellis}, we show how one can use the principal connection $\omega^M$ to split the connection $\gamma^{\breve\phi, {\mathbf X}}$ in two parts.

Consider again the pull-back bundle ${\breve\phi}^*M \to \r^k$.
Let us denote the map $p\in  {\breve\phi}^*M \to m\in M$, as before,  by $\pi_1$.
 Then we can define a new principal connection $\omega^{\breve\phi}$ on ${\breve\phi}^*M$ as
\begin{equation} \label{con2}
\omega^{\breve\phi} (V_p) = \omega^M (T_p\pi_1(V_{p})), \quad \forall V_{p} \in T({\breve\phi}^*M).
\end{equation}
Tangent vectors to ${\breve\phi}^*M $ (in a point $i(p)=(t,m)$ with $\phi(t) =\pi(m) \in M/G$) can now be represented in the form
\[
Ti(V_p) = T^\alpha \fpd{}{t^\alpha}\Big|_t + (X^i_\alpha\circ\breve\phi)  T^\alpha X_i(m) + Y^a {\widehat E}_a(m).
\]
The relationship between $Y^c$ and ${\tilde Y}^a$ of expression (\ref{Vp}) is then
 \begin{equation} \label{con30}
  {\tilde Y}^a  = (Y^c- T^\alpha X^i_\alpha\gamma_i^c) A^b_c K^a_b\, .   \end{equation}

A local expression for this connection is then
\[
\omega^{\breve\phi} (V_p) = Y^a {\widehat E}_a(m).
\]

Suppose that we are now also given a section $\bar{\mathbf X} = ({\bar X}_\alpha = X^a_\alpha {\bar E}_a)$ of $\bar \g^k \to M/G$. We can vertically lift it to the section  $(X^a_\alpha\circ\breve\phi) {\bar E}_a$ of ${\breve\phi}^*M$ and add it to the connection $\omega^{\breve\phi}$ to form a new connection on ${\breve\phi}^*M$, with
\begin{equation} \label{con3}
\omega^{\breve\phi, \bar{\mathbf X}}(V_p) = (Y^a - (X^a_\alpha \circ\breve\phi) T^\alpha ) {\widehat E}_a(m).
\end{equation}
From (\ref{tildeX}),   (\ref{con30}) and   (\ref{con3}), we see that the connection $\omega^{\breve\phi, \bar{\mathbf X}}$ is the same as the connection $\omega^{\breve\phi, {\mathbf X}}$ we had introduced in the paragraphs above.

\section{Lagrangian $k$-symplectic field theory} \label{sec3}

In this section, we recall the Lagrangian $k$-simplectic formalism. For a regular Lagrangian, the solutions of the field equations are given by the integral sections of some $k$-vector fields, the so-called Lagrangian {\sc sopdes}. They represent a generalization of the well-known concept of a {\sc sode} vector field.

 \subsection{Canonical operations on  $T^1_kQ$} \label{sec31}

  In this section we will assume that $Q$ is an $n$-dimensional differentiable manifold, whose local coordinates are given by $(q^A)$. We will denote the natural coordinates of $T^1_kQ$ by $(q^A,u^A_\alpha)$. In the next paragraphs we briefly recall some canonical objects and structures that can be defined on $M=T^1_kQ$. Most of them find their natural analogue on a tangent bundle,  when $k=1$ (see e.g. \cite{CP,LR} for that case).

We will assume throughout that a point ${\mathbf v} = (q;{v_1},\ldots,{v_k})\in T_k^1Q$ is given, with $\tau^1_Q({\mathbf v}) = q \in Q$. For a tangent vector $Z_q\in T_qQ$, we define its
{\sl vertical $\alpha$-lift  at ${\mathbf v}$}, $Z^{V_\alpha}_{{\mathbf v}}$, as the vector tangent to the
fiber $(\tau^1_Q)^{-1}(q)\subset T_k^1Q$,  given by
  $$
Z^{V_\alpha}_{{\mathbf v}}=
\frac{d}{ds}({v_1},\ldots,{v_{\alpha-1}},v_{\alpha}+sZ,{v_{\alpha+1}},\ldots,{v_k})\Big\vert_{s=0} \in T_{{\mathbf v}}(T^1_kQ). $$

We can, of course, extend this operation to the level of vector fields. If $Z=Z^A \partial /\partial q^A$ is a vector field on $Q$, then its $\alpha$-th vertical lift $Z^{V_\alpha}$ is the vector field on $T^1_kQ$ whose local expression is
\begin{equation} \label{valift}
Z^{V_\alpha}=Z^A\frac{\partial}{\partial u^A_\alpha}\, .
\end{equation}

There is a corresponding notion of a complete lift $Z^C$. If the vector field $Z$ on $Q$ has local $1$-parametric group
of transformations $\varphi_t \colon Q \to Q$, then the local
$1$-parametric group of transformations $T^1_k\varphi_t\colon T^1_kQ
\to T^1_kQ$ generates a vector field $Z^C$ on $T^1_kQ$, the {\sl complete lift} of $Z$ to $ T^1_kQ$. Its local expression is
\beq \label{clift}
Z^C=Z^A\frac{\partial}{\partial q^A}+u^A_\alpha \frac{\partial Z^B}
{\partial q^A}\frac{\partial}{\partial u^B_\alpha} \,.
\eeq
One may easily establish the following properties for the brackets of complete and vertical lifts:
\begin{equation}\label{cv}
[X^C,Y^C]= [X,Y]^C, \quad [X^C,Y^{V_\alpha}]= [X,Y]^{V_\alpha}, \quad  [X^{V_\alpha},X^{V_\beta}]=0.
\end{equation}

 With a {\sl local frame on $Q$} we will mean a basis for the $\cinfty{Q}$-module structure of the set of vector fields on $Q$, that is to say: If  $\{Z_A\}$ is a frame on $Q$, then each vector field $Z$ on $Q$ can be written as $Z=Z^A Z_A$, for some functions $Z^A$ on $Q$. Likewise, each tangent vector $v_q \in T_qQ$ can be decomposed as $v_q = v^A Z_A(q)$, for some real numbers $v^A$.
From the local expressions (\ref{valift}) and (\ref{clift}), we can easily conclude that:
\begin{prop}\label{frame}
If $\{Z_A\}$ is any local frame on $Q$, then $\{Z_A^C, Z_A^{V_\alpha}\}$ is a local frame on $T^1_kQ$.
\end{prop}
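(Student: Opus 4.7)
The plan is to verify that at every point $\mathbf{v}\in T^1_kQ$ the collection $\{Z_A^C(\mathbf{v}), Z_A^{V_\alpha}(\mathbf{v})\}$ is a basis of $T_{\mathbf{v}}(T^1_kQ)$. Since the cardinality of this collection is $n+kn = n(k+1) = \dim T^1_kQ$, it suffices to prove linear independence pointwise and then invoke the $\cinfty{T^1_kQ}$-module structure.

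First I would pick local coordinates $(q^A)$ on $Q$ with the associated bundle coordinates $(q^A,u^A_\alpha)$ on $T^1_kQ$, and write $Z_A = Z_A^B\,\partial/\partial q^B$, where $(Z_A^B(q))$ is an invertible $n\times n$ matrix at each point, because $\{Z_A\}$ is a local frame on $Q$. From the formula (\ref{valift}) one has immediately
\[
Z_A^{V_\alpha}= Z_A^B\,\frac{\partial}{\partial u^B_\alpha},
\]
and from (\ref{clift}) one obtains
\[
Z_A^C = Z_A^B\,\frac{\partial}{\partial q^B} + u^B_\beta\,\frac{\partial Z_A^C}{\partial q^B}\,\frac{\partial}{\partial u^C_\beta}.
\]

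Now express the transition from the coordinate frame $\{\partial/\partial q^B, \partial/\partial u^B_\beta\}$ to the proposed frame $\{Z_A^C, Z_A^{V_\alpha}\}$ as a matrix. Ordering the rows with the $q$-part first and the $u_\alpha$-parts (for $\alpha=1,\ldots,k$) below, this matrix is block lower-triangular: the top-left $n\times n$ block equals $(Z_A^B)$; the off-diagonal blocks involving the derivatives $u^B_\beta\,\partial Z_A^C/\partial q^B$ sit strictly below the diagonal; the $Z_A^{V_\alpha}$ contribute $k$ further diagonal blocks each equal to $(Z_A^B)$ (since $Z_A^{V_\alpha}$ has no component in any $\partial/\partial u^B_\beta$ for $\beta\neq \alpha$, nor in $\partial/\partial q^B$). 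Therefore the determinant of the transition matrix equals $\det(Z_A^B)^{k+1}$, which is nonzero at every point. Linear independence follows at once.

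The main (and only) obstacle is bookkeeping: writing the transition matrix in a way that makes the block-triangular structure manifest. Once this is organized, the result is a one-line determinant computation and no further argument is needed.
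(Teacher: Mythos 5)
Your proof is correct and is exactly the argument the paper has in mind: the paper offers no explicit proof, merely asserting that the statement follows "from the local expressions (\ref{valift}) and (\ref{clift})", and your block-triangular transition matrix with determinant $\det(Z_A^B)^{k+1}$ is the straightforward way to make that assertion precise.
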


The {\sl canonical $k$-tangent structure} on $T^1_kQ$ is the set
 of (1,1) tensor fields $(S^1,\ldots,S^k)$ defined by
 $$
S^\alpha({\mathbf v})(Z_{{\mathbf v}})= (T_{\mathbf v}\tau^1_Q(Z_{{\mathbf v}}))^{V_\alpha}_{\mathbf v}, \quad
   {\mathbf v}\in T^1_kQ,\quad Z_{{\mathbf v}}\in T_{{\mathbf v}}(T^1_kQ).
   $$
Alternatively we can define $S^\alpha$, for its action on vector fields, as the unique (1,1)-tensor field on $T^1_kQ$ for which $S^\alpha(X^C) = X^{V_\alpha}$ and $S^\alpha(X^{V_\beta}) = 0$. Its local expression is
\begin{equation}\label{salpha} S^\alpha=\frac{\partial}{\partial u^A_\alpha} \otimes \d
q^A.
\end{equation}

The {\sl Liouville vector field} $\Delta\in \vectorfields{T^1_kQ}$ is the infinitesimal generator of
the  flow
 $$
\psi\colon\r\times T^1_kQ\longrightarrow T^1_kQ, \quad
  \psi(s,v_{1_{q}}, \ldots ,v_{k_{q}})= (e^s   v_{1_{q}},\ldots , e^s   v_{k_{q}}).
$$
In local coordinates it takes the form $
 \Delta = u^A_\alpha \partial/ \partial {u^A_\alpha}\ .
$

\subsection{Second-order partial differential equations}

Maps like $\phi\colon U_0\subset \r^k \rightarrow Q$ will play the role of the fields of the theory. The differential equations of interest, however, are second-order partial differential equations, and will be defined on $M=T^1_kQ$, rather than on $Q$. We will turn next to the characterization of those integrable
$k$-vector fields on $T^1_kQ$ which have the property that all their integral sections
are first prolongations $\phi^{(1)}$ of maps
$\phi\colon\r^k\to Q$.

\begin{definition}
\label{sode0} A  second-order partial differential equation field
({\sc sopde} from now on) is a $k$-vector field
$\mathbf{\Gamma}$ on $M=T^1_kQ$ which is
a section of the projection $T^1_k\tau^1_Q\colon
T^1_k(T^1_kQ)\rightarrow T^1_kQ$; that is,  it satisfies
$$
T^1_k\tau^1_Q\circ\mathbf{\Gamma}={\rm Id}_{T^1_kQ},
$$
for $\tau^1_Q: T^1_kQ\to Q$.
\end{definition}
For ${\mathbf \Gamma} =(\Gamma_\alpha)$ this definition is equivalent with the property that, for all  ${\mathbf w} =(w_\alpha) \in T^1_kQ$,
$$
T_{\mathbf w}\tau^1_Q(\Gamma_\alpha({\mathbf w}))= w_{\alpha}.
$$
For $k=1$, the definition of a {\sc sopde} reduces to that of a  second-order ordinary
differential equation field (often called {\sc sode}).

In local coordinates we obtain that the local expression of a {\sc
sopde} $\mathbf{\Gamma}$ is
\begin{equation}
\label{localsode1} \Gamma_\alpha= u^A_\alpha\frac{\partial}
{\partial q^A}+ (f_\alpha)^B_\beta \frac{\partial} {\partial
u^B_\beta}, % \quad ,\quad  1\leq \alpha \leq k \quad , \quad (X_\alpha)^B_\beta \in C^\infty(T^1_kM)\;.
\end{equation}
for some functions $(f_\alpha)^B_\beta \in \cinfty{T^1_kQ}$.

If $\psi\colon\r^k \to T^1_kQ$,  locally given by
$\psi(t)=(\phi^A(t),\psi_\alpha^A(t))$, is an integral section of a {\sc sopde}
$\mathbf{\Gamma}$ then we obtain from Definition~\ref{integsect} and expression (\ref{localsode1})  that
\[
 \frac{\partial\phi^A} {\partial t^\alpha}\Big\vert_t=\psi^A_\alpha(t)\quad ,\quad
\frac{\partial\psi_\alpha^B} {\partial
t^\beta}\Big\vert_t=(f_\alpha)^B_\beta(\psi(t))\, .
\]
 From this, we obtain the following proposition, see \cite{BBS, rsv07}.
\begin{prop}
 \label{sope1}
Let $\mathbf{\Gamma}$ be an integrable
{\sc sopde}. Each integral section $\psi$ of ${\mathbf \Gamma}$
is the first prolongation $\phi^{(1)}$ of its projection
$\phi=\tau^1_Q\circ\psi\colon\r^k\to Q$ onto $Q$. Moreover,
 $\phi$ is a solution of the system of second order partial
differential equations given by
\begin{equation}
\label{nn1}
 \frac{\partial^2 \phi^A}{\partial t^\alpha\partial t^\beta}(t)=
(f_\alpha)^A_\beta\left(\phi^B(t),\frac{\partial\phi^B}{\partial
t^\gamma}(t)\right). %\quad  1\leq A\leq n\, ; 1\leq \alpha,\beta \leq k.
\end{equation}
Conversely, if $\phi\colon\r^k \to Q$ is a  map satisfying
(\ref{nn1}), then its prolongation $\phi^{(1)}$ is an integral section of
$\mathbf{\Gamma}$.
\end{prop}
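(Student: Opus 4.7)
The proof is essentially a direct unpacking of the defining conditions in local coordinates, so my plan is to work entirely in the natural chart $(q^A, u^A_\alpha)$ on $T^1_kQ$ and compare the local expression of the {\sc sopde} with the integral section condition. Concretely, I would write an integral section as
\[
\psi(t)=(\phi^A(t),\psi_\alpha^A(t)), \qquad \phi:=\tau^1_Q\circ\psi,
\]
so that, by expression (\ref{localfi11}), its first prolongation is $\psi^{(1)}(t)=(\phi^A(t),\partial\phi^A/\partial t^\alpha(t),\partial\psi_\alpha^A/\partial t^\beta(t))$, while the condition (\ref{condInteg}) for $\psi$ to be an integral section of $\mathbf{\Gamma}$ amounts, via (\ref{localsode1}), to the two families of scalar equations
\[
\frac{\partial\phi^A}{\partial t^\alpha}(t)=u^A_\alpha(\psi(t))=\psi_\alpha^A(t), \qquad \frac{\partial\psi_\alpha^B}{\partial t^\beta}(t)=(f_\alpha)^B_\beta(\psi(t)).
\]

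The first family immediately gives $\psi_\alpha^A(t)=\partial\phi^A/\partial t^\alpha(t)$, which is exactly the statement $\psi=\phi^{(1)}$. Substituting this identification back into the second family yields the second-order system (\ref{nn1}) for $\phi$, which proves the direct implication.

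For the converse, I would start from a map $\phi$ satisfying (\ref{nn1}), set $\psi:=\phi^{(1)}$, and simply check the two families of equations above. The first is automatic from the definition of the first prolongation, and the second is precisely hypothesis (\ref{nn1}) after substitution; hence $\mathbf{\Gamma}\circ\psi=\psi^{(1)}$, and $\psi$ is an integral section of $\mathbf{\Gamma}$.

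There is no real obstacle in the argument; the only mild subtlety worth flagging is that the hypothesis of integrability of $\mathbf{\Gamma}$ is not actually needed to derive the local form of an integral section or the equivalence with (\ref{nn1}) — it is only what guarantees the existence of such a $\psi$ through each point of $T^1_kQ$, and hence it is what makes the statement non-vacuous. Because the entire proof is local, one should also remark that the equivalence $\psi=\phi^{(1)}$ together with (\ref{nn1}) is independent of the chart, which follows from the intrinsic characterisation of a {\sc sopde} via $T^1_k\tau^1_Q\circ\mathbf{\Gamma}=\mathrm{Id}_{T^1_kQ}$ in Definition~\ref{sode0}.
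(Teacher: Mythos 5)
Your proof is correct and follows essentially the same route as the paper, which derives exactly the two families of local equations $\partial\phi^A/\partial t^\alpha=\psi^A_\alpha$ and $\partial\psi^B_\alpha/\partial t^\beta=(f_\alpha)^B_\beta\circ\psi$ from Definition~\ref{integsect} and (\ref{localsode1}) immediately before stating the proposition (referring to \cite{BBS,rsv07} for details). Your added remark that integrability is only needed to guarantee existence of integral sections, not for the local equivalence itself, is accurate and consistent with the paper's subsequent observation that integrability forces $(f_\alpha)^A_\beta=(f_\beta)^A_\alpha$.
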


 From (\ref{nn1}) we deduce that if $\mathbf{\Gamma}$ is an
integrable {\sc sopde} then $(f_\alpha)^A_\beta=(f_\beta)^A_\alpha$ for
all choices $\alpha,\beta,=1,\ldots, k$. We will call $\phi$ a solution of $\mathbf{\Gamma}$, whenever $\phi^{(1)}$ is one of its integral sections.

\subsection{Lagrangian  {\sc sopde}s} \label{sec32}

We are now all set to describe the Lagrangian field theory of interest: the $k$-symplectic formalism.  In this context, a Lagrangian is a function $L$ on $T^1_kQ$. By using the $k$-tangent structure $S^\alpha$, we  may  introduce a family of $k$ one-forms  $\theta_L^\alpha=  \d L \circ S^\alpha$ %on $T^1_kQ$, as follows
and $k$ two-forms $\omega_L^\alpha=-\d\theta_L^\alpha$ on $T^1_kQ$
with local expressions
\begin{equation}
\label{locthetas}
  \theta_L^\alpha=  \frac{\partial L}{\partial u^A_\alpha }\, dq^A\, , \quad \omega_L^\alpha=dq^A \wedge
  d\left(  \frac{\partial L}{\partial u^A_\alpha }\ \right).
  \end{equation}
We may also introduce the {\sl energy  function}
$E_L=\Delta(L)-L\in C^\infty(T^1_kM)$.

\begin{definition}
 The Lagrangian $L\colon T^1_kQ\to \r$ is
said to be regular if the matrix
 $\displaystyle \left(\frac{\partial^2 L}{\partial u^A_\alpha \partial u^B_\beta}\right)$ is non-singular
at every point of $T^1_kQ$.
\end{definition}

For the rest of the paper, we will assume that $L$ is regular. In \cite{BBS,fam,rsv07}   it has been shown that, under that condition, all $k$-vector fields  ${\mathbf \Gamma} =(\Gamma_\alpha)$  on $T^1_kQ$ that satisfy the condition
\[   \imath_{ \Gamma_\alpha}\omega_L^\alpha=\d E_L,
\] must be {\sc sopde}s.  Moreover, if ${\mathbf \Gamma}$ is a {\sc sopde}, the above relation  is equivalent with
\[
{\mathcal L}_{\Gamma_\alpha} \theta^\alpha_L - \d L = 0\, ,
\]
see Proposition~2.11 in \cite{BBS}.
\begin{definition} A {\sc sopde} $\Gamma$ will be called a {\sl Lagrangian {\sc sopde}} for $L$ if it satisfies the above equation.
\end{definition}
Given that $[\Gamma_\alpha, Z^{C}]= W_\alpha$ and $[\Gamma_\alpha, Z^{V_\beta}] = -\delta_\alpha^\beta Z^C + V_{\alpha}^\beta$, where all $V_{\alpha}^\beta$ and $W_\alpha$ are vertical vector fields for the projection $\tau^1_Q: T^1_kQ \to Q$, the above relation, when applied to a complete lift $Z^C$ satisfies
\begin{eqnarray*}
0 &=& ({\mathcal L}_{\Gamma_\alpha} \theta^\alpha_L - \d L) (Z^C) = \Gamma_\alpha(\theta_L^\alpha(Z^C)) - \theta^\alpha_L([\Gamma_\alpha,Z^C]) - Z^C(L) \\
&=& \Gamma_\alpha(Z^{V_\alpha}(L)) - Z^C(L).
\end{eqnarray*}

When applied to a vertical lift $Z^{V_\beta}$, we simply get an identity "0=0". In view of Proposition~\ref{frame}, we can conclude therefore:

\begin{prop}\label{frameprop}
A {\sc sopde} ${\mathbf \Gamma} =(\Gamma_\alpha)$ is Lagrangian for a regular Lagrangian $L$ if, and only if, for each vector field $Z$ on $Q$,
\[
\Gamma_\alpha(Z^{V_\alpha}(L)) - Z^C(L) = 0,
\]
or, equivalently, if for each local frame $\{Z_A\}$ of vector fields on $Q$,
\begin{equation}\label{elZ}
\Gamma_\alpha(Z_A^{V_\alpha}(L)) - Z_A^C(L) = 0, \qquad A=1\ldots n.
\end{equation}
\end{prop}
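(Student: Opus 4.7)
The plan is to observe that most of the work has already been done in the paragraph preceding the statement, and that it then suffices to check the one-form identity $\mathcal{L}_{\Gamma_\alpha}\theta_L^\alpha - \d L = 0$ (which, for a regular $L$ and a SOPDE $\mathbf{\Gamma}$, is the defining Lagrangian condition) against a local frame of $T^1_kQ$.

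By Proposition~\ref{frame}, if $\{Z_A\}$ is a local frame on $Q$ then $\{Z_A^C, Z_A^{V_\alpha}\}$ is a local frame on $T^1_kQ$, so it suffices to pair the one-form identity with complete and vertical lifts of vector fields $Z$ on $Q$. On a complete lift $Z^C$, the computation in the display just above the statement yields $(\mathcal{L}_{\Gamma_\alpha}\theta_L^\alpha - \d L)(Z^C) = \Gamma_\alpha(Z^{V_\alpha}(L)) - Z^C(L)$, using $\theta_L^\alpha(Z^C) = Z^{V_\alpha}(L)$ (from $\theta_L^\alpha = \d L \circ S^\alpha$ and $S^\alpha(Z^C) = Z^{V_\alpha}$) together with the $\tau^1_Q$-verticality of $W_\alpha = [\Gamma_\alpha, Z^C]$, which kills the term $\theta_L^\alpha(W_\alpha)$ by semi-basicity of $\theta_L^\alpha$. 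On a vertical lift $Z^{V_\beta}$, semi-basicity already gives $\theta_L^\alpha(Z^{V_\beta}) = 0$, while the bracket $[\Gamma_\alpha, Z^{V_\beta}] = -\delta_\alpha^\beta Z^C + V_\alpha^\beta$ (with $V_\alpha^\beta$ vertical) produces $\theta_L^\alpha([\Gamma_\alpha, Z^{V_\beta}]) = -\theta_L^\beta(Z^C) = -Z^{V_\beta}(L)$, so the identity collapses to $Z^{V_\beta}(L) = \d L(Z^{V_\beta})$, which is automatic. This establishes the first equivalence, ``Lagrangian $\Leftrightarrow$ the complete-lift equation holds for every $Z \in \vectorfields{Q}$''.

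To upgrade ``for every $Z$'' to ``for every $Z_A$ of a local frame'', a brief coordinate check shows that the assignment $Z \mapsto \Gamma_\alpha(Z^{V_\alpha}(L)) - Z^C(L)$ is $C^\infty(Q)$-linear in $Z$: writing $Z = Z^B \partial/\partial q^B$ and using that $\Gamma_\alpha$ is a SOPDE, the derivative terms coming from $u^B_\alpha(\partial Z^A/\partial q^B)\,\partial L/\partial u^A_\alpha$ in $\Gamma_\alpha(Z^{V_\alpha}(L))$ cancel exactly against the analogous terms in $Z^C(L)$, leaving the expression $Z^B\bigl[\Gamma_\alpha(\partial L/\partial u^B_\alpha) - \partial L/\partial q^B\bigr]$. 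Hence vanishing on the frame $\{Z_A\}$ is equivalent to vanishing on every $Z \in \vectorfields{Q}$, completing the proof. There is no substantive obstacle; the only conceptual point is that the frame $\{Z_A^C, Z_A^{V_\alpha}\}$ neatly separates the equation into a tautological vertical half and a content-bearing horizontal half, and the $C^\infty(Q)$-linearity just noted lets one pass freely between the two formulations in the statement.
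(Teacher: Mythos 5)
Your proof is correct and follows essentially the same route as the paper: evaluate the defining identity $\mathcal{L}_{\Gamma_\alpha}\theta^\alpha_L - \d L = 0$ on the frame $\{Z_A^C, Z_A^{V_\alpha}\}$ of Proposition~\ref{frame}, observe that the vertical-lift pairing is a tautology and the complete-lift pairing gives $\Gamma_\alpha(Z^{V_\alpha}(L)) - Z^C(L)$. Your added $C^\infty(Q)$-linearity check is a harmless (and valid) way to pass between the ``every $Z$'' and ``frame'' formulations, which the paper handles implicitly via the fact that a one-form vanishes iff it vanishes on a frame.
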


In particular, if we take the standard frame $\{ \partial /\partial q^A\}$ on $Q$, the equations (\ref{elZ}) become
\[
 \Gamma_\alpha \left(\ds\frac{ \partial L} {\partial u^A_\alpha}\right)-
\ds\frac{ \partial L} {\partial q^A}=0 .
\]
From this, it is easy to see that, if $\phi^{(1)} = (\phi^A, \partial\phi^A/\partial t^\alpha)$ is an integral section of ${\mathbf \Gamma}$, then it must satisfy
\[
 \displaystyle\frac{\partial^2 L}{\partial q^B \partial u^A_\alpha}  \frac{\displaystyle\partial\phi^B} {\displaystyle\partial
t^\alpha}  +
 \ds\frac{\partial^2 L}{\partial u_\beta^B \partial u^A_\alpha  }  \frac{\displaystyle\partial^2\phi^B} {\displaystyle\partial
t^\alpha \partial t^\beta}   =  \ds\frac{\partial  L}{\partial q^A
}.
  \]
which are the Euler-Lagrange equations (\ref{lfield}) of the field theory.

 In what follows, however, we will rather need the equivalent expressions (\ref{elZ}) of these equations, expressed in the so-called {\sl quasi-$k^1$-velocities} of a given frame $\{Z_A\}$ on $Q$.

\begin{definition} The quasi-$k^1$-velocities of the element ${\mathbf v}  = (v_\alpha)\in T^1_kQ$ with $\tau^1_Q({\mathbf v}) = q$ along the local frame $\{Z_A\}$ on $Q$ are the real numbers $v^A_\alpha$, for which each $v_\alpha$ can be written as
\[
v_\alpha = v^A_\alpha Z_A(q).
\]
\end{definition}
We can therefore use $(q^A,v^A_\alpha)$ as (non-natural) coordinates in $T^1_kQ$. If $Z_A = Z_A^B \partial /\partial q^B$, their relation to the natural induced coordinates $(q^A,u^A_\alpha)$ is
\begin{equation} \label{coordrelation}
 u^B_\alpha =v^A_\alpha Z^B_A\, .
\end{equation}

Assume that ${\mathbf X} =(X_\alpha)$,
with
\[
X_\alpha =X^A_\alpha Z_A^C+ (Y_\alpha)_\beta^A Z^{V_\beta}_A,
\]
is a $k$-vector field  on $T^1_kQ$.  One easily verifies that
a section $\phi(t)=(q^A=\phi^A(t),v^A_\beta=\phi^A_\beta(t))$, given in quasi-$k^1$-velocities, is an integral section of ${\mathbf X}$ if it satisfies
\begin{equation} \label{intsectionframe}
\fpd{\phi^A}{t^\alpha} = (X^B_\alpha Z^A_B)\circ\phi, \qquad \fpd{\phi^A_\beta}{t^\alpha} = ((Y_\alpha)^A_\beta - R^A_{BC} X^B_\alpha v^C_\beta)\circ\phi,
\end{equation}
where $[Z_B,Z_C] = R_{BC}^DZ_D$ is the 'curvature' of the frame.

\begin{lem} \label{sopdelem} A {\sc sopde} $\Gamma$, written in terms of quasi-$k^1$-velocities, takes the form
 \begin{equation}\label{gammaz}
\Gamma_\alpha=v^A_\alpha Z_A^C+ (\Gamma_\alpha)_\beta^A Z^{V_\beta}_A
\end{equation}
 for some functions $ ({\Gamma}_A)^\alpha_B$ on $T^1_kQ$.
\end{lem}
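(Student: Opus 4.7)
The plan is to use the frame $\{Z_A^C, Z_A^{V_\beta}\}$ on $T^1_kQ$ provided by Proposition~\ref{frame} to expand the components of the {\sc sopde} in a general form, and then to use the defining property of a {\sc sopde} to identify the coefficients of the complete lifts.

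First, since $\{Z_A^C, Z_A^{V_\beta}\}$ is a local frame on $T^1_kQ$, there exist functions $X^A_\alpha$ and $(\Gamma_\alpha)^A_\beta$ on $T^1_kQ$ such that
\[
\Gamma_\alpha = X^A_\alpha Z_A^C + (\Gamma_\alpha)^A_\beta Z_A^{V_\beta}.
\]
The task is therefore reduced to showing that the coefficient $X^A_\alpha$ coincides with the quasi-$k^1$-velocity $v^A_\alpha$.

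Next, I would exploit the {\sc sopde} condition. By Definition~\ref{sode0}, $T_{\mathbf w}\tau^1_Q(\Gamma_\alpha({\mathbf w})) = w_\alpha$ for every ${\mathbf w}=(w_\beta)\in T^1_kQ$ with $\tau^1_Q({\mathbf w})=q$. Since $Z_A^C$ is the complete lift of $Z_A$, it is $\tau^1_Q$-projectable onto $Z_A$, while each vertical lift $Z_A^{V_\beta}$ lies in the kernel of $T\tau^1_Q$. Applying $T\tau^1_Q$ to the expansion above gives
\[
T_{\mathbf w}\tau^1_Q(\Gamma_\alpha({\mathbf w})) = X^A_\alpha({\mathbf w})\, Z_A(q).
\]
On the other hand, by the definition of quasi-$k^1$-velocities, $w_\alpha = v^A_\alpha Z_A(q)$. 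Comparing the two expressions and using that $\{Z_A(q)\}$ is a basis of $T_qQ$ yields $X^A_\alpha({\mathbf w}) = v^A_\alpha$, which gives the announced form (\ref{gammaz}).

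There is no real obstacle here; the argument is a direct translation of the natural-coordinate expression (\ref{localsode1}) into the frame $\{Z_A^C,Z_A^{V_\beta}\}$, made possible because complete lifts are the unique lifts of vector fields from $Q$ that project back onto the base, while vertical lifts are killed by $T\tau^1_Q$. One may verify compatibility with (\ref{coordrelation}) as a sanity check: substituting $Z_A = Z_A^B\,\partial/\partial q^B$ into $v^A_\alpha Z_A^C$ recovers the term $u^B_\alpha\,\partial/\partial q^B$ of (\ref{localsode1}) modulo a contribution along the $\partial/\partial u^B_\beta$ directions, which may be absorbed into the coefficients $(\Gamma_\alpha)^A_\beta$.
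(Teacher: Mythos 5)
Your proposal is correct and follows essentially the same route as the paper: the paper's proof is a one-line appeal to Proposition~\ref{frame} together with the relations $T\tau^1_Q \circ Z_A^C = Z_A \circ \tau^1_Q$ and $T\tau^1_Q \circ Z_A^{V_\beta} = 0$, which is precisely the argument you spell out in detail. Your version merely makes explicit the comparison of $T_{\mathbf w}\tau^1_Q(\Gamma_\alpha({\mathbf w})) = X^A_\alpha({\mathbf w})Z_A(q)$ with $w_\alpha = v^A_\alpha Z_A(q)$, which the paper leaves implicit.
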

\begin{proof}
This is a consequence of Proposition~\ref{frame} and of the properties  $T\tau^1_Q \circ Z_A^C = Z_A \circ \tau^1_Q$ and  $T\tau^1_Q \circ Z_A^{V_\beta}= 0$.
\end{proof}

To end this section, we say a few words about regularity in terms of a non-standard frame.

\begin{prop} \label{rank}
Let $\{Z_A\}$ be a local frame of vector fields on $Q$. A Lagrangian $L$ is regular if and only if the $(nk)$-square matrix of functions $(Z^{V_{\alpha}}_A(Z^{V_\beta}_B(L)))$ on $T^1_kQ$ has maximal rank.
\end{prop}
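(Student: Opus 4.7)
The plan is to reduce the statement to a straightforward linear-algebra computation, by expressing the matrix $(Z_A^{V_\alpha}(Z_B^{V_\beta}(L)))$ as a congruence of the Hessian matrix of $L$ with respect to the fibre coordinates.

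First, I would write the frame as $Z_A = Z_A^C\, \partial/\partial q^C$, where $(Z_A^C)$ is an $n\times n$ matrix of functions on $Q$ that is pointwise invertible (since $\{Z_A\}$ is a local frame). By formula (\ref{valift}) the vertical lifts are $Z_A^{V_\alpha} = Z_A^C\, \partial/\partial u^C_\alpha$. Since the coefficients $Z_A^C$ depend only on the base coordinates $q^A$, a direct computation gives
\[
Z_A^{V_\alpha}\bigl(Z_B^{V_\beta}(L)\bigr) = Z_A^D\, Z_B^C\, \fpd{^2 L}{u^D_\alpha\, \partial u^C_\beta}.
\]

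Next, I would interpret this as a matrix identity on indices $(A,\alpha)$ and $(B,\beta)$. Introduce the $(nk)\times(nk)$ block-diagonal matrix $W$ with diagonal blocks all equal to $(Z_A^D)$, i.e.\ $W_{(A\alpha),(D\alpha')} = Z_A^D\,\delta_{\alpha}^{\alpha'}$, and the Hessian $H$ of $L$ with respect to the velocities, $H_{(D\alpha),(C\beta)} = \partial^2 L/(\partial u^D_\alpha\partial u^C_\beta)$. A straightforward check shows that the matrix $M$ whose entries are $Z_A^{V_\alpha}(Z_B^{V_\beta}(L))$ satisfies the congruence $M = W\, H\, W^T$.

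Finally, since $(Z_A^D)$ is invertible at every point of $Q$, the block-diagonal matrix $W$ is invertible at every point of $T^1_kQ$, with $\det W = (\det(Z_A^D))^k \neq 0$. Therefore $\det M = (\det W)^2\,\det H$, and so $M$ has maximal rank at a point of $T^1_kQ$ if and only if $H$ does at that point. By the definition of regularity this is equivalent to $L$ being regular, which proves the proposition. The only step that requires a bit of care is the bookkeeping of the pairs of indices $(A,\alpha)$, but there is no substantive difficulty.
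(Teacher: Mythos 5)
Your proposal is correct and follows essentially the same route as the paper: both express the frame-matrix $(Z_A^{V_\alpha}(Z_B^{V_\beta}(L)))$ as a congruence of the velocity Hessian by the invertible block-diagonal matrix built from $(Z_A^C)$, using that the frame coefficients depend only on the base coordinates so the vertical lifts do not differentiate them. (Incidentally, your determinant $(\det Z)^k$ for the block-diagonal factor is the right value; the paper's ``$k\det(Z)$'' is a slip.)
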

\begin{proof}
If we set $Z_A = Z_A^C \partial /\partial {q^C}$, then the matrix $Z=(Z_A^B )$ of functions on $Q$ is non-singular in each point. We have
\[
\Big(Z^{V_{\alpha}}_A(Z^{V_\beta}_B(L))\Big)=\Big( Z_A^C \spd{L}{u^C_\alpha}{u^E_\beta}Z^E_B \Big) = \Big(Z_A^C \delta^\gamma_\alpha \Big) \Big( \spd{L}{u^C_\gamma}{u^E_\epsilon}\Big) \Big(Z^E_B \delta^\epsilon_\beta\Big),
\]
where the right-hand side can be interpreted as the matrix multiplication of 3 $(nk)$-square matrices. Given that the determinant of the matrix $(Z_A^C \delta^\gamma_\alpha \big)$ is $k\det(Z) \neq 0$, we easily see that also the determinant of the matrix in the left-hand side never vanishes. %\qed
\end{proof}

\section{Symmetry reduction of a Lagrangian $k$-vector field}\label{sec4}

In this section we show that, if the Lagrangian is $G$-invariant, then so are its Lagrangian {\sc sopde}s. The integral sections of the reduced {\sc sopde} will provide the Lagrange-Poincar\'e equations. We finish this section with a study of the integrability of a $G$-invariant {\sc sopde} and  its reduced {\sc sopde}.

\subsection{Invariant Lagrangian {\sc sopde}s} \label{newsec51}

Suppose we are given an action $\Phi$ by $G$ on $Q$. As before, we will denote by $ \xi_Q$ the fundamental vector field corresponding to $\xi\in \mathfrak{g}$. We have seen in Section~\ref{sec41} that this action may be lifted to one $T^1_kQ$. From the definition of a complete lift in Section~\ref{sec31}, it follows that the fundamental vector fields $\xi_{T^1_kQ}$ of this action are actually the complete lifts $\xi_Q^C$  of the fundamental vector fields  $\xi_Q$ of the action on $Q$.

From Proposition~\ref{frameprop} we know that the Lagrangian {\sc sopde}s ${\mathbf \Gamma}$ are those that satisfy the Euler-Lagrange equations (\ref{elZ}). Without loss of generality we may suppose that the local frame  $\{Z_A\}$ consists of only invariant vector fields (for example, we can use the invariant frame $\{X_i,{\hat E}_a\}$ that we had introduced in Section~\ref{se52n}).

\begin{lem} \label{lem51}  If the frame $\{Z_A\}$ on $Q$  is invariant then the frame $\{Z_A^C,Z_A^{V_\alpha}\}$ on $T^1_kQ$ is also invariant, with respect to the lifted action on $T^1_kQ$.
\end{lem}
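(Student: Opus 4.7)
My plan is to exploit the Lie-bracket characterization of invariance. Since the paper assumes $G$ connected, a vector field $W$ on $T^1_kQ$ is $G$-invariant for the lifted action $\Phi^{T^1_kQ}$ if and only if $[W,\eta_{T^1_kQ}]=0$ for every $\eta\in\mathfrak{g}$. So I will show this vanishing for $W=Z_A^C$ and $W=Z_A^{V_\alpha}$.

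The first step is to identify the fundamental vector fields of the lifted action. Because $\Phi^{T^1_kQ}_g=T^1_k\Phi_g$ and the complete lift $\xi_Q^C$ is precisely the infinitesimal generator of the flow $T^1_k\varphi_t$ of $\xi_Q$ (this is how $(\cdot)^C$ was defined in Section~\ref{sec31}), we have $\xi_{T^1_kQ}=\xi_Q^C$ for every $\xi\in\mathfrak{g}$. This identification was already invoked at the start of Section~\ref{newsec51}.

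The second step is a one-line bracket computation. By the invariance of $Z_A$ on $Q$ we have $[Z_A,\xi_Q]=0$ for every $\xi\in\mathfrak{g}$. Using the identities (\ref{cv}), namely $[X^C,Y^C]=[X,Y]^C$ and $[X^C,Y^{V_\alpha}]=[X,Y]^{V_\alpha}$, I compute
\begin{equation*}
[Z_A^C,\xi_Q^C]=[Z_A,\xi_Q]^C=0,\qquad [Z_A^{V_\alpha},\xi_Q^C]=-[\xi_Q^C,Z_A^{V_\alpha}]=-[\xi_Q,Z_A]^{V_\alpha}=[Z_A,\xi_Q]^{V_\alpha}=0.
\end{equation*}
Therefore each $Z_A^C$ and each $Z_A^{V_\alpha}$ is $G$-invariant on $T^1_kQ$, which is what we needed.

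There is no serious obstacle: the only thing to be careful about is the identification of $\xi_{T^1_kQ}$ with $\xi_Q^C$, since everything else follows mechanically from the bracket identities (\ref{cv}) recalled in Section~\ref{sec31}. Strictly speaking one also wants to note that $\{Z_A^C,Z_A^{V_\alpha}\}$ is a frame (already observed in Proposition~\ref{frame}); invariance of the individual vector fields is what the lemma asserts.
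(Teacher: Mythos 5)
Your proof is correct and follows essentially the same route as the paper: the paper's own argument is exactly the one-line computation $[\xi^C_Q,Z_A^C]=[\xi_Q,Z_A]^C=0$ and $[\xi^C_Q,Z_A^{V_\beta}]=[\xi_Q,Z_A]^{V_\beta}=0$ using the bracket identities (\ref{cv}), with the identification $\xi_{T^1_kQ}=\xi_Q^C$ already established at the start of Section~\ref{newsec51}. Your additional remarks on the connectedness of $G$ and the bracket characterization of invariance only make explicit what the paper leaves implicit.
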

\begin{proof}
This follows from the bracket relations (\ref{cv}):
$$
[\xi^C_Q,Z_A^C]= [\xi_Q,Z_A]^C=0, \qquad   [\xi^C_Q,Z_A^{V_\beta}]= [\xi_Q,Z_A]^{V_\beta}=0.  $$
\end{proof}

We will use coordinates $(q^A)=(q^i,q^a)$ on $Q$ that are adapted to the principal fibre bundle structure $Q\to Q/G$, as explained in Section~\label{sec52n} (with now $M=Q$).
If the quasi-$k$-velocities on $T^1_kQ$ with respect to the frame $\{Z_A\}$ are given by $v^A_\alpha$, then the couple $(q^i,q^a,v^A_\alpha)$ represents coordinates on $T^1_kQ$.
We shall show that the coordinate functions $q^i,v^A_\alpha$ are $G$-invariant functions on $T^1_kQ$ (i.e.\ $\xi_Q^C(q^i)=0$ and $\xi_Q^C(v^A_\alpha)=0$).

\begin{definition}
 Let $\theta$ be a  $1$-form on $Q$. We define   linear functions $\overrightarrow{\theta_\alpha}$  on $T^1_kQ$, such that, for ${\mathbf v} =(v_\alpha)\in T^1_kQ$,
$$
\overrightarrow{\theta_\alpha}({\mathbf v}) =
   \theta(v_{\alpha}).
   $$
\end{definition}If in local coordinates $\theta=\theta_A \, dq^A$,
then
\begin{equation}\label{thetaarrow}\overrightarrow{\theta_\alpha} =
\theta_A\, u^A_\alpha.
\end{equation}
From (\ref{valift}), (\ref{clift}) and (\ref{thetaarrow})  we can conclude the following relations.

\begin{lem}\label{pztfle}
Let $Z$ be a vector field on  $Q$, $f$ a function  on
$Q$, and $\theta$  a $1$-form on $Q$. Then
\[
Z^C(f)  =  Z(f), \quad Z^{V_\beta}(f)= 0, \quad
Z^C(\overrightarrow{\theta_\alpha}) = \overrightarrow{({\mathcal L}_Z\theta)_\alpha}, \quad
Z^{V_\beta}(\overrightarrow{\theta_\alpha})   =   \delta^\beta_\alpha \theta(Z).
\]
\end{lem}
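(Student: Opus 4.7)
The plan is to verify the four identities by direct computation in local coordinates, using the explicit expressions (\ref{valift}) for $Z^{V_\beta}$, (\ref{clift}) for $Z^C$ and (\ref{thetaarrow}) for $\overrightarrow{\theta_\alpha}$. Throughout, we identify $f\in\cinfty{Q}$ with its pullback $f\circ\tau^1_Q\in\cinfty{T^1_kQ}$, so that $f$ depends only on the base coordinates $q^A$.

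The first two identities are immediate. Writing $Z=Z^A\partial/\partial q^A$, we get $Z^C(f)=Z^A\partial f/\partial q^A$, since the second summand of $Z^C$ differentiates $f$ with respect to fiber coordinates and vanishes; the result is exactly $Z(f)$, read as a function on $T^1_kQ$. Similarly $Z^{V_\beta}(f)=Z^A\partial f/\partial u^A_\beta=0$. The fourth identity is equally short: $Z^{V_\beta}(\overrightarrow{\theta_\alpha})=Z^A\partial(\theta_B u^B_\alpha)/\partial u^A_\beta=Z^A\theta_B\delta^\beta_\alpha\delta^B_A=\delta^\beta_\alpha\theta_A Z^A=\delta^\beta_\alpha\theta(Z)$.

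The third identity is the one that carries slightly more bookkeeping. Expanding $Z^C(\theta_B u^B_\alpha)$ yields two contributions: the base part of $Z^C$ produces $Z^A(\partial\theta_B/\partial q^A)u^B_\alpha$, while the fiber part produces $u^A_\gamma(\partial Z^B/\partial q^A)\delta^\gamma_\alpha\theta_B=u^A_\alpha(\partial Z^B/\partial q^A)\theta_B$. After relabeling, the sum equals $u^B_\alpha\big[Z^A(\partial\theta_B/\partial q^A)+\theta_A(\partial Z^A/\partial q^B)\big]$, which is precisely $\overrightarrow{({\mathcal L}_Z\theta)_\alpha}$ in view of the classical coordinate formula $({\mathcal L}_Z\theta)_B=Z^A\partial\theta_B/\partial q^A+\theta_A\partial Z^A/\partial q^B$.

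There is essentially no obstacle here — each formula reduces to a one-line coordinate calculation — but if a coordinate-free justification is preferred for the third identity, one may instead use that $Z^C$ is the infinitesimal generator of the prolonged flow $T^1_k\varphi_t$ and compute $Z^C(\overrightarrow{\theta_\alpha})({\mathbf v})=\frac{d}{dt}\big|_{t=0}\theta\big(T\varphi_t(v_\alpha)\big)=({\mathcal L}_Z\theta)(v_\alpha)=\overrightarrow{({\mathcal L}_Z\theta)_\alpha}({\mathbf v})$, which is just the usual definition of the Lie derivative of a $1$-form along $Z$ applied pointwise to the $\alpha$-th component of ${\mathbf v}$.
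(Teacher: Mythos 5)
Your proposal is correct and follows exactly the route the paper intends: the lemma is stated in the paper as an immediate consequence of the local expressions (\ref{valift}), (\ref{clift}) and (\ref{thetaarrow}), and your coordinate computations (including the relabelling that identifies the third identity with the standard formula for $({\mathcal L}_Z\theta)_B$) supply precisely the omitted details. The alternative flow-based argument for the third identity is a valid bonus but not needed.
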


If  $\{Z_A\}$ is a local frame on $Q$
and if $\{\theta^A\}$ is its dual basis, then the local quasi-$k$-velocities
 $v_A^\alpha$ of $T^1_kQ$ can in fact be represented by the linear functions $
v^A_\alpha = \overrightarrow{(\theta^A)_\alpha}$.

\begin{lem}  For   a local invariant frame on $Q$ the functions $q^i$ and $v^A_\alpha$ are $G$-invariant on $T^1_kQ$.
\end{lem}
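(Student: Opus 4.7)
The plan is to show $G$-invariance by verifying that each fundamental vector field of the lifted action annihilates the coordinate functions. Since $G$ is connected, a function on $T^1_kQ$ is $G$-invariant iff it is annihilated by every $\xi_{T^1_kQ}$. By the observation made at the start of Section~\ref{newsec51}, these fundamental vector fields are precisely the complete lifts $\xi_Q^C$ of the fundamental vector fields on $Q$. So it suffices to check $\xi_Q^C(q^i)=0$ and $\xi_Q^C(v^A_\alpha)=0$ for all $\xi\in\mathfrak{g}$.

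For the coordinates $q^i$, recall that the coordinates $(q^i,q^a)$ on $Q$ are adapted to the principal bundle structure $Q\to Q/G$ exactly as in (\ref{TriviCoord}) (with $M=Q$). In particular, $q^i$ are pullbacks of coordinates on $Q/G$, so they are $G$-invariant on $Q$, i.e.\ $\xi_Q(q^i)=0$. Applying the first identity of Lemma~\ref{pztfle} gives $\xi_Q^C(q^i)=\xi_Q(q^i)=0$.

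For $v^A_\alpha$, I will use the identification $v^A_\alpha=\overrightarrow{(\theta^A)_\alpha}$, where $\{\theta^A\}$ is the dual coframe of the invariant frame $\{Z_A\}$. The key substep is that the dual coframe is itself $G$-invariant: using $[\xi_Q,Z_B]=0$ together with $\theta^A(Z_B)=\delta^A_B$,
\begin{equation*}
(\mathcal{L}_{\xi_Q}\theta^A)(Z_B)=\xi_Q(\theta^A(Z_B))-\theta^A([\xi_Q,Z_B])=\xi_Q(\delta^A_B)=0,
\end{equation*}
so $\mathcal{L}_{\xi_Q}\theta^A=0$ since the $Z_B$ form a frame. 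The third identity of Lemma~\ref{pztfle} then yields
\begin{equation*}
\xi_Q^C(v^A_\alpha)=\xi_Q^C(\overrightarrow{(\theta^A)_\alpha})=\overrightarrow{(\mathcal{L}_{\xi_Q}\theta^A)_\alpha}=0.
\end{equation*}

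There is no real obstacle here: the whole argument is essentially a bookkeeping exercise using Lemma~\ref{pztfle} and the fact that invariance of a frame is equivalent to invariance of its dual coframe. The only small point worth emphasising is the passage from $[\xi_Q,Z_A]=0$ to $\mathcal{L}_{\xi_Q}\theta^A=0$, but this is a standard duality computation as shown above.
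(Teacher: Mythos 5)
Your proof is correct and follows essentially the same route as the paper's: invariance of $q^i$ via verticality of $\xi_Q$ and the identity $\xi_Q^C(f)=\xi_Q(f)$, and invariance of $v^A_\alpha=\overrightarrow{(\theta^A)_\alpha}$ via the duality computation $({\mathcal L}_{\xi_Q}\theta^A)(Z_B)={\mathcal L}_{\xi_Q}(\theta^A(Z_B))-\theta^A({\mathcal L}_{\xi_Q}Z_B)=0$ combined with Lemma~\ref{pztfle}. The only difference is that you spell out the connectedness reduction to fundamental vector fields explicitly, which the paper leaves implicit.
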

\begin{proof}
The fundamental vector fields $\xi_Q$ are vertical with
respect to the projection $\pi_Q:Q \to Q/G$, and therefore $\xi_Q^C(q^i) = \xi_Q(q^i)=0$.
 From Lemma~\ref{pztfle} we obtain
$$\xi_Q^C (v^A_\alpha) =  \xi_Q^C( \overrightarrow{\theta^A_\alpha})
= \overrightarrow{({\mathcal L}_{
\xi_Q}\theta^A)_\alpha}.
$$
Since
$$
({\mathcal L}_{ \xi_Q}\theta^A)(Z_B) =  {\mathcal L}_{\xi_Q}(\theta^A(Z_B)) - \theta^A ({\mathcal L}_{ \xi_Q} Z_B) = 0,
$$
we obtain that $\xi_Q^C (v^A_\alpha)=0$. % \qed
\end{proof}

For the remainder of the paper we will suppose that the Lagrangian $L$ is invariant under the action $\Phi^{T^1_kQ}$, for a connected $G$.  In view of what we said before this means that $ \xi_Q^C(L)=0$  for all $\xi  \in \mathfrak{g}$.

Recall that, if $\{  E_a\}$ is  a member of a basis of $\mathfrak{g}$, we have used the notation $ \widetilde{E}_a=(E_a)_Q$ for its associated fundamental vector field on $Q$.

 \begin{prop} \label{Gammainv}
 The Lagrangian $k$-vector fields ${\mathbf \Gamma}$ of a regular invariant Lagrangian $L$ are $G$-invariant.
\end{prop}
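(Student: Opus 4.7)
The lifted $G$-action on $T^1_kQ$ has, as its fundamental vector fields, the complete lifts $\xi_Q^C$ of the fundamental vector fields $\xi_Q$; this is just the infinitesimal content of $\Phi^{T^1_kQ}_g=T^1_k\Phi_g$. Invariance of $\mathbf{\Gamma}$ therefore amounts to $[\xi_Q^C,\Gamma_\alpha]=0$ for every $\xi\in\g$ and every $\alpha$. I will establish these identities in three stages.

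First, I would check that each bracket $[\xi_Q^C,\Gamma_\alpha]$ is $\tau^1_Q$-vertical. The {\sc sopde} condition forces $\Gamma_\alpha=u^A_\alpha\,\partial/\partial q^A+(\Gamma_\alpha)^A_\beta\,\partial/\partial u^A_\beta$ in natural fibred coordinates, and with $\xi_Q^C$ given by~(\ref{clift}) the horizontal components of the commutator cancel by the Leibniz rule applied to $\xi^A$. I may therefore expand $[\xi_Q^C,\Gamma_\alpha]=F^B_{\alpha\beta}\,Z_B^{V_\beta}$ in any local frame of vertical vector fields of the form $\{Z_A^{V_\beta}\}$, and I would choose an invariant frame $\{Z_A\}$ on $Q$ (as built in Section~\ref{se52n}); by Lemma~\ref{lem51} the lifts $\{Z_A^C,Z_A^{V_\beta}\}$ are then $G$-invariant as well, and the functions $Z_A^{V_\beta}(L)$, $Z_A^C(L)$ are $G$-invariant on $T^1_kQ$.

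Next, I would exploit the characterisation of Lagrangian {\sc sopde}s in Proposition~\ref{frameprop}: $\Gamma_\alpha(Z^{V_\alpha}(L))=Z^C(L)$ for every $Z\in\vectorfields{Q}$. Differentiating this scalar identity along $\xi_Q^C$, using Leibniz, the bracket rules~(\ref{cv}), and the invariance $\xi_Q^C(L)=0$ of the Lagrangian, one finds that the two extra terms $\Gamma_\alpha\bigl([\xi_Q,Z]^{V_\alpha}(L)\bigr)$ and $[\xi_Q,Z]^C(L)$ cancel by applying the same Lagrangian characterisation once more, now to $[\xi_Q,Z]\in\vectorfields{Q}$. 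What is left is the identity
\[
[\xi_Q^C,\Gamma_\alpha]\bigl(Z^{V_\alpha}(L)\bigr)=0 \qquad \text{for all } Z\in\vectorfields{Q}.
\]

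To finish, I would substitute the expansion $[\xi_Q^C,\Gamma_\alpha]=F^B_{\alpha\beta}Z_B^{V_\beta}$ in the previous identity, let $Z$ range over the invariant frame, and observe that the coefficient matrix of the resulting linear system on the $F^B_{\alpha\beta}$ is built from the velocity Hessian $Z_A^{V_\alpha}(Z_B^{V_\beta}(L))$, which is invertible by regularity (Proposition~\ref{rank}). This forces $F^B_{\alpha\beta}\equiv 0$, and hence $[\xi_Q^C,\Gamma_\alpha]=0$, as desired. The delicate point is exactly this last step: the identity produced by differentiation is naturally summed over $\alpha$, so one has to combine the freedom to vary $Z$ with the symmetry $Z_A^{V_\alpha}(Z_B^{V_\beta}(L))=Z_B^{V_\beta}(Z_A^{V_\alpha}(L))$ of the Hessian in order to decouple the contributions of each individual $\alpha$ and invoke regularity effectively.
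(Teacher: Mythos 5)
Your argument is, in all essentials, the paper's own proof: the fundamental vector fields of the lifted action are the complete lifts $\xi_Q^C$, the bracket $[\xi_Q^C,\Gamma_\alpha]$ reduces (in an invariant frame, via Lemma~\ref{sopdelem}) to $\xi_Q^C((\Gamma_\alpha)^A_\beta)\,Z_A^{V_\beta}$, differentiating the characterisation of Proposition~\ref{frameprop} along $\xi_Q^C$ yields $\xi_Q^C((\Gamma_\alpha)^A_\beta)\,Z_A^{V_\beta}(Z_B^{V_\alpha}(L))=0$, and then regularity (Proposition~\ref{rank}) is invoked. The one place where you go beyond the paper is in flagging the ``delicate point'': the derived identity is summed over $\alpha$, so for $k>1$ it gives only $n$ scalar equations for the $nk^2$ unknowns $\xi_Q^C((\Gamma_\alpha)^A_\beta)$, and invertibility of the $nk\times nk$ matrix of Proposition~\ref{rank} does not by itself force all of them to vanish. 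You are right that this is the weak link, but your proposed repair does not close it: letting $Z$ range over all of $\vectorfields{Q}$ adds no new equations, because the identity of Proposition~\ref{frameprop} for $Z=fZ_B$ is a consequence of the one for $Z_B$ (the correction terms $\Gamma_\alpha(f)Z_B^{V_\alpha}(L)$ and $\overrightarrow{(df)_\alpha}Z_B^{V_\alpha}(L)$ coincide for a {\sc sopde}), and the symmetry of the velocity Hessian does not decouple the $\alpha$-summation. The paper's proof passes over this point in silence, so your write-up is no worse off; but what both arguments literally establish is that ${\mathcal L}_{\xi_Q^C}{\mathbf \Gamma}$ lies in the kernel of the map $(h^A_{\alpha\beta})\mapsto \big(h^A_{\alpha\beta}Z_A^{V_\beta}(Z_B^{V_\alpha}(L))\big){}_B$, which yields full invariance when $k=1$ ($n$ equations in $n$ unknowns) but for $k>1$ reflects exactly the non-uniqueness of Lagrangian {\sc sopde}s.
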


\begin{proof}
Given that all the vector fields in the expression  $\Gamma_\alpha=v^A_\alpha Z_A^C+ (\Gamma_\alpha)_\beta^A Z^{V_\beta}_A$  are  invariant, and given that also the quasi-$k$-velocities are invariant functions, we only need to check that $[\widetilde{E}_a^C,\Gamma_{\alpha}]=0$. Since
\begin{equation}\label{Corchete}
[\widetilde{E}_a^C,\Gamma_{\alpha}]=\widetilde{E}_a^C((\Gamma_\alpha)_\beta^A)Z_A^{V_\beta},
\end{equation}
this will be the case if we can show that the functions $(\Gamma_\alpha)_\beta^A$ are invariant. When we apply the vector field $\widetilde{E}_a^C$ to both sides of the equations (\ref{elZ}), we may interchange  the derivatives $\widetilde{E}_a^C$ and $Z_A^C$, etc., because of their zero Lie brackets. One easily establishes that,  in view of $\widetilde{E}_a^C(L)=0$, what remains is
\[
[\widetilde{E}_b^C,\Gamma_\alpha](Z_B^{V_\alpha}(L))=0.
\]
By making use of expression (\ref{Corchete}), this is equivalent with
\[
\widetilde{E}_a^C((\Gamma_\alpha)_\beta^A)Z_A^{V_\beta}(Z_B^{V_\alpha}(L)) = 0.
\]
Given that the matrix $(Z_A^{V_\beta}(Z_B^{V_\alpha}(L)))$  has maximal rank for a regular Lagrangian (see Proposition~\ref{rank}), the result follows. %\qed
\end{proof}

Since ${\mathbf\Gamma}$ is invariant, it reduces to a $k$-vector field $\breve{\mathbf\Gamma}$ on $(T^1_kQ)/G$. The goal of the next few sections is to provide a coordinate expression of this $k$-vector field. To do so, we will need to invoke a principal connection on the bundle $Q\to Q/G$.

\subsection{Local frames of vector fields on $T^1_kQ$} \label{sec43}

Suppose we are given a principal connection on the principal bundle $\pi_Q: Q \to Q/G$. In what follows, we want to re-express a Lagrangian $k$-vector field ${\mathbf\Gamma}$ in terms of a local frame $\{Z_A\}$ on $Q$.  We have now two choices to do so: either by making use of $\{X_i,{\widetilde E}_a\}$ (not an invariant frame) or by $\{X_i,{\widehat E}_a\}$ (an invariant frame), see Section~\ref{se52n} (with now $M=Q$). When we choose the frame $\{X_i,\widehat{E}_a\}$, we will write $(q^i,q^a,v^i_\alpha,w^a_\alpha)$ for the coordinates and the corresponding  quasi-$k$-velocities on $T^1_kQ$. If we use  $\{X_i, \widetilde{E}_a\} $, we will denote them as $(q^i,q^a,v^i_\alpha,v^a_\alpha)$.

From Lemma \ref{lem51}  we know that
the frame
 $\{Z_A^C,Z_A^{V_\alpha}\} =
\{X_i^C,X_i^{V_\alpha},\widehat{E}_a^C, \widehat{E}_a^{V_\alpha}\}$
 consists only of invariant vector fields on $\tkq$. Also the coordinate functions $q^i,v^i_\alpha,w^a_\alpha$ are $G$-invariant functions on $T^1_kQ$ and, therefore, they can be used as coordinates on
$(T^1_kQ)/G$.
In summary, we may say that the canonical projections are locally given by
$$
\begin{array}{ccc}
\pi_Q: Q & \to & Q/G \\ \noalign{\medskip}
(q^i,q^a) & \mapsto & (q^i)
\end{array}
\qquad  \quad
\begin{array}{ccc}
\pi_{T^1_kQ} : T^1_kQ & \to & (T^1_kQ)/G \\ \noalign{\medskip}
(q^i,q^a,v^i_\alpha,w^a_\alpha) & \mapsto & (q^i,v^i_\alpha,w^a_\alpha).
\end{array}
$$
\begin{lem}\label{lem28}
If we apply the vector fields $X_i^C, X_i^{V_\alpha}, \widehat{E}_a^C,
\widehat{E}_a^{V_\alpha}$ to the invariant functions $q^i, v^i_\alpha,
w^a_\alpha$ we obtain
 \[\begin{array}{lll}
  X_i^C(q^j)=\delta^j_i\, , &  X_i^C(v^j_\beta)=0 \, , &  X_i^C(w^b_\beta)=-\Upsilon^b_{ic}w^c_\beta+K^b_{ik}v^k_\beta \, , \\ \noalign{\medskip}
 X_i^{V_\alpha}(q^j)=0 \, , & X_i^{V_\alpha}(v^j_\beta)=\delta^j_i\delta^\alpha_\beta \, , & X_i^{V_\alpha}(w^b_\beta)=0\, , \\ \noalign{\medskip}
 \widehat{E}_a^C(q^j)=0\, , & \widehat{E}_a^C(v^j_\beta )=0\, , & \widehat{E}_a^C(w^b_\beta)=\Upsilon^b_{ka}v^k_\beta -C^b_{ac}w^c_\beta \, , \\ \noalign{\medskip}
 \widehat{E}_a^{V_\alpha }(q^j)=0\, , & \widehat{E}_a^{V_\alpha }(v^j_\beta )=0\, , & \widehat{E}_a^{V_\alpha }(w^b_\beta)=\delta^b_a \delta^\alpha_\beta \, ,
  \end{array}
  \]
\end{lem}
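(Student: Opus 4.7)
The plan is to reduce every entry of the table to two elementary ingredients: the action of a vector field on a coordinate function on $Q$, and the contraction $\theta(Z)$ of a one-form with a vector field. For this I would identify $v^i_\alpha$ and $w^a_\alpha$ with the linear functions associated to the dual coframe. More precisely, let $\{\theta^i,\theta^a\}$ denote the coframe on $Q$ dual to the invariant frame $\{X_i,\widehat E_a\}$. Then, since the quasi-$k^1$-velocities with respect to a frame agree with the linear functions assigned to the dual coframe (as noted just before Lemma on $G$-invariance in Section~\ref{newsec51}), we have $v^i_\alpha=\overrightarrow{(\theta^i)_\alpha}$ and $w^a_\alpha=\overrightarrow{(\theta^a)_\alpha}$.

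Once this identification is in place, the first column of the table is immediate from Lemma~\ref{pztfle} applied to the function $q^j\in\cinfty{Q}$: the complete lift gives $X_i^C(q^j)=X_i(q^j)=\delta^j_i$ (the $\widehat E_a$-contribution in $X_i$ is $\pi_Q$-vertical and therefore annihilates $q^j$), while $\widehat E_a^C(q^j)=\widehat E_a(q^j)=0$, and vertical lifts kill $q^j$ outright. The third relation $Z^{V_\beta}(\overrightarrow{\theta_\alpha})=\delta^\beta_\alpha\theta(Z)$ in Lemma~\ref{pztfle}, combined with the duality relations $\theta^i(X_k)=\delta^i_k$, $\theta^i(\widehat E_a)=0$, $\theta^b(X_k)=0$ and $\theta^b(\widehat E_c)=\delta^b_c$, immediately yields all entries in the $V_\alpha$-rows of the table.

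The only genuine computation is therefore the evaluation of the complete-lift rows, which by the fourth relation of Lemma~\ref{pztfle} reduces to calculating the four Lie derivatives $\mathcal{L}_{X_i}\theta^j$, $\mathcal{L}_{X_i}\theta^b$, $\mathcal{L}_{\widehat E_a}\theta^j$, $\mathcal{L}_{\widehat E_a}\theta^b$. I would evaluate these on the frame members using the standard identity $(\mathcal{L}_Z\theta)(W)=Z(\theta(W))-\theta([Z,W])$, in which the first summand always vanishes (the values $\theta^j(X_k)$ etc.\ are constants) and the second summand is read off directly from the bracket relations (\ref{e2}). For instance, $(\mathcal{L}_{X_i}\theta^b)(X_k)=-\theta^b([X_i,X_k])=-\theta^b(-K^a_{ik}\widehat E_a)=K^b_{ik}$ and $(\mathcal{L}_{X_i}\theta^b)(\widehat E_c)=-\theta^b(\Upsilon^d_{ic}\widehat E_d)=-\Upsilon^b_{ic}$, giving $\mathcal{L}_{X_i}\theta^b=K^b_{ik}\theta^k-\Upsilon^b_{ic}\theta^c$; applying $\overrightarrow{(\cdot)_\beta}$ returns the formula for $X_i^C(w^b_\beta)$. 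The remaining three Lie derivatives are computed in exactly the same way, and each produces the corresponding row of the table.

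I do not anticipate any real obstacle here; the argument is entirely bookkeeping. The only point requiring mild care is sign discipline in (\ref{e2}) (note that $[\widehat E_a,\widehat E_b]=+C^c_{ab}\widehat E_c$ whereas $[\widetilde E_a,\widetilde E_b]=-C^c_{ab}\widetilde E_c$) and the systematic use of the formula $Z^C(\overrightarrow{\theta_\alpha})=\overrightarrow{(\mathcal{L}_Z\theta)_\alpha}$ rather than attempting a direct coordinate calculation in the natural coordinates $(q^A,u^A_\alpha)$, which would be considerably more cumbersome because of the relation (\ref{coordrelation}) between $u^A_\alpha$ and the $v^i_\alpha,w^a_\alpha$.
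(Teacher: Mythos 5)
Your proposal is correct and follows essentially the same route as the paper's own proof: both identify $v^i_\alpha$ and $w^a_\alpha$ with the linear functions $\overrightarrow{(\cdot)_\alpha}$ of the coframe dual to $\{X_i,\widehat E_a\}$, compute the needed Lie derivatives from the bracket relations (\ref{e2}), and then read off the table via Lemma~\ref{pztfle}. Your write-up merely makes explicit the evaluation $(\mathcal{L}_Z\theta)(W)=-\theta([Z,W])$ that the paper leaves implicit.
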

%\proof
\begin{proof} Let $\{\vartheta^j,\varpi^a\}$ be the dual basis of $\{X_i,{\widehat E}_a\}$. From the bracket relations (\ref{e2}), we can see that ${\mathcal L}_{X_i}\vartheta^j=0$  and that ${\mathcal L}_{X_i}\varpi^b=-\Upsilon^b_{ic}\varpi^c+K^b_{ik}\vartheta^k$. Therefore,
$$
X_i^C(v^j_\beta)=X_i^C(\overrightarrow{\vartheta^j_\beta})= \overrightarrow{({\mathcal L}_{X_i}\vartheta^j)_\beta}=0
$$
and
$$
X_i^C(w_\beta^b)=X_i^C( \overrightarrow{\varpi^b_\beta})=
\overrightarrow{({\mathcal L}_{X_i}\varpi^b)_\beta}=-\Upsilon^b_{ic}w^c_\beta+K^b_{ik}v^k_\beta.
$$
Since also
$$
X_i^C(q^j)=X_i(q^j)=\delta^j_i,
$$
the first row in the Lemma follows.
The other properties follow in the same way.%\qed
\end{proof}

\begin{lem}\label{projvf}
The  projections of the $G$-invariant vector fields
$X_i^C,X_i^{V_\alpha},\widehat{E}_a^C, \widehat{E}_a^{V_\alpha}$ onto
$T^1_kQ/G$ are locally given, respectively, by
\[\begin{array}{ll}
 \breve{X}_i^C  =    \ds\frac{\partial}{\partial
q^i}+(K^b_{ik}v^k_\beta -\Upsilon^b_{ic} w^c_\beta)\ds\frac{\partial}
{\partial w^b_\beta },   & \qquad
 \breve{X}_i^{V_\alpha} =  \ds\frac{\partial}{\partial v^i_\alpha},
   \\
\breve{{E}}_a^{C}  =
(\Upsilon^b_{ka}v^k_\beta-C_{ac}^b w^c_\beta) \ds\frac{\partial}{\partial
w^b_\beta},    & \qquad
\breve{{E}}_a^{V_\alpha}  =  \ds\frac{\partial}{\partial w^a_\alpha}.
\end{array}
\]
\end{lem}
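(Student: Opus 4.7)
The proof is essentially a direct consequence of Lemma~\ref{lem28} together with the defining property of the reduced vector field. I sketch the plan.

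First, recall from the discussion preceding the lemma that the quotient map $\pi_{T^1_kQ}: T^1_kQ \to (T^1_kQ)/G$ reads, in the chosen coordinates, as
\[
(q^i,q^a,v^i_\alpha,w^a_\alpha) \mapsto (q^i,v^i_\alpha,w^a_\alpha),
\]
so that $q^i,v^i_\alpha,w^a_\alpha$ may be interpreted either as $G$-invariant functions upstairs or as coordinate functions downstairs. By the defining relation (\ref{wf}), for any $G$-invariant vector field $W$ on $T^1_kQ$ and any function $f$ on $(T^1_kQ)/G$, the reduced vector field $\breve W$ satisfies
\[
W(f\circ \pi_{T^1_kQ}) = \breve W(f)\circ \pi_{T^1_kQ}.
\]
Applying this to the coordinate functions on $(T^1_kQ)/G$, the components of $\breve W$ in the coordinate basis $\{\partial/\partial q^i,\partial/\partial v^i_\alpha,\partial/\partial w^a_\alpha\}$ are simply obtained by evaluating $W$ on the invariant coordinate functions $q^i,v^i_\alpha,w^a_\alpha$.

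The second step is to verify that $X_i^C, X_i^{V_\alpha},\widehat E_a^C, \widehat E_a^{V_\alpha}$ are indeed $G$-invariant, which has already been recorded (they are constructed from the invariant frame $\{X_i,\widehat E_a\}$ on $Q$ via the complete and vertical lifts, and Lemma~\ref{lem51} then ensures invariance on $T^1_kQ$).

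The third and final step is to read off the coefficients of each reduced vector field from the table in Lemma~\ref{lem28}. For instance, for $X_i^C$ the entries $X_i^C(q^j)=\delta_i^j$, $X_i^C(v^j_\beta)=0$ and $X_i^C(w^b_\beta)=-\Upsilon^b_{ic}w^c_\beta+K^b_{ik}v^k_\beta$ combine to give precisely
\[
\breve X_i^C = \frac{\partial}{\partial q^i} + (K^b_{ik}v^k_\beta - \Upsilon^b_{ic}w^c_\beta)\frac{\partial}{\partial w^b_\beta},
\]
and analogously for $\breve X_i^{V_\alpha}$, $\breve E_a^C$, $\breve E_a^{V_\alpha}$. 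There is no real obstacle here: all the substantive work has already been done in establishing the bracket relations (\ref{e2}) and in computing the table of Lemma~\ref{lem28}. The present lemma is a transcription of that table into the language of projected vector fields.
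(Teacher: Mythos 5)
Your proposal is correct and follows essentially the same route as the paper: the authors likewise apply the relation (\ref{wf}) to the invariant coordinate functions $q^j$, $v^j_\beta$, $w^b_\beta$, read off the components from the table in Lemma~\ref{lem28}, and note that these coordinate functions determine the reduced vector field completely. Nothing is missing.
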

\begin{proof}
From the expressions in Lemma~\ref{lem28} and the relation (\ref{wf}) between an invariant vector field and its reduction, we obtain:
$$
\breve{X}_i^C(q^j )\circ \pi_{T^1_kQ}=X^C_i(q^j\circ \pi_{T^1_kQ})=X^C_i(q^j )=\delta^i_j,
$$
$$
\breve{X}_i^C(v^j_\beta )\circ \pi_{T^1_kQ}=X^C_i(v^j_\beta \circ \pi_{T^1_kQ})=X^C_i(v^j_\beta )=0,
$$
$$
\breve{X}_i^C(w^b_\beta )\circ \pi_{T^1_kQ}=X^C_i(w^b_\beta \circ \pi_{T^1_kQ})=X^C_i(w^b_\beta )= K^b_{ik}v^k_\beta -\Upsilon^b_{ic} w^c_\beta.
$$
Since $(q^i,v^i_\alpha,w^a_\alpha)$ forms a set of coordinate functions on $(T^1_kQ)/G$, this determines the vector field completely.
The same idea allows us to prove the other relations.%\qed
\end{proof}

\subsection{The reduced Lagrangian {\sc sopde}:   Lagrange-Poincar\'e field equations}

Assume that an invariant Lagrangian $L\in\cinfty{T^1_kQ}$ is given. Then, its derivatives by invariant vector fields, i.e.\ the functions $X_i^{C}(L),\, X_i^{V_\alpha}(L), \, \widehat{E}_a^{C}(L) ,\, \widehat{E}_a^{V_\alpha}(L)$, are invariant.
From relation (\ref{wf}), we can therefore write
\begin{equation}\label{ss}\begin{array}{l}
X_i^{V_\alpha}(L)=X_i^{V_\alpha}(l\circ \pi_{T^1_kQ})=\breve{X}^{V_\alpha}_i(l)\circ  \pi_{T^1_kQ}, \\ \noalign{\medskip}
X_i^{C}(L)=X_i^{C}(l\circ \pi_{T^1_kQ})=\breve{X}^{C}_i(l)\circ \pi_{T^1_kQ}, \\ \noalign{\medskip}
\widehat{E}_a^{V_\alpha}(L)=\widehat{E}_a^{V_\alpha}(l\circ \pi_{T^1_kQ})=\breve{E}_a^{V_\alpha}(l)\circ
  \pi_{T^1_kQ}, \\ \noalign{\medskip}
  \widehat{E}_a^{C}(L)=\widehat{E}_a^{C}(l\circ \pi_{T^1_kQ})=\breve{E}_a^{C}(l)\circ \pi_{T^1_kQ},
\end{array}\end{equation}
where $l:(T^1_kQ)/G \to \r$ is the {\sl reduced Lagrangian},  defined by $l\circ
\pi_{T^1_kQ} =L$.

From Proposition~\ref{frameprop}, we know that  {\sl Lagrangian {\sc sopde}}    ${\mathbf\Gamma}$ satisfies, with respect to the frame $\{X_i,{\widehat E}_a\}$, the equations
\[
\begin{array}{c}
  \Gamma_\alpha(X_i^{V_\alpha}(L))-
X_i^C(L)=0,  \\
  \Gamma_\alpha(\widehat{E}_a^{V_\alpha}(L))-
\widehat{E}_a^C(L)=0.\end{array}
\]
By making use of the fact that each  $\Gamma_\alpha$ is an invariant vector field on $T^1_kQ$, it follows from (\ref{wf}) and (\ref{ss})  that the reduced vector fields ${\breve\Gamma}_\alpha$ satisfy
\[
\begin{array}{l}
  \breve{\Gamma}_\alpha(\breve{X}_i^{V_\alpha}(l))-
\breve{X}_i^C(l)=0,  \\
  \breve{\Gamma}_\alpha( \breve{E}_a^{V_\alpha}(l))-
\breve{E}_a^C(l)=0,\end{array}
\]
 on $(T^1_kQ)/G$. Taking into account the result in Lemma
\ref{projvf} we can rewrite these equations as
\begin{equation}\label{eqelred}
\begin{array}{l}
  \breve{\Gamma}_\alpha\left(\ds\frac{\partial l} {\partial
v^i_\alpha }\right)- \ds\frac{\partial l} {\partial q^i}\, = \, \left(K^b_{ik}v^k_\beta -
\Upsilon^b_{ic} w^c_\beta \right)\ds\frac{\partial l} {\partial w^b_\beta},\\
\noalign{\medskip}   \breve{\Gamma}_\alpha\left(
\ds\frac{\partial l} {\partial w^a_\alpha }\right) \, = \,
\left(\Upsilon^b_{ka}v^k_\beta -C_{ac}^b w^c_\beta \right)
\ds\frac{\partial l} {\partial w^b_\beta}.
\end{array}
\end{equation}

A {\sc sopde} can be written as
\begin{equation}\label{gamlocl}
\Gamma_\alpha =v^i_\alpha X^C_i + w^a_\alpha \widehat{E}_a^C +
(\widehat{\Gamma}_\alpha )^j_\beta X_j^{ V_\beta}+ (\widehat{\Gamma}_\alpha )^a_\beta
\widehat{E}_a^{ V_\beta}.
\end{equation}
We have already established in the proof of Proposition~\ref{Gammainv} that the functions $(\widehat{\Gamma}_\alpha )^j_\beta$ and $(\widehat{\Gamma}_\alpha )^a_\beta$
are  invariant, and that they can be identified with functions on $(T^1_kQ)/G$. From Lemma~\ref{projvf}, we see that:

\begin{lem}\label{ivfgam}
 The reduced vector fields $\breve{\Gamma}_\alpha $ on $(T^1_kQ)/G$ of
  a {\sl Lagrangian {\sc sopde}} $\Gamma_\alpha$ are given by
$$
\begin{array}{rcl}
\breve{\Gamma}_\alpha & = & v^i_\alpha\ds\frac{\partial} {\partial
q^i}+(\widehat{\Gamma}_\alpha)^j_\beta \ds\frac{\partial} {\partial v^j_\beta}
+ (\widehat{\Gamma}_\alpha)^c_\beta \ds\frac{\partial} {\partial w^c_\beta} \\
\noalign{\medskip}
  & & + \, \left( \Upsilon^c_{ib} (v^i_\beta w^b_\alpha -  v^i_\alpha w^b_\beta)
 -C^c_{ab} w^a_\alpha w^b_\beta + K^c_{ij} v^i_\alpha v^j_\beta \right) \ds\frac{\partial}{\partial w^c_\beta}
\end{array}
$$
\end{lem}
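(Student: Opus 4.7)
My plan is to invoke the decomposition (\ref{gamlocl}) of the Lagrangian SOPDE $\Gamma_\alpha$ with respect to the invariant frame $\{X_i^C,\widehat{E}_a^C,X_j^{V_\beta},\widehat{E}_a^{V_\beta}\}$ on $T^1_kQ$, and push it forward to $(T^1_kQ)/G$ via $\pi_{T^1_kQ}$. By Lemma~\ref{lem51} together with the fact (established in the proof of Proposition~\ref{Gammainv}) that the coefficients $(\widehat\Gamma_\alpha)^j_\beta$ and $(\widehat\Gamma_\alpha)^a_\beta$ are $G$-invariant, every summand in (\ref{gamlocl}) is an invariant vector field whose coefficient is an invariant function. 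Using the compatibility of reduction with pointwise multiplication by invariant functions (a direct consequence of (\ref{wf}) applied to each component, since reduced coefficients and reduced vector fields multiply through $\pi_{T^1_kQ}$), I obtain
\[
\breve{\Gamma}_\alpha = v^i_\alpha\,\breve{X}^C_i + w^a_\alpha\,\breve{E}_a^{C} + (\widehat{\Gamma}_\alpha)^j_\beta\,\breve{X}_j^{V_\beta} + (\widehat{\Gamma}_\alpha)^a_\beta\,\breve{E}_a^{V_\beta}.
\]

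The second step is to substitute the explicit coordinate formulas for the four reduced frame vector fields provided by Lemma~\ref{projvf}. The terms involving $\partial/\partial q^i$ and $\partial/\partial v^j_\beta$ and the ``diagonal'' part of $\partial/\partial w^c_\beta$ give immediately $v^i_\alpha\partial/\partial q^i$, $(\widehat\Gamma_\alpha)^j_\beta\partial/\partial v^j_\beta$ and $(\widehat\Gamma_\alpha)^c_\beta\partial/\partial w^c_\beta$. The remaining $\partial/\partial w^c_\beta$ contributions come from two sources: the $(K^c_{ik}v^k_\beta-\Upsilon^c_{ib}w^b_\beta)$ term in $\breve X_i^C$ multiplied by $v^i_\alpha$, and the $(\Upsilon^c_{ka}v^k_\beta-C^c_{ab}w^b_\beta)$ term in $\breve E_a^C$ multiplied by $w^a_\alpha$. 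Collecting these four pieces yields
\[
\bigl(K^c_{ij}v^i_\alpha v^j_\beta - \Upsilon^c_{ib}v^i_\alpha w^b_\beta + \Upsilon^c_{ib}v^i_\beta w^b_\alpha - C^c_{ab}w^a_\alpha w^b_\beta\bigr)\frac{\partial}{\partial w^c_\beta},
\]
where in the cross term one relabels the dummy indices $(k,a)\mapsto(i,b)$ to recognize $\Upsilon^c_{ib}(v^i_\beta w^b_\alpha - v^i_\alpha w^b_\beta)$.

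There is no real obstacle: the only mildly delicate point is that reduction commutes with multiplication only for \emph{invariant} scalar factors, which is exactly why Lemma~\ref{lem51} and the invariance statement from Proposition~\ref{Gammainv} are needed, and that the relabeling of the two off-diagonal $\Upsilon$-terms be done carefully so as to exhibit the antisymmetric combination $v^i_\beta w^b_\alpha - v^i_\alpha w^b_\beta$. The remainder is routine index bookkeeping and matches the stated formula exactly.
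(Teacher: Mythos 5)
Your proposal is correct and follows the same route as the paper: the paper likewise takes the decomposition (\ref{gamlocl}) in the invariant frame, uses the invariance of the coefficients $(\widehat\Gamma_\alpha)^j_\beta$, $(\widehat\Gamma_\alpha)^a_\beta$ established in the proof of Proposition~\ref{Gammainv}, and substitutes the reduced frame fields from Lemma~\ref{projvf}. Your index bookkeeping for the $\partial/\partial w^c_\beta$ terms checks out against the stated formula.
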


Then, if  ${\breve\phi}(t)=(q^i=\phi^i(t),v^i_\alpha=\phi^i_\alpha(t),w^a_\alpha=\phi^a_\alpha(t))$ is an integral
section of $\breve{\mathbf\Gamma}$ then it satisfies,
in view of relations (\ref{eqelred}), the {\sl Lagrange-Poincar\'e field equations},
\begin{equation}\label{l-eq}
\begin{array}{ll}
\ds\fpd{\phi^i}{t^\alpha} \,=\, \ds\phi^i_\alpha,\\\noalign{\medskip}
\ds\frac{\partial}{\partial t^\alpha} \left(
\ds\frac{\partial l}{\partial v^i_\alpha}  \right) \, - \,
\ds\frac{\partial l}{\partial q^i} \, = \,
\left(K^b_{ik} \phi^k_\beta
-\Upsilon^b_{ic} \phi^c_\beta \right)\ds\frac{\partial l} {\partial w^b_\beta},\\
\noalign{\medskip}  \, \ds\frac{\partial}{\partial
t^\alpha} \left( \ds\frac{\partial l}{\partial w^a_\alpha}
\right) \, = \, \left(\Upsilon^b_{ka} \phi^k_\beta -C_{ac}^b \phi^c_\beta \right)
\ds\frac{\partial l} {\partial w^b_\beta}.
\end{array}
\end{equation}

 The equations above agree with the Lagrange-Poincar\'e equations as they appear in \cite{Ellis}, if one takes two issues into account. The first is that the current setting (the $k$-symplectic formalism) is different from the one in \cite{Ellis} (a jet bundle formalism). One way to relate the two approaches is by choosing the base space of the jet bundle to be simply $\r^k \times Q$, and to assume that the Lagrangian does not depend explicitly on the parameters $t^\alpha$. The second issue is that only coordinate-independent expressions appear in \cite{Ellis}, at the price of assuming to have an extra covariant derivative at disposal. This covariant derivative is actually only required to give a geometric sound meaning to all the separate terms in the equations, but it disappears from the equations when one only considers their coordinate expressions. This observation is already apparent when one considers only the simplest case of Lagrangian mechanics (with $k=1$ in the current setting), see e.g.\ the first remark on page 35 of the booklet \cite{Cendra}. If one takes the above remarks into account, and if one calculates coordinate expressions of the Lagrange-Poncar\'e equations as they appear in \cite{Ellis}, the two sets of expressions compare. In the special case where the configuration space $Q$ coincides with the symmetry group $G$, the equations simplify to  equations on $(T^1_kG)/G={\g}^k$, the so-called {\sl Euler-Poincar\'e field equations} given by
\[   \frac{\partial}{\partial
t^\alpha} \left( \ds\frac{\partial l}{\partial w^a_\alpha} \right)
\, = \,  -C_{ac}^b \phi^c_B \  \ds\frac{\partial l} {\partial
w^b_B}.
\]
These equations agree with those in \cite{CRS}, when one considers coordinate expressions.

We have established, in view of Proposition~\ref{m} that a solution of the Euler-Lagrange equations (\ref{lfield}) projects onto a solution of the Lagrange-Poincar\'e equations (\ref{l-eq}). However, we can not conclude that any solution of (\ref{l-eq}) can be extended to one of (\ref{lfield}).
For that reason, we need to study the integrability conditions of the Lagrangian $k$-vector fields ${\mathbf \Gamma}$.

\subsection{The integrability of an invariant {\sc sopde}} \label{sec53}

We now specify the results of Section~\ref{sec5} to the case where the $k$-vector field ${\mathbf X}$ is a {\sc sopde} ${\mathbf \Gamma}$ on $M=T^1_kQ$. We will also draw an analogy with some results of the paper \cite{Ellis}, when  translated to the current framework.

Recall that we are working with the lifted action of $G$ on $M=T^1_kQ$. As we had done in Section~\ref{se52n}, we may introduce the vector fields
\[
{\widehat E}_a^M = A^b_a  {\widetilde E}_b^C,
\]
on $T^1_kQ$. They correspond with the invariant vector fields we had introduced in expression (\ref{Ehat}) (or (\ref{A})), but now for the lifted $G$-action on $M=T^1_kQ$.
Given that ${\widehat E}_a = A^b_a  {\widetilde E}_b$, we get
\[
{\widehat E}^C_a = A^b_a  {\widetilde E}^C_b + \fpd{{A}^b_a}{t^\beta} {\widetilde E}^{V_\beta}_b =   {\widehat E}^M_b + ( X_i(A^b_a)v^i_\beta +{\widehat E}_c(A^b_a)w^c_\beta ) {\widetilde E}^{V_\beta}_b =  {\widehat E}_a^M + (  \Upsilon^b_{ia}v^i_\beta + C^b_{ca} w^c_\beta ) {\widehat E}^{V_\beta}_b.
\]

To proceed as in Section~\ref{se52n} we need a principal connection on the bundle $\pi_{T^1_kQ}:M=T^1_kQ \to M/G=(T^1_kQ)/G$. It will be most convenient to define this connection by means of its connection map that takes values in the Lie algebra $\g$.
\begin{definition}
Let $\vartheta^Q: TQ \to \g$ be a principal connection on $\pi_Q:Q\to Q/G$, then its vertical lift  $\vartheta^{T^1_kQ}: T(T^1_kQ) \to \g$ is the principal connection on $\pi_{T^1_kQ}$, given by
\[
\vartheta^{T^1_kQ} (W) = \vartheta^Q (T\tau^1_Q(W)),
\]
for all $W\in T(T^1_kQ)$, where $\tau^1_Q$ is the natural projection $: T^1_kQ \to Q$.
\end{definition}
The fact that this connection is principal, follows easily from the fact that the connection $\omega^Q$ is, and from the property $T\tau^1_Q(gW)=g(T\tau^1_Q(W))$.

%, and from the commutativity of the  corresponding connection maps,
%\[\xymatrix@=10mm{ T(T^1_kQ)\ar[d]_-{T\tau^1_Q} & T^1_kQ\times_{(T^1_kQ)/G} T\left(T^1_kQ/G\right) \ar[d]^-{(\tau^1_Q,T\tilde{\tau}^1_Q)}\ar[l]_-{\gamma^{T^1_kQ}} \\
%TQ & Q\times_{Q/G} T\left(Q/G\right) \ar[l]^-{\gamma^{Q}}
%} \]
%that is,  $T\tau^1_Q\circ \gamma^{T^1_kQ} = \gamma^Q\circ (\tau_Q^1,T{\tilde\tau}^1_Q$), with ${\tilde\tau}^1_Q: M/G = T^1_kQ/G\to Q/G$.

We will denote the corresponding connection map by $\omega^{T^1_kQ}: \vectorfields{T^1_kQ} \to \vectorfields{T^1_kQ}$. Its relation to $\omega^Q: \vectorfields{Q} \to \vectorfields{Q}$ is
\[
\omega(W)({\mathbf v}) = (\vartheta(W({\mathbf v})))^C_Q({\mathbf v}),
\] for all $W\in\vectorfields{T^1_kQ}$ and ${\mathbf v}\in T^1_kQ$.  From $\omega^{T^1_kQ} ({\widetilde E}_a^C) = {\widetilde E}_a^C$ it follows that the action of  $\omega^{T^1_kQ}$ on the invariant frame $\{{\widehat E}_a^C,X_i^C, {\widehat E}_a^{V_\beta},X_i^{V_\beta}\}$  is given by
\begin{eqnarray*}
&& \omega^{T^1_kQ} ({\widehat E}_a^C) = {\widehat E}_a^M = {\widehat E}^C_a - (  \Upsilon^b_{ia}v^i_\beta + C^b_{ca} w^c_\beta ) {\widehat E}^{V_\beta}_b,
\\ && \omega^{T^1_kQ} ({\widehat E}_a^{V_\beta}) = 0, \qquad  \omega^{T^1_kQ} (X_i^C) = 0,\qquad  \omega^{T^1_kQ} (X_i^{V_\beta}) = 0.
\end{eqnarray*}

For later use we give the decomposition of  a {\sc sopde} ${\bf\Gamma}$ in its horizontal and vertical part, with respect to the connection $\omega^{T^1_kQ}$. If we write  $\Gamma_\alpha$ as in expression (\ref{gamlocl}), then $\omega^{T^1_kQ}(\Gamma_\alpha) =   w^a_\alpha  {\widehat E}^M_a$ and the horizontal part of $\Gamma_\alpha$ is
$ v^i_\alpha X^C_i +
(\widehat{\Gamma}_\alpha )^j_\beta X_j^{ V_\beta}+ (\widehat{\Gamma}_\alpha )^a_\beta
\widehat{E}_a^{ V_\beta}     + (  \Upsilon^b_{ia}v^i_\beta w^a_\alpha + C^b_{ca} w^c_\beta w^a_\alpha) {\widehat E}^{V_\beta}_b.$
When we compute the reduced vector field of this horizontal part, using the expressions of Lemma~\ref{projvf}, it, of course, coincides with the expression of the reduced vector fields ${\breve\Gamma}_\alpha$ we had obtained in Lemma~\ref{ivfgam}. The decomposition (\ref{dec3}) on $M=T^1_kQ$ for  $\Gamma_\alpha$, using the  connection  $\omega^{T^1_kQ}$, is then:
\begin{equation} \label{decnew}
\Gamma_\alpha = (\breve \Gamma_\alpha)^h + ({\bar\Gamma}_\alpha)^v = (\breve \Gamma_\alpha)^h + w^a_\alpha  {\widehat E}^M_a,
\end{equation}
that is to say: in the notations of Section~\ref{se52n} the section ${\bar\Gamma}_\alpha$ $\in Sec(\bar\g \to (T^1_kQ)/G)$ has coefficients ${X}^a_\alpha =w^a_\alpha$.

 We know from Proposition~\ref{Gammainv} that a $k$-vector field ${\mathbf\Gamma}$ of an invariant Lagrangian $L$ is $G$-invariant on $T^1_kQ$.
Next to  requiring the integrability of $\check{\mathbf \Gamma}$, the integrability of ${\mathbf \Gamma}$ is guaranteed if the coordinate expression  (\ref{seccond}) is satisfied. In it we need the curvature  of the vertical lift connection $\omega^{T^1_kQ}$.

From the defining relation, and from the coordinate expressions, it is clear that $\omega^{T^1_kQ}$ has the property that, if the vector fields $W\in\vectorfields{T^1_kQ}$ and $X\in\vectorfields{Q}$ are $\tau^1_Q$-related, then so are the vector fields $\omega^{T^1_kQ}(W)$ and $\omega^Q(X)$. There exists a similar property for the horizontal lifts that correspond to each of the two connections. We will denote the horizontal lift of $\omega^Q$ by $h$, and the horizontal lift of $\omega^{T^1_kQ}$ by $h_k$. We will also use the notation ${\tilde\tau}^1_Q$ for the projection $M/G=(T^1_kQ)/G \to Q/G$. It easily follows that, if $\breve W \in \vectorfields{M/G}$ and $\breve X \in \vectorfields{Q/G}$ are ${\tilde\tau}^1_Q$-related, then ${\breve W}^{h_k}$ and ${\breve X}^h$ are $\tau^1_Q$-related. For two pairs of such vector fields, it follows that $[{\breve W}^{h_k}_1,{\breve W}^{h_k}_2]$  is $\tau^1_Q$-related to $[{\breve X}^h_1,{\breve X}^h_2]$. From all this, we may conclude that that the curvatures $({\bar K}^{T^1_kQ}({\breve W}_1,{\breve W}_2))^v = -\omega^{T^1_kQ}[{\breve W}_1^{h_k},{\breve W}_2^{h_k}] )$ and $({\bar K}^{Q}({\breve X}_1,{\breve X}_2))^v = -\omega^{Q}[{\breve X}_1^{h},{\breve X}_2^{h}])$ are  $\tau^1_Q$-related whenever the arguments are ${\tilde\tau}^1_Q$-related. Here $()^v$ stands for either the vertical lift associated to the fibre bundle $T^1_kQ\to (T^1_kQ)/G$ or to the bundle $Q\to Q/G$.  From this property we may deduce that the only non-vanishing curvature coefficients of $K^{T^1_kQ}$ are actually those of $K^Q$.

%If we use  two  vector fields $X$ and $Y$ on $M/G=T^1_kQ/G$ that are ${\tilde\tau}^1_Q$-related to the vector fields $\tilde X$ and $\tilde Y$ on $Q/G$, then
%$$
%T\tau_Q^1\circ\gamma^{T^1_kQ}(X)=\gamma^Q\circ T{\tilde\tau}^1_Q(X)=
%\gamma^Q(\tilde X)\,,\quad T\tau_Q^1\circ\gamma^{T^1_kQ}(Y)=\gamma^Q\circ T{\tilde\tau}^1_Q(Y)=
%\gamma^Q(\tilde Y).
%$$
% This means that $\gamma^{T^1_kQ}(X)$ and $\gamma^{T^1_kQ}(Y)$ are $\tau_Q^1$-related with $\gamma^Q(\tilde X)$ and $\gamma^Q(\tilde Y)$, respectively. We obtain therefore
% \[\begin{array}{l}
% ({\bar K}^{T^1_kQ}(X,Y))^v = -\omega^{T^1_kQ}[\gamma^{T^1_kQ}(X),\gamma^{T^1_kQ}(Y)] = -\omega^Q\left(  T\tau_Q^1[\gamma^{T^1_kQ}(X),\gamma^{T^1_kQ}(Y)] \right)
% \\ \noalign{\medskip}
% = - \omega^Q([\gamma^Q(\tilde X),\gamma^Q(\tilde Y)] ) = ({\bar K}^Q( \tilde X, \tilde Y))^v,
%\end{array}\]
%where $()^v$ stands for either the vertical lift associated to the fibre bundle $T^1_kQ\to (T^1_kQ)/G$ or to the bundle $Q\to Q/G$. From this property we may deduce that the only non-vanishing curvature coefficients of $K^{T^1_kQ}$, are actually those of $K^Q$.

Likewise, for the adjoint connection, if $\breve W$ is a vector field on $(T^1_kQ)/G$ that is ${\tilde\tau}^1_Q$-related to a vector field $\breve X$ on $Q/G$, and if $\bar Z$ is a section of $(T^1_kQ\times \g)/G$ that is related to a section $\bar Y$ of $(Q\times \g)/G$, then one may show that $(\nabla^{T^1_kQ}_{\breve W}\bar Z)^v = [\breve{W}^{h_k},\bar Z^v]$ is $\tau^1_Q$-related to
$(\nabla^{Q}_{\breve X}\bar Y)^v = [\breve{X}^{h},\bar Y^v]$. Again, in terms of the connection coefficients of the connection $\nabla^{T^1_kQ}$, this means that the only connection coefficients that matter are those of $\nabla^Q$.

%$$\begin{array}{l}
%(\nabla^{T^1_kQ}_{\breve W}\bar Z)^v = [\breve{W}^{h_k},\bar Z^v]=\omega^{T^1_kQ}([\gamma^{T^1_kQ}(\breve{X}),\bar Y^v])= \omega^Q\circ T\tau^1_Q([\gamma^{T^1_kQ}(\breve{X}),\bar Y^v])\\  \noalign{\medskip}
%=\omega^Q([T\tau^1_Q\circ \gamma^{T^1_kQ}(\breve{X}),T\tau^1_Q(\bar Y^v)])=\omega^Q([\gamma^Q({\tilde X}), \tilde{Y}^v])
%= [\tilde{X}^h,\tilde Y^v]=(\nabla^{ Q}_{\tilde X}\tilde Y)^v \end{array}$$
% where we have used the following identities
%$$
%T\tau_Q^1\circ\gamma^{T^1_kQ}(X)=\gamma^Q\circ T{\tilde\tau}^1_Q(X)=
%\gamma^Q(\tilde X)\,,\quad T\tau_Q^1 \bar{Y}^v =
% \tilde{Y}^v.
%$$

We can now easily compute the coordinate expression (\ref{seccond}), for the case ${\mathbf X} = {\mathbf \Gamma}$. We reach the following conclusion:
\begin{prop}  A {\sc sopde} ${\mathbf \Gamma}$ is integrable, if and only if its reduced $k$-vector field  $\breve{\mathbf \Gamma}$ is, and if
 \[
{\breve\Gamma}_\alpha(w^b_\beta)  - {\breve\Gamma}_\beta(w^b_\alpha) + (v^i_{\alpha} w^a_\beta   - v^i_{\beta} w^a_\alpha)   \Upsilon_{ia}^{b}   + C^b_{ac}w^a_\alpha w^c_\beta  - K^b_{ij}v^i_\alpha v^j_\beta  =0.
\]
\end{prop}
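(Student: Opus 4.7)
The plan is to obtain this statement as a direct specialization of Proposition~\ref{newprop} to the case $M=T^1_kQ$, ${\mathbf X}={\mathbf \Gamma}$, using the principal connection $\omega^{T^1_kQ}$ constructed in this subsection. Since ${\mathbf \Gamma}$ is $G$-invariant (Proposition~\ref{Gammainv}), Proposition~\ref{newprop} applies and tells us that integrability is equivalent to the integrability of $\breve{\mathbf \Gamma}$ together with the vanishing of
\[
\nabla_{\breve\Gamma_\alpha}\bar\Gamma_\beta - \nabla_{\breve\Gamma_\beta}\bar\Gamma_\alpha + [\bar\Gamma_\alpha,\bar\Gamma_\beta] - \bar K^{T^1_kQ}(\breve\Gamma_\alpha,\breve\Gamma_\beta) = 0,
\]
whose coordinate version is (\ref{seccond}). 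So the task reduces to reading off the coefficients that appear in (\ref{seccond}) for the case at hand.

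First I would plug in the decomposition (\ref{decnew}): the horizontal part of $\Gamma_\alpha$ (with respect to $\omega^{T^1_kQ}$) reduces to $\breve\Gamma_\alpha$ with $X^i_\alpha=v^i_\alpha$ (now playing the role of the base-space components on $(T^1_kQ)/G$), while the vertical part reads $\bar\Gamma_\alpha = w^a_\alpha \widehat E_a^M$, so the section coefficients in (\ref{seccond}) are $X^a_\alpha = w^a_\alpha$. Substituting these directly into (\ref{seccond}) gives
\[
\breve\Gamma_\alpha(w^b_\beta) - \breve\Gamma_\beta(w^b_\alpha) + (v^i_\alpha w^a_\beta - v^i_\beta w^a_\alpha)\Upsilon^{b}_{ia} + C^b_{ac}w^a_\alpha w^c_\beta - K^b_{ij}v^i_\alpha v^j_\beta = 0,
\]
which is the stated equation.

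The remaining point is to justify that the structure functions $\Upsilon^b_{ia}$, $C^b_{ac}$ and $K^b_{ij}$ appearing here are literally those of the base bundle $Q\to Q/G$, and not new coefficients coming from the connection $\nabla^{T^1_kQ}$ and the curvature $\bar K^{T^1_kQ}$ on $M=T^1_kQ$. This is exactly the $\tau^1_Q$-relatedness argument made in the paragraphs just before the proposition: horizontal lifts by $h_k$ are $\tau^1_Q$-related to horizontal lifts by $h$, so brackets of horizontal lifts and their vertical projections are $\tau^1_Q$-related, and likewise for $\nabla^{T^1_kQ}$ and $\nabla^Q$. Consequently the only non-vanishing curvature coefficients $K^b_{ij}$ of $\omega^{T^1_kQ}$ and the only non-vanishing connection coefficients $\Upsilon^b_{ia}$ of $\nabla^{T^1_kQ}$ coincide with those of $Q\to Q/G$, so no additional terms enter the expression. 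This identification is the only step that requires a little care; the rest is pure substitution. With it in hand, Proposition~\ref{newprop} yields the claim. \hfill$\blacksquare$
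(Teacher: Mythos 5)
Your proposal is correct and follows essentially the same route as the paper: the paper likewise obtains this proposition by specializing the coordinate condition (\ref{seccond}) of Proposition~\ref{newprop} to ${\mathbf X}={\mathbf \Gamma}$ with the vertical-lift connection $\omega^{T^1_kQ}$, reading off $X^i_\alpha=v^i_\alpha$ and $X^a_\alpha=w^a_\alpha$ from the decomposition (\ref{decnew}), and using the $\tau^1_Q$-relatedness arguments to identify the coefficients $\Upsilon^b_{ia}$, $C^b_{ac}$ and $K^b_{ij}$ with those of the bundle $Q\to Q/G$. No gaps.
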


In terms of the integral curves $(q^i=\phi^i(t),v^i_\alpha = \phi^i_\alpha(t),w^a_\alpha = \phi^a_\alpha(t))$ of the reduced $k$-vector field $\breve{\mathbf \Gamma}$ this means that
\[
\fpd{\phi^b_\beta}{t^\alpha}  - \fpd{\phi^b_\alpha}{t^\beta} + (\phi^i_{\alpha} \phi^a_\beta   - \phi^i_{\beta} \phi^a_\alpha)   \Upsilon_{ia}^{b}   + C^b_{ac}\phi^a_\alpha \phi^c_\beta  - K^b_{ij}\phi^i_\alpha \phi^j_\beta  =0.
\]
This condition represents the analogue of expression (3.29) of \cite{Ellis} in our formalism. Since we are also assuming that the reduced $k$-vector field $\breve{\mathbf \Gamma}$ is integrable, i.e.\ $[{\breve\Gamma}_\alpha,{\breve\Gamma}_\beta] =0$, we find that, among other,  the integral curves satisfy
\[
\fpd{\phi^i_\beta}{t^\alpha}  - \fpd{\phi^i_\alpha}{t^\beta}=0.
\]

When $Q=G$, we simply get
 \[
{\breve\Gamma}_\alpha(w^b_\beta)  - {\breve\Gamma}_\beta(w^b_\alpha)
    + C^b_{ac}w^a_\alpha w^b_\beta   =0,
\]
which is our analogue of the condition about vanishing curvature in Theorem~3.2 of \cite{CRS}.  If we use the vertical lift $\omega^{T^1_kQ}$ to define the connections $\omega^{\breve\phi}$ and $\omega^{\breve\phi, \bar{\mathbf \Gamma}}$ that appear in the expressions (\ref{con2}) and (\ref{con3}) of Section~\ref{se52n}, we get our analogues of the connections  ${\mathcal A}^\rho$ and ${\mathcal A}^{\bar\sigma}$ that appear in the paper \cite{Ellis} in their section on `Reconstruction conditions'.

\section{Reconstruction} \label{sec6}

The integrability conditions we have discussed so far only give necessary and sufficient  conditions for an integral section of the invariant vector field to exist. They do, however, not provide a method by which one can actually construct such a section. In this section, we provide such a method in Section~\ref{sec62}. First we need to define the notion of a $k$-connection.

\subsection{$k$-connections and principal $k$-connections} \label{sec61}

Consider a fibre bundle $\pi: M\to N$, with local adapted coordinates $(x^i,x^a)$. In this section we introduce the notion of a $k$-connection on $\pi: M \to N$. We can extend the short exact sequence (\ref{seq}) to the level of $T^1_kM$, as follows:
\[
0 \to (VM)^k \to T^1_kM \to M\times_N T^1_kN \to 0.
\]
The middle arrow is now given by $j^k: T^1_kM \to M\times_N T^1_kN: {\mathbf v} \mapsto (\tau({\mathbf v}),T^1_k\pi({\mathbf v}))$; its kernel is given by $(VM)^k$ ($k$ copies of $VM$).

\begin{definition}
A $k$-connection on $\pi: M\to N$ is a linear bundle map $\gamma^k:M\times_N T^1_kN \to T^1_k M$ which is such that $j^k\circ\gamma^k =id$.
\end{definition}

Locally, $\gamma^k$ will be of the form $\gamma^k: (x^i;x^a, u^i_\alpha) = (x^i,x^a, u^i_\alpha , u^a_\alpha = -B^{a\beta}_{i\alpha} u^i_\beta)$, for some `connection coefficients' $B^{a\beta}_{i\alpha} \in \cinfty{M}$. We will denote the corresponding right splitting, thought of as a (1,1) $k$-tensor field on $M$, by $\omega^k:  \mathfrak{X}^k(M) \to \mathfrak{X}^k(M) $. Any $k$-vector field ${\mathbf X}$ on $M$ can be decomposed into a horizontal part ${\mathbf X}-\omega^k({\mathbf X})$ and a vertical part $\omega^k({\mathbf X})$.

Given a $k$-vector field $\mathbf Y$ on $N$ we can define its horizontal lift as the $k$-vector field ${\mathbf Y}^H$ on $M$, given by
\[
{\mathbf Y}^H(m) = \gamma^k(m, {\mathbf Y}(\pi(m))).
\]
 If $Y_\alpha = Y^i_\alpha \partial / \partial x^i$, we get that $({\mathbf Y}^H)_\alpha = Y^i_\beta X_{i\alpha}^\beta$, where, from now on, we will use the notation
 \begin{equation} \label{horbasis}X_{i\alpha}^\beta=\delta^\beta_\alpha \frac{\partial}{\partial x^i} - B^{a\beta}_{i\alpha} \frac{\partial}{\partial x^a} \in\vectorfields{M}.
 \end{equation}
We now give two examples of $k$-connections. A third example, what we have called 'the mechanical connection', is given in Section~\ref{sec63}.

{\bf Example 1.\ A `simple' connection.} It is easy to see that we can construct a $k$-connection from a genuine connection $\gamma^M$ on $\pi:M\to N$, given by
\[
\gamma^k(m, {\mathbf u}) = (\gamma^M(m, u_1), \ldots, \gamma^M(m, u_k)  ), \quad  {\mathbf u}\in T^1_kN.
\]
The map $\gamma^M$ is locally given by $\gamma^M(x^i; x^a, {\dot x}^i) = (x^i,x^a, {\dot x}^i, {\dot x}^a= \Gamma^a_i(x) {\dot x}^i)$, for some connection coefficients $\Gamma^a_i$.
In this case, $B^{a\beta}_{i\alpha} =  \Gamma^{a}_{i} \delta^\beta_\alpha$. We will often refer to this kind of $k$-connections as those of `simple' type.

{\bf Example 2.\ The {\sc sopde} connection.} Take $M=T^1_kQ$ and $N=Q$, and assume that ${\mathbf \Gamma} = (\Gamma_\alpha)$ is a {\sc sopde}.  Denote by ${\mathbf S}^\gamma$ the (1,1) $k$-tensor field, given by
\[
({\mathbf S}^\gamma ({\mathbf X}))_\beta = S^\gamma (X_\beta) = X^A_\beta \fpd{}{u^A_\gamma}.
 \]
 Then, with the definition of the Lie derivative (\ref{lieder}) as we defined in Section~\ref{sec2}, ${\mathcal L}_{\Gamma_\gamma} {\mathbf S}^\gamma$ (sum over $\gamma$) is again a (1,1) $k$-tensor field on $M=T^1_kQ$. We can define a $k$-connection on the bundle $T^1_kQ \to Q$ by saying that its connection map (i.e.\ its vertical projector) is
\[
\omega^k = \frac{1}{k+1}\left( k\, Id + {\mathcal L}_{\Gamma_\gamma} {\mathbf S}^\gamma \right).
 \]
 It is easy to see that, for a {\sc sopde} with $\Gamma_\beta = u_\beta^A \partial / \partial{q^A} + (\Gamma_{\beta})_\alpha^A \partial / \partial{u^A_\alpha}$ and a $k$-vector field with $X_\beta = X_\beta^A \partial / \partial{q^A} + (X_{\beta})_\alpha^A \partial / \partial{u^A_\alpha}$, the $\beta$th vector field of the $k$-vector field $\omega^k({\mathbf X})$ is given by
\[
(\omega^k({\mathbf X}))_\beta = ((X_{\beta})_\gamma^A + X^C_\alpha B^{A\beta}_{C\gamma\alpha} ) \fpd{}{u^A_\gamma}
\]
where the connection coefficients are given by $B^{A\beta}_{C\gamma\alpha} = \delta^\beta_\alpha \Gamma^A_{C\gamma}$ and
\[
\Gamma^A_{C\gamma} = - \frac{1}{k+1} \fpd{}{u^C_\delta} \left( \Gamma_\delta\right)^A_{\gamma} .
 \]
 As the form of the connection coefficients $B^{A\beta}_{C\gamma\alpha}$ suggests, this $k$-connection is in fact of simple type. It is actually the one associated to the (genuine) connection on the bundle $T^1_k Q \to Q$ that was defined in the paper \cite{RRSV} for ${\mathbf\Gamma}$.

In the special case that the fibre bundle $\pi$ is a principal bundle $\pi_M: M \to N=M/G$, we can also define principal $k$-connections. In that case, we may identify the vertical distribution $VM$ with $M\times \g$ through $(\xi_M(m)) \mapsto (m,\xi)$. Given $\xi_\alpha$ in ${\g}$, we can define the {\em fundamental $k$-vector field} as $(\xi_1, \ldots, \xi_k)_M := ((\xi_1)_M,\ldots, (\xi_k)_M) \in {\mathfrak X}^k(M)$. We may also identify $(VM)^k$ with $M\times\g^k$, so that the short exact sequence of interest is given by
\[
0 \to M\times {\mathfrak g}^k \to T^1_kM \to M\times_{M/G} T^1_k(M/G) \to 0.
\]

Given a splitting $\gamma^k$ of this sequence, we can define a form $\vartheta^k: T^1_k M \to \g^k$, as the map which has the property that $\omega^k({\mathbf v}_m) = (\vartheta^k({\mathbf v}_m))_M(m)$. Then $\vartheta^k ((\xi_1, \ldots, \xi_k)_M) = (\xi_1, \ldots, \xi_k)$.

\begin{definition}
A $k$-connection $\gamma^k$ on $\pi_M: M\to M/G$ is  principal if
\[
\vartheta^k (g {\mathbf v}_m) = (Ad_{g^{-1}})^k (\vartheta^k({\mathbf v}_m)),
\]
where $(Ad_{g^{-1}})^k: \g^k \to \g^k$ is the application of  $Ad_{g^{-1}}: \g \to \g$ to each of the $k$ factors.
\end{definition}
 When expressed in terms of the (1,1) $k$-tensor field $\omega^k: T^1_kM \to T^1_kM$, the condition in the definition means that $\omega^k (g {\mathbf v}_m) = g \omega^k ({\mathbf v}_m)$. In view of the definition of the Lie derivative we had given in Section~\ref{sec2}, this is equivalent (when $G$ is connected) with ${\mathcal L}_{\xi_M}\omega^k = {\mathbf 0}$, when we consider the action of $\omega$ on $k$-vector fields.

 Likewise, we have for a principal connection that $\gamma^k({\mathbf u}_n, gm) = g \gamma^k({\mathbf u}_n, m)$. Assume that $\breve{\mathbf X}$ is a given $k$-vector field on $N$. In view of the previous property its horizontal lift will satisfy  ${\breve{\mathbf X}}^H (gm) = g {\breve{\mathbf X}}^H (m)$. The $k$-vector field ${\breve{\mathbf X}}^H$ on $M$ is thus always  invariant, meaning that ${\mathcal L}_{\xi_M}{\breve{\mathbf X}}^H ={\mathbf 0}$, for all $\xi\in\g$. In coordinates, this means that the vector fields $X_{i\alpha}^\beta$ on $M$ are all invariant, i.e.\ $[X_{i\alpha}^\beta, {\widetilde E}_a] = 0$.

We briefly say a few words about the integrability of a horizontal lift. Let $\breve{\mathbf X}$ be a given integrable $k$-vector field on $M/G$. For the special case with ${\mathbf X} = {\breve{\mathbf X}}^H$, Proposition~\ref{propint} tells us that if ${\mathbf X} = {\breve{\mathbf X}}^H$ is integrable then  the curvature of $\omega^{\breve\phi, {\breve{\mathbf X}}^H }$  should also vanishes.

From (\ref{Vp}) and (\ref{horbasis}) we can write
   $$T_pi(V_p)= T^\alpha \partial/\partial {t^\alpha}\circ i(p) + (X^i_\beta\circ\breve\phi)  T^\alpha X_{i\alpha}^\beta \circ i(p)+ {Z}^a {\widehat E}_a \circ i(p)$$
where $Z^a$ is given by
$$Z^cA^b_cK^a_b=  (X^i_\beta\circ\breve\phi)  T^\alpha B_{i\alpha}^{a\beta} +\tilde{Y}^a  \, .$$
Thus,
from  (\ref{conmapphi}) and (\ref{conmapphi1}),  we obtain
\[ \omega^{\breve\phi, {\breve{\mathbf X}}^H } (V_{p}) =\left(t,({\tilde Y}^a - (\breve{\mathbf X}^H )^a_\alpha(m)  T^\alpha) \fpd{}{x^a}\Big|_{m} \right)=
 (t,{Z}^a {\widehat E}_a) \, .
\]

Let us now restrict our attention to the `simple case', when the $k$-connection $\gamma^k$ is constructed from a genuine connection $\gamma^M$. In that case, it is easy to give a second interpretation of the integrability conditions,  as we did in Proposition~\ref{newprop}. The horizontal lift of a $k$-vector field $\breve{\mathbf X}$ on $M/G$ is now of the form
$({\breve{\mathbf X}}^H)_\alpha = X^i_\alpha({\partial}/{\partial x^i} -  \Gamma^{a}_{i}{\partial}/{\partial x^a}  ) = (X_\alpha)^h$, where the last ${}^h$ stands for the horizontal lift associated to $\gamma^M$. We then know from (\ref{dec2}) that
\[
[({\breve{\mathbf X}}^H)_\alpha,({\breve{\mathbf X}}^H)_\beta] = [(\breve{X}_\alpha)^h,(\breve{X}_\beta)^h] = [\breve{X}_\alpha,\breve{X}_\beta]^h - (K^M(\breve{X}_\alpha,\breve{X}_\beta))^v.
\]
Here $K^M$ stands, as before, for the curvature of $\gamma^M$, taking values in the vertical distribution of $\pi: M\to M/G$. We can therefore conclude that
\begin{prop}
The horizontal lift ${\breve{\mathbf X}}^H$ corresponding to a simple $k$-connection is integrable if and only if $\breve{\mathbf X}$ is integrable and $K^M(\breve{X}_\alpha,\breve{X}_\beta)=0$, for all choices of $\alpha$ and $\beta$.
\end{prop}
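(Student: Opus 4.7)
The plan is to deduce this proposition directly from the bracket identity
\[
[({\breve{\mathbf X}}^H)_\alpha,({\breve{\mathbf X}}^H)_\beta] = [\breve{X}_\alpha,\breve{X}_\beta]^h - (K^M(\breve{X}_\alpha,\breve{X}_\beta))^v
\]
that has already been displayed just before the proposition, combined with the characterisation of integrability recalled at the end of Section~\ref{sec2}. Recall that a $k$-vector field ${\mathbf Y}=(Y_\alpha)$ on $M$ is integrable if and only if all the Lie brackets $[Y_\alpha,Y_\beta]$ vanish (equivalently, its associated connection has vanishing curvature, cf.\ (\ref{condInteg1})). So I would first rewrite the question as asking whether $[({\breve{\mathbf X}}^H)_\alpha,({\breve{\mathbf X}}^H)_\beta]=0$ for all $\alpha,\beta$.

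Next, I would exploit that, at each point of $M$, the tangent space decomposes as the direct sum of the horizontal and vertical subspaces determined by the principal connection $\gamma^M$. The right-hand side of the displayed identity is already written as a horizontal vector field $[\breve{X}_\alpha,\breve{X}_\beta]^h$ plus a vertical one $-(K^M(\breve{X}_\alpha,\breve{X}_\beta))^v$, so its vanishing is equivalent to the simultaneous vanishing of each summand. Since the horizontal lift map $(\cdot)^h:\vectorfields{M/G}\to\vectorfields{M}$ is injective (it is a splitting of the short exact sequence (\ref{seq})), the condition $[\breve{X}_\alpha,\breve{X}_\beta]^h=0$ is equivalent to $[\breve{X}_\alpha,\breve{X}_\beta]=0$, i.e.\ to the integrability of $\breve{\mathbf X}$. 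The second summand vanishes precisely when $K^M(\breve{X}_\alpha,\breve{X}_\beta)=0$, using that the vertical lift $(\cdot)^v$ of sections of the adjoint bundle is likewise injective.

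Putting these two observations together for every pair $(\alpha,\beta)$ yields the stated equivalence. There is no real obstacle here: the computation establishing the bracket identity has already been carried out via (\ref{dec2}), and the rest is a linear-algebraic observation about the horizontal/vertical decomposition. The only point that needs a brief verification is that (\ref{dec2}) applies to the horizontal lifts $({\breve{\mathbf X}}^H)_\alpha$, which in the `simple' case coincide with the genuine principal-connection horizontal lifts $(\breve{X}_\alpha)^h$; this is immediate from the coordinate expression $({\breve{\mathbf X}}^H)_\alpha = X^i_\alpha(\partial/\partial x^i - \Gamma^a_i\,\partial/\partial x^a)$ recorded just above the proposition.
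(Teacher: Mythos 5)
Your proposal is correct and follows essentially the same route as the paper: the paper also derives the proposition directly from the bracket identity obtained via (\ref{dec2}), the characterisation of integrability as $[Y_\alpha,Y_\beta]=0$, and the horizontal/vertical splitting induced by $\gamma^M$. The extra details you supply (injectivity of the horizontal and vertical lift maps, and the verification that $({\breve{\mathbf X}}^H)_\alpha=(\breve{X}_\alpha)^h$ in the simple case) are exactly the points the paper leaves implicit.
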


\subsection{Reconstruction method} \label{sec62}

We will suppose throughout this section that $\Phi$ defines a free and proper action, and we will denote $\pi^M$ for the projection $M\to N= M/G$.
We will also assume that we have a principal $k$-connection $\gamma^k$ (or $\omega^k: T^1_k M \to {\mathfrak g}^k$)  at our disposal and we will assume that  $\breve{\mathbf X}$ and ${\breve{\mathbf X}}^H$ are both integrable.

Let $\breve\phi$ be a given integral section of the reduced vector field $\breve{\mathbf X}$.
\begin{definition}
A map ${\breve\phi}_H:\rk\to M$ is called a horizontal lift of $\breve\phi$ if (1) $\pi\circ{\breve\phi}_H={\breve\phi}$ and (2) ${\breve\phi}_H$ is an integral section of  ${\breve{\mathbf X}}^H$. \end{definition}
In local coordinates,  we denote $\breve\phi(t)=(x^i = \phi^i(t))$, and ${\breve\phi}_H(t)=(\phi^i(t),\phi_H^a(t))$. Using  (\ref{horbasis}) and that  $\breve\phi$ is an   integral section of the reduced vector field $\breve{\mathbf X}$ we obtain
\begin{equation}\label{horlift}
\frac{\partial  \phi_H^a}{\partial t^\alpha} = - B^{a\beta}_{i\alpha}  \frac{\partial  \phi^i}{\partial t^\beta}.
\end{equation}
This relation is equivalent with $\omega^k({\breve\phi}^{(1)}_H)=0$, where ${\breve\phi}^{(1)}_H$ stands for the first prolongation of ${\breve\phi}_H$ (see Section~\ref{sec2}), since ${\breve\phi}^{(1)}_H(t)=\breve{\mathbf X}^H({\breve\phi}_H(t))$.

Assume now given a map $g:\rk\to G$, then
$$\begin{array}{cccl}
g^{(1)}: & \rk & \to & T^1_kG \\ \noalign{\medskip}
              &   t  & \to  & g^{(1)}(t)=(\ldots,T_tg(\derpar{}{t^\alpha}\Big\vert_{t}),\ldots)
\end{array}$$
and for each $\alpha$
$$T_{g(t) }L_{ g^{-1}(t) }\left(   T_tg(\derpar{}{t^\alpha}\Big\vert_{t})\right)\in \g\, .$$

We denote by $g^{-1}(t) g^{(1)}(t)$ the element on $\g^k$ defined by
$$
\left(  T_{g(t) }L_{ g^{-1}(t) }\left(   T_tg(\derpar{}{t^1}\Big\vert_{t})\right), \ldots, T_{g(t) }L_{ g^{-1}(t) }\left(   T_tg(\derpar{}{t^k}\Big\vert_{t})\right)\right).
$$

\begin{lem}
When two maps $\phi,\psi: \r^k \to M$ are related by $\phi(t)=g(t)\psi(t)$, for some $g:\rk\to G$,  their prolongations satisfy
\begin{equation}\label{phi1}
\phi^{(1)} = g \left(\psi^{(1)} + (g^{-1} g^{(1)})_M \circ \psi\right).
\end{equation}
 which means that
$$
\phi^{(1)}_\alpha(t) = T_{\psi(t)}\Phi_{g(t)} \left[  \psi^{(1)}_\alpha(t) +(T_{g(t) }L_{ g^{-1}(t) }\left(    g^{(1)}_\alpha(t) )\right)_M(\psi(t) ) \right]\, .
$$
\end{lem}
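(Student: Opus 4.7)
The plan is to compute, component-by-component, the prolongation $\phi^{(1)}_\alpha(t) = T_t\phi(\partial/\partial t^\alpha|_t)$ by viewing $\phi$ as the composition
\[
\phi(t) \;=\; \Phi(g(t),\psi(t)) \;=\; \Phi\circ(g,\psi)(t),
\]
where $\Phi:G\times M \to M$ is the given action. The chain rule immediately gives
\[
T_t\phi\!\left(\frac{\partial}{\partial t^\alpha}\Big|_t\right) \;=\; T_{(g(t),\psi(t))}\Phi\!\left(T_tg\!\left(\frac{\partial}{\partial t^\alpha}\Big|_t\right),\; T_t\psi\!\left(\frac{\partial}{\partial t^\alpha}\Big|_t\right)\right),
\]
so everything reduces to decomposing the differential of $\Phi$ at $(g(t),\psi(t))$ into its two partial differentials with respect to the $G$-factor and the $M$-factor. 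The $M$-partial is clear: it contributes $T_{\psi(t)}\Phi_{g(t)}(\psi^{(1)}_\alpha(t))$, which is the second summand on the right-hand side of the claimed formula.

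The one step that requires care is identifying the $G$-partial, i.e.\ $T_{g(t)}\Phi^{\psi(t)}\big(T_tg(\partial/\partial t^\alpha|_t)\big)$, where $\Phi^{m}:G\to M$ is $h\mapsto h\cdot m$. The standard manoeuvre is to write an arbitrary tangent vector $X_g\in T_gG$ in the form $X_g = T_eL_g(\xi)$ with $\xi := T_gL_{g^{-1}}(X_g)\in \mathfrak g$, and represent it by the curve $s\mapsto g\exp(s\xi)$. Then
\[
T_g\Phi^{m}(X_g) \;=\; \frac{d}{ds}\bigg|_{s=0}\!\big(g\exp(s\xi)\big)\!\cdot m \;=\; \frac{d}{ds}\bigg|_{s=0}\Phi_g\!\big(\exp(s\xi)\!\cdot m\big) \;=\; T_{m}\Phi_g\big(\xi_M(m)\big),
\]
where in the last step we used the very definition of the fundamental vector field $\xi_M$. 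Specialising to $m=\psi(t)$ and $\xi = T_{g(t)}L_{g^{-1}(t)}\big(T_tg(\partial/\partial t^\alpha|_t)\big)$ then yields the first summand on the right-hand side.

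Combining the two partials through the bilinearity of $T_{(g,\psi)}\Phi$ and factoring out $T_{\psi(t)}\Phi_{g(t)}$ gives
\[
\phi^{(1)}_\alpha(t) \;=\; T_{\psi(t)}\Phi_{g(t)}\!\left[\psi^{(1)}_\alpha(t) \;+\; \Big(T_{g(t)}L_{g^{-1}(t)}\big(g^{(1)}_\alpha(t)\big)\Big)_{\!M}(\psi(t))\right],
\]
which is exactly the component form of the stated identity; repackaging the $k$ components in a single $k$-vector-field equation produces (\ref{phi1}). No real obstacle is anticipated: the argument is a routine chain rule computation together with the classical identification of the $G$-derivative of the action with the fundamental vector field via left translation.
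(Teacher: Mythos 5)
Your proof is correct and follows essentially the same route as the paper: a Leibniz/chain-rule splitting of $T_t\phi$ into the two partial differentials of $\Phi$, followed by the identity $T_g\Phi^{m}(X_g)=T_m\Phi_g(\xi_M(m))$ with $\xi=T_gL_{g^{-1}}(X_g)$, which the paper simply cites as well known (from Abraham--Marsden) and you rederive via the curve $s\mapsto g\exp(s\xi)$. The only (cosmetic) slip is that you label the $M$-partial as the ``second'' summand and the $G$-partial as the ``first,'' whereas in the displayed formula $\psi^{(1)}_\alpha$ appears first and the fundamental vector field term second; this has no bearing on the validity of the argument.
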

\begin{proof}
The following property is well-known (see e.g.\ \cite{AM}). Let $v_h\in T_hG$ and $m\in M$. Set $\eta= h^{-1}v_h \in {\mathfrak g}$. Then
\[
T_h\Phi_m (v_h) = T_m \Phi_h (\eta_M(m)).
\]
By using the Leibniz rule and the above property, we obtain
\begin{eqnarray*}
\phi^{(1)}_\alpha (t)&=&
T_t\phi \left(\frac{\partial}{\partial t^\alpha}\Big|_t\right) = T_{\psi(t)} \Phi_{g(t)} \left(T_t\psi\left(\frac{\partial}{\partial t^\alpha}\Big|_t \right)\right) + T_{g(t)} \Phi_{\psi(t)} \left(T_t g\left(\frac{\partial}{\partial t^\alpha}\Big|_t \right)\right)
\\
 &=& T_{\psi(t)} \Phi_{g(t)} \left( T_t\psi\left(\frac{\partial}{\partial t^\alpha}\Big|_t \right) + (\xi_\alpha)_M \circ\psi\right).
\end{eqnarray*}
 Here, $\xi_\alpha$ stands for  $T_{g(t) }L_{ g^{-1}(t) }(   T_tg(\derpar{}{t^\alpha}\Big\vert_{t}))$, the  $\alpha$th component of $g^{-1} g^{(1)}: \rk \to {\mathfrak g}^k$. All the components together therefore lead to the desired property.
\end{proof}

The reconstruction problem is the following one. What are the conditions on $g(t)$ such that $\phi(t) = g(t)\breve{\phi}_H(t)$ is an integral section of $\mathbf X$? For that to be true, we must have that:
\[
\phi^{(1)} = {\mathbf X} \circ \phi
\]
or, in view of the property (\ref{phi1}), the invariance of ${\mathbf X}$ and the freeness of the action,
\begin{equation}\label{Ellisexpr}
{\breve\phi}_H^{(1)} + (g^{-1} g^{(1)})_M \circ {\breve\phi}_H = {\mathbf X} \circ {\breve\phi}_H.
\end{equation}
After applying the connection form $\omega^k$ on both sides we get that $g(t)$ must satisfy
\begin{equation} \label{receq}
g^{-1} g^{(1)}= \omega^k ({\mathbf X} \circ {\breve\phi}_H).
\end{equation}
This PDE in $g$ will be called the {\sl reconstruction equation}. If it has a solution $g(t)$, an integral section $\phi(t)$ for ${\mathbf X}$ may be reassembled from an integral section $\breve\phi(t)$ of $\breve{\mathbf X}$. We have shown:

\begin{prop}
Let ${\mathbf X}$ be an integrable and invariant $k$-vector field on $T ^1_kQ$ with integrable reduced $k$-vector field  $\breve{\mathbf X}$.
Let  $\breve{\phi}$ be an integral section of $\breve{\mathbf X}$
and $\breve{\phi}_H\colon \r^k\to M$ a horizontal lift of $\breve{\phi}$.
  If $g \colon \r^k \to G$ is a solution to the reconstruction equation (\ref{receq}), then $\phi\colon \r^k\to M$ defined by
$$\phi(t) = g(t)\breve{\phi}_H(t)$$
is an integral section of ${\mathbf X}$.\end{prop}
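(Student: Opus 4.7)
The plan is to run backwards the computation that already precedes the statement of the proposition, so the proof is essentially a bookkeeping exercise once the right pieces are assembled. The goal is to verify the defining equation $\phi^{(1)} = {\mathbf X}\circ\phi$ for $\phi(t)=g(t)\breve\phi_H(t)$. My first step would be to invoke the Lemma giving (\ref{phi1}) with $\psi=\breve\phi_H$ to rewrite
\[
\phi^{(1)}(t) = T_{\breve\phi_H(t)}\Phi_{g(t)}\Bigl(\breve\phi_H^{(1)}(t) + \bigl(g^{-1}(t)g^{(1)}(t)\bigr)_M(\breve\phi_H(t))\Bigr).
\]
On the other hand, by the $G$-invariance of ${\mathbf X}$ we have ${\mathbf X}(\phi(t)) = {\mathbf X}(g(t)\breve\phi_H(t)) = T\Phi_{g(t)}\,{\mathbf X}(\breve\phi_H(t))$. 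Since the action is free, $T\Phi_{g(t)}$ is injective on fibres; hence the identity we wish to prove is equivalent to the pointwise equation (\ref{Ellisexpr}),
\[
\breve\phi_H^{(1)} + \bigl(g^{-1}g^{(1)}\bigr)_M\circ\breve\phi_H = {\mathbf X}\circ\breve\phi_H .
\]

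The second step is to split both sides according to the horizontal/vertical decomposition induced by the principal $k$-connection $\gamma^k$ (equivalently $\omega^k$). Because $\breve\phi_H$ is by definition an integral section of the horizontal lift $\breve{\mathbf X}^H$, its first prolongation $\breve\phi_H^{(1)}$ is horizontal and equals $\breve{\mathbf X}^H\circ\breve\phi_H$. Also, ${\mathbf X}$ projects onto $\breve{\mathbf X}$, so its horizontal part is precisely $\breve{\mathbf X}^H$ (uniqueness of horizontal lift), and the horizontal parts of both sides of the equation coincide automatically. What remains is the vertical comparison
\[
\bigl(g^{-1}g^{(1)}\bigr)_M\circ\breve\phi_H \;=\; \omega^k({\mathbf X})\circ\breve\phi_H ,
\]
where I am using the identification of $\omega^k({\mathbf X})$ with the fundamental $k$-vector field of $\vartheta^k({\mathbf X})\in\g^k$.

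The third step translates this vertical identity into an equation in $\g^k$. Since the action is free, the fundamental-vector-field map $\g^k\to (VM)^k$ is injective at every point of $M$, so the displayed equality is equivalent to
\[
g^{-1}(t)g^{(1)}(t) \;=\; \vartheta^k\bigl({\mathbf X}(\breve\phi_H(t))\bigr) \;=\; \omega^k({\mathbf X}\circ\breve\phi_H)(t),
\]
which is precisely the reconstruction equation (\ref{receq}) imposed on $g$. Tracing the equivalences backwards now yields $\phi^{(1)} = {\mathbf X}\circ\phi$, as required.

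The argument is essentially a direct unwinding of definitions; no part of it is hard in the usual sense. The only step that requires some care, and which I would flag as the one most likely to trip up the bookkeeping, is the clean separation of horizontal and vertical contributions when applying $\omega^k$ to both sides of (\ref{Ellisexpr}) along $\breve\phi_H$ — one has to use consistently that $\omega^k(\breve\phi_H^{(1)})=0$ (the horizontal-lift property), that $\omega^k((g^{-1}g^{(1)})_M\circ\breve\phi_H)=g^{-1}g^{(1)}$ (the defining property of $\vartheta^k$ on fundamental $k$-vector fields), and that freeness of the action allows one to cancel the fundamental-vector-field map to get an equality in $\g^k$ rather than in $T^1_kM$.
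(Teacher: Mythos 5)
Your proposal is correct and follows essentially the same route as the paper: the paper derives the proposition in the text immediately preceding it by using the prolongation lemma (\ref{phi1}) together with invariance of ${\mathbf X}$ to reduce the condition $\phi^{(1)}={\mathbf X}\circ\phi$ to (\ref{Ellisexpr}), and then applies $\omega^k$ to obtain (\ref{receq}). Your explicit check that the horizontal components of (\ref{Ellisexpr}) match automatically (via $\breve\phi_H^{(1)}=\breve{\mathbf X}^H\circ\breve\phi_H$ and uniqueness of the horizontal lift of $\breve{\mathbf X}$) is a welcome elaboration of the sufficiency direction that the paper leaves implicit, but it is the same argument.
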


In the next section we will consider again the case where ${\mathbf X} = {\mathbf \Gamma}$ is given by Lagrangian field equations. For completeness, we mention that there are also reconstruction equations in the formalism of the paper \cite{Ellis}. Their reconstruction PDE in expression (3.32) can best be compared with our expression (\ref{Ellisexpr}).

\subsection{The mechanical $k$-connection} \label{sec63} With respect to the notations of the previous paragraphs we  take again $M=T^1_kQ$ and $N=(T^1_kQ)/G$ and ${\mathbf X} = {\mathbf \Gamma}$ a Lagrangian {\sc sopde}. In order for the reconstruction method to work, we need a  $k$-connection on $\pi_{T^1_kQ}$. We now show how to construct one from the given Lagrangian.

The vertical space $V^k$ in the short exact sequence
\[
0 \to V^k \to T^1_k (T^1_kQ) \to T^1_kQ\times_{T^1_kQ/G} T^1_k(T^1_kQ/G) \to 0
\]
can now be identified with  $T^1_kQ \times {\mathfrak g}^k$. Let ${\mathbf v} = (q;v_1,\ldots v_k)\in T^1_k Q$ be such that $\tau^1_Q({\mathbf v}) =q$, where $\tau^1_Q: T^1_k Q \to Q$.
The set of vertical elements is spanned by elements of the form $\xi^a_\alpha {\widetilde E}^C_a ({\mathbf v})$, i.e.\ couples of the type $(\xi_1,\ldots,\xi_k)_{T^1_kQ}({\mathbf v}) $.

 Given an invariant Lagrangian $L\in\cinfty{T^1_kQ}$, we will show below how to define a splitting $\gamma^k$ (or, equivalently $\omega^k$) of this sequence, under a certain regularity assumption for the Lagrangian.

Consider the $k$-symplectic forms $\omega^\alpha_L$ of $L$. We define linear maps
\[ \begin{array}{ccccl}
g_{{\mathbf v}}^{\alpha,\beta} & : & T_qQ \times    T_qQ & \to & \r
\\ \noalign{\medskip}
  &  &    (u_q,w_q) & \to &    g_{{\mathbf v}}^{\alpha,\beta}(u_q,w_q)= \omega^\alpha_L({\mathbf v})(X^C({\mathbf v}),Y^{V_\beta}({\mathbf v})),
\end{array}\]
where $X,Y$ are vector fields on $Q$ for which $X(q)=u_q$ and $Y(q)=w_q$.

In the natural coordinates  $(q^A,u^A_\alpha)$ on $\tkq$, the coordinate expression of $g_{{\mathbf v}}^{\alpha,\beta}$ is
\[
g_{{\mathbf v}}^{\alpha,\beta} = \frac{\partial^2 L}{\partial u^A_\alpha \partial u^B_\beta}\Big\vert_{ {\mathbf v}} dq^A({\mathbf v}) \otimes dq^B({\mathbf v}).
 \]
 In what follows, we will use the following notations for the coefficients with respect to the basis $\{X_i,{\tilde E}_a\}$ of vector fields on $Q$:
\[
g^{\alpha,\beta}_{ij}({\mathbf v}) =g_{{\mathbf v}}^{\alpha,\beta} (X_i(q),X_j(q)), \quad g^{\alpha,\beta}_{ia}({\mathbf v}) =g_{{\mathbf v}}^{\alpha,\beta} (X_i(q),{\widetilde E}_a(q)), \quad
g^{\alpha,\beta}_{ab}({\mathbf v}) =g_{{\mathbf v}}^{\alpha,\beta} ({\widetilde E}_a(q),{\widetilde E}_b(q)).
\]

Then:
\begin{equation}\label{hess}
g^{\alpha\beta}_{ij} = {X}_i^{V_\alpha} ({X}_j^{V_\beta}(L)),\quad g^{\alpha\beta}_{ia} = {X}_i^{V_\alpha} ({\widetilde E}_b^{V_\beta}(L)), \quad g^{\alpha\beta}_{ab} = {\widetilde E}_a^{V_\alpha} ({\widetilde E}_b^{V_\beta}(L)).
\end{equation}

\begin{definition}
A Lagrangian $L$ is $G$-regular if the matrix $(g^{\alpha\beta}_{ab})$ is non-singular.
\end{definition}
Remark that, in view of Proposition~\ref{rank}, this condition is equivalent with saying that the matrix $\left( \ds \spd{L}{u^a_\alpha}{u^b_\beta}\right)$ is non-singular everywhere.

The maps $g_{{\mathbf v}}^{\alpha,\beta}$ are not completely symmetric (but we have $g_{\mathbf v}^{\alpha,\beta}(u_q,w_q)= g_{{\mathbf v}}^{\beta,\alpha}(w_q,u_q)$). They give rise to the symmetric map
\[ \begin{array}{ccccl}
g_{{\mathbf v}} & : &  (\tkq)_q \times    (\tkq)_q & \to & \r
\\ \noalign{\medskip}
  &  &   ({\mathbf u}= (q;u_{\alpha}) , {\mathbf w}= (q;w_{\beta}))  & \to &    g_{{\mathbf v}}({\mathbf u},{\mathbf w})= g_{{\mathbf v}}^{\alpha,\beta}(u_{\alpha},w_{\beta})
 \end{array}\]
(sum over $\alpha,\beta$). We will next define the mechanical $k$-connection $\Omega^k: T^1_k T^1_k Q \to {\mathfrak g}^k$.

\begin{definition}
 An element  ${\mathbf W}=( W_{1} ,\ldots ,W_{k})\in T^1_k(\tkq)$ such that $\tau^1_{\tkq}({\mathbf W}) ={\mathbf   v}$  is said to be horizontal for the mechanical $k$-connection if it satisfies
\[
g_{{\mathbf v}}(T^1_k \tau^1_Q ({\mathbf W}), (\xi_1,\ldots,\xi_k)_Q(q)) =0,
\]
for all tuples $(\xi_\alpha)\in {\mathfrak g}^k$.
\end{definition}
This is equivalent with
$$\begin{array}{l}
g_{{\mathbf v}}^{\alpha,\beta}\left( T_{ {\mathbf v}}(\tau^1_Q)  (  (W_{\alpha}) ),\xi_{{\beta}_Q}(q)  \right)=0.
\end{array}
$$
Since each element $W_{\alpha}$ can be written in the lifted frame of $\{X_i,{\widetilde E}_a\}$ as
\[
W_{\alpha} = W^i_\alpha X^C_i ({\mathbf v}) + W^a_\alpha {\widetilde E}^C_a ({\mathbf v}) + Z^i_{\alpha\beta} X^{V_\beta}_i ({\mathbf v}) + Z^a_{\alpha\beta} {\widetilde E}^{V_\beta}_a ({\mathbf v})
\]
the condition for ${\mathbf W}$ to be horizontal becomes
\[
g^{\alpha\beta}_{ib}W^i_\alpha + g^{\alpha\beta}_{ab}W^a_\alpha=0.
\]
If we assume that the Lagrangian is $G$-regular, we can conclude that a horizontal ${\mathbf W}=(W_{\alpha})$ takes the form
\[
W_{\alpha} = W^i_\gamma H^\gamma_{i\alpha} ({\mathbf v}) + Z^i_{\alpha\beta} X^{V_\beta}_i ({\mathbf v}) + Z^a_{\alpha\beta} {\widetilde E}^{V_\beta}_a ({\mathbf v}),
\]
where $H^\gamma_{i\alpha} = \delta^\gamma_\alpha X^C_i - {\tilde B}^{\gamma a}_{\alpha i} {\widetilde E}^C_a$ with ${\tilde B}^{\gamma a}_{\alpha i} = g^{b a}_{\beta\alpha} g^{\gamma\beta}_{ib}$.

From this, we can conclude that every element of $T^1_k( T^1_k Q)$ can be written in a `horizontal' and a `vertical part'. Indeed if
\[
W_{\alpha} = W^i_\alpha X^C_i ({\mathbf v}) + W^a_\alpha {\widetilde E}^C_a ({\mathbf v}) + Z^i_{\alpha\beta} X^{V_\beta}_i ({\mathbf v}) + Z^a_{\alpha\beta} {\widetilde E}^{V_\beta}_a ({\mathbf v})
\]
then $W_{\alpha} = \mathbf{HW}_\alpha + \mathbf{VW}_\alpha$, with
\[
\mathbf{HW}_\alpha = W^i_\gamma H^\gamma_{i\alpha} ({\mathbf v})   + Z^i_{\alpha\beta} X^{V_\beta}_i ({\mathbf v}) + Z^a_{\alpha\beta} {\widetilde E}^{V_\beta}_a ({\mathbf v}), \quad \mathbf{VW}_\alpha = (W^a_\alpha  + W^i_\gamma {\tilde B}^{\gamma a}_{\alpha i} ) {\widetilde E}^C_a ({\mathbf v}) .
\]
Remark that the expressions of  $\mathbf{HW}_\alpha$ and $\mathbf{VW}_\alpha$ contain more than just the components of the $\alpha$'th vector $W_{\alpha}$. The mechanical connection is therefore not of simple type.

The corresponding connection map $\Omega^k: T^1_k T^1_k Q \to {\mathfrak g}^k$, is the one that has the property that
\[
\Omega^k (\mathbf{HW}) =0, \quad \Omega^k ((\xi_1,\ldots,\xi_k)_{T^1_kQ}({\mathbf v}) ) = (\xi_1,\ldots,\xi_k).
\]

If we write the {\sc sopde} $k$-vector field ${\mathbf \Gamma}$ in terms of the frame $\{X_i,{\widetilde E}_a\}$ as
\[
\Gamma_\alpha = v^i_\alpha X^C_i   + v^a_\alpha {\widetilde E}^C_a   +
(\widetilde{\Gamma}_\alpha )^j_\beta X_j^{ V_\beta}+ (\widetilde{\Gamma}_\alpha )^a_\beta
\widetilde{E}_a^{ V_\beta},
\]
then
\begin{equation} \label{mg} \mathbf{H\Gamma}_\alpha =  - ( v^i_\gamma {\tilde B}^{\gamma a}_{\alpha i} ) {\widetilde E}_a^C + v^i_\alpha X^C_i    +
(\widetilde{\Gamma}_\alpha )^j_\beta X_j^{ V_\beta}+ (\widetilde{\Gamma}_\alpha)^a_\beta
\widetilde{E}_a^{ V_\beta},\quad
\mathbf{V\Gamma}_\alpha = (v^a_\alpha  + v^i_\gamma {\tilde B}^{\gamma a}_{\alpha i} ) {\widetilde E}^C_a.
\end{equation}

\begin{prop}
The mechanical $k$-connection of an invariant $G$-regular Lagrangian is  a principal $k$-connection on the principal bundle $\pi: T^1_kQ \to (T^1_k Q)/G$.
\end{prop}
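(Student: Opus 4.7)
My strategy is to verify the equivariance $\omega^k \circ T^1_k\Phi_g^{\tkq} = T^1_k\Phi_g^{\tkq} \circ \omega^k$ by showing separately that the horizontal and vertical distributions underlying the splitting are $G$-stable. Since the vertical distribution consists of fundamental vector fields, its stability is automatic; the substantive point is stability of the horizontal distribution, which I would obtain from the equivariance of the bilinear forms $g_{\mathbf v}^{\alpha,\beta}$.

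The key observation is that each $k$-symplectic form $\omega_L^\alpha$ is $G$-invariant on $\tkq$. This follows because the canonical $k$-tangent structure $S^\alpha$ is natural: $(T^1_k\varphi)^* S^\alpha = S^\alpha$ for any diffeomorphism $\varphi$ of $Q$, and in particular for $\varphi = \Phi_g$. Hence the $G$-invariance of $L$ propagates through $\theta_L^\alpha = dL \circ S^\alpha$ and then through $\omega_L^\alpha = -d\theta_L^\alpha$. Using also the naturality $T\Phi_g^{\tkq} \circ X^C = ((\Phi_g)_*X)^C \circ \Phi_g^{\tkq}$ (and analogously for $V_\beta$-lifts), the defining formula $g_{\mathbf v}^{\alpha,\beta}(u_q,w_q) = \omega_L^\alpha({\mathbf v})(X^C({\mathbf v}), Y^{V_\beta}({\mathbf v}))$ at once yields the tensorial equivariance
\[
g_{\Phi_g^{\tkq}({\mathbf v})}^{\alpha,\beta}(T_q\Phi_g\, u_q,\, T_q\Phi_g\, w_q) = g_{\mathbf v}^{\alpha,\beta}(u_q,w_q).
\]

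If ${\mathbf W} \in T^1_k(\tkq)$ is horizontal at ${\mathbf v}$, then the equivariance of $\tau^1_Q$ gives $T^1_k\tau^1_Q \circ T^1_k\Phi_g^{\tkq}= T^1_k\Phi_g \circ T^1_k\tau^1_Q$, and the standard identity $T_q\Phi_g(\xi_Q(q)) = (\mathrm{Ad}_g \xi)_Q(\Phi_g(q))$ lets me rewrite the horizontality test of $T^1_k\Phi_g^{\tkq}({\mathbf W})$ against an arbitrary $(\eta_\alpha) \in \g^k$ in terms of a horizontality test of ${\mathbf W}$ against $(\mathrm{Ad}_{g^{-1}}\eta_\alpha) \in \g^k$; this vanishes by assumption. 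Since $\mathrm{Ad}_{g^{-1}}$ is a bijection of $\g$, $T^1_k\Phi_g^{\tkq}({\mathbf W})$ is horizontal at $\Phi_g^{\tkq}({\mathbf v})$, so the horizontal subbundle is $G$-stable. Combined with the stability of the vertical subbundle and the equivariance $T^1_k\Phi_g^{\tkq}((\xi_\alpha)_{\tkq}({\mathbf v})) = (\mathrm{Ad}_g \xi_\alpha)_{\tkq}(\Phi_g^{\tkq}({\mathbf v}))$, this gives precisely the condition that $\gamma^k$ be principal. The only real obstacle is careful bookkeeping of the various lifts (of $Q$-diffeomorphisms, of vector fields, and of $\tau^1_Q$) and the adjoint action conventions; no deeper issue arises.
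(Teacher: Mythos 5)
Your argument is correct, but it takes a genuinely different route from the paper's. You establish the finite, group-level equivariance $\omega^k(g\mathbf{W})=g\,\omega^k(\mathbf{W})$ directly: the naturality of the $k$-tangent structures $S^\alpha$ under prolonged diffeomorphisms makes $\theta_L^\alpha$ and $\omega_L^\alpha$ invariant for invariant $L$, hence makes the fibre metrics $g^{\alpha,\beta}_{\mathbf v}$ equivariant, and you then transport the horizontality test $g_{\mathbf v}(T^1_k\tau^1_Q(\mathbf{W}),(\xi_\alpha)_Q(q))=0$ along $\Phi_g$ using $T\Phi_g\circ\xi_Q=(\mathrm{Ad}_g\xi)_Q\circ\Phi_g$ and the bijectivity of $\mathrm{Ad}_{g^{-1}}$. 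The paper instead argues infinitesimally: it checks ${\mathcal L}_{\widetilde{E}_a^C}\omega^k=0$ by a frame computation, first proving invariance of the horizontal frame vector fields $H^\gamma_{i\alpha}$ via the explicit transformation laws $\widetilde{E}^C_d(g^{\alpha\beta}_{ab})=C^e_{db}g^{\alpha\beta}_{ae}+C^e_{da}g^{\alpha\beta}_{eb}$, $\widetilde{E}^C_d(g^{\alpha\beta}_{ib})=C^e_{db}g^{\alpha\beta}_{ie}$, the induced formula for the inverse matrix, and the resulting identity $\widetilde{E}^C_a({\tilde B}^{\gamma d}_{\alpha i})={\tilde B}^{\gamma b}_{\alpha i}C^d_{ba}$, and then verifying that ${\mathcal L}_{\widetilde{E}_a^C}$ preserves horizontal and vertical $k$-vector fields. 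Your route is coordinate-free, avoids the matrix-inverse manipulations, and yields equivariance under all of $G$ at once (the infinitesimal check only integrates up under the standing connectedness assumption), at the price of leaving the bookkeeping of lift naturality implicit; the paper's computation is heavier but produces as by-products the invariance of the frame $H^\gamma_{i\alpha}$ and the derivatives of the Hessian blocks, which are reused in the reconstruction formulas. One small remark: the vertical equivariance you obtain is $\vartheta^k(g\mathbf{v})=(\mathrm{Ad}_{g})^k\vartheta^k(\mathbf{v})$, which is what the tensorial condition $\omega^k(g\mathbf{v})=g\,\omega^k(\mathbf{v})$ actually requires for a left action; the $\mathrm{Ad}_{g^{-1}}$ appearing in the paper's definition of a principal $k$-connection is a convention slip, so you are proving the operative condition and no gap arises.
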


\begin{proof}
The condition we need to check is ${\mathcal L}_{\xi_{T^1_kQ}} \omega^k={\mathcal L}_{\xi_{Q}^C} \omega^k = 0$, where $\omega^k$ is the connection (1,1)-$k$-tensor field that is associated to $\Omega^k$, and the Lie derivative is the one we had defined in expression (\ref{lieder}). We first check that the vector fields $H^\gamma_{i\alpha}$ (see above) are invariant vector fields on $T^1_kQ$, i.e.\ $[{\widetilde{E}_a^C},H^\gamma_{i\alpha}]=0$. This will be the case if we can show that
\[
{\widetilde E}^C_a({\tilde B}^{\gamma d}_{\alpha i}) = {\tilde B}^{\gamma b}_{\alpha i} C^d_{ba}.
\]
This relation easily follows because, in view of relation (\ref{hess}) and the invariance of the Lagrangian,  one can show that
\[
{\widetilde E}^C_d( g^{\alpha\beta}_{ab}) = C^e_{db}g^{\alpha\beta}_{ae} + C^e_{da}g^{\alpha\beta}_{eb}, \qquad {\widetilde E}^C_d( g^{\alpha\beta}_{ib}) = C^e_{db}g^{\alpha\beta}_{ie}.
\]
Given that ${\widetilde E}^C_d(g^{\alpha\beta}_{ab}g^{ac}_{\alpha\gamma})={\widetilde E}^C_d(\delta^\beta_\gamma \delta^b_c)=0$ we also obtain
\[
{\widetilde E}^C_d( g^{ec}_{\epsilon\gamma}) = - {\widetilde E}^C_d( g^{\alpha\beta}_{ab}) g^{ac}_{\alpha\gamma} g^{eb}_{ \epsilon\beta}.
\]
for the inverse matrix $g^{ab}_{\alpha\beta}$.  Using these properties and the expression ${\tilde B}^{\gamma d}_{\alpha i}=g^{bd}_{\beta\alpha}g^{\gamma \beta}_{ib}$  we obtain the desired result.

Assume now that ${\mathbf H}$ is a horizontal $k$-vector field on $T^1_kQ$. Then
 $({\mathcal L}_{{\widetilde E}_a^C} \omega^k) ({\mathbf H})=-\omega^k({\mathcal L}_{\widetilde{E}_a^C}{\mathbf H} )$. If we set $H_\alpha=W^i_\gamma H^\gamma_{i\alpha}   + Z^i_{\alpha\beta} X^{V_\beta}_i + Z^b_{\alpha\beta} {\widetilde E}^{V_\beta}_b$, we easily see that
\[
({\mathcal L}_{\widetilde{E}_a^C}{\mathbf H} )_\alpha = [{\widetilde{E}_a^C},H_\alpha] = {\widetilde{E}_a^C}(W^i_\gamma) H^\gamma_{i\alpha}   + {\widetilde{E}_a^C}(Z^i_{\alpha\beta}) X^{V_\beta}_i + {\widetilde{E}_a^C}(Z^b_{\alpha\beta}) {\widetilde E}^{V_\beta}_b - Z^b_{\alpha\beta} C_{ab}^d {\widetilde E}^{V_\beta}_d,
\]
which are the components of again a horizontal $k$-vector field. When $\omega^k$ is applied to it, we will get zero and thus is $({\mathcal L}_{{\widetilde E}_a^C} \omega^k) ({\mathbf H})=0$. With the same reasoning one may show that
 $({\mathcal L}_{{\widetilde E}_a^C} \omega^k) ({\mathbf V})=0$ for all vertical $k$-vector fields ${\mathbf V}$ on $T^1_kQ$.
\end{proof}

 Since ${\mathbf \Gamma}$ is $G$-invariant, and since the mechanical connection is principal, the horizontal component  $\mathbf{H\Gamma}$ of ${\mathbf \Gamma}$ is the horizontal lift $\breve{\mathbf \Gamma}^H$ of the reduced $k$-vector field $\breve{\mathbf \Gamma}$. By definition the horizontal lift of an integral section $(q^i=\phi^i(t), v^i_\alpha =\phi^i_\alpha(t), w^a_\alpha = \phi^a_\alpha(t))$ of $\breve{\mathbf \Gamma}$ is an integral section of $\breve{\mathbf \Gamma}^H =\mathbf{H\Gamma}_\alpha$. In principle, we need to rewrite $\mathbf{H\Gamma}_\alpha$ in terms of the frame $\{Z_A\} = \{{\widehat E}_a,X_i\}$, and use expressions (\ref{intsectionframe}) to calculate an integral section $(q^i=\phi^i(t), q^a=\phi_H^a(t), v^i_\alpha =\phi^i_\alpha(t), w^a_\alpha = \phi^a_\alpha(t))$ (in quasi-velocities) of $\mathbf{H\Gamma}$. However, we only require the equations from which we may determine $\phi_H^a(t)$, since the remainder $(q^i=\phi^i(t), v^i_\alpha =\phi^i_\alpha(t), w^a_\alpha = \phi^a_\alpha(t))$ is determined by the reduced $k$-vector field $\breve{\mathbf \Gamma}$. In view of the first relations in (\ref{intsectionframe}) the equations for $\phi^a_H(t)$ are given by
 \begin{equation} \label{horliftmech}
\fpd{\phi^a_H}{t^\alpha} = -\phi^i_\gamma \Big(K^a_b (\gamma^c_i A^b_c \delta^\gamma_\alpha +  {\tilde B}^{\gamma b}_{\alpha i}) \circ {\breve\phi}_H\Big),
\end{equation}
where we have made use of the expressions $X_i = \partial/\partial q^i - \gamma^a_i {\widehat E}_a$ and ${\widetilde E}_b =K^a_b\partial/\partial q^a$.

When we use the mechanical $k$-connection, the reconstruction equation (\ref{receq}) becomes, in view of expression (\ref{mg}),
 \begin{equation} \label{mreceq}
(g^{-1} g^{(1)})_\alpha =   \Big((v^a_\alpha  + v^i_\gamma {\tilde B}^{\gamma a}_{\alpha i} )  \circ \breve{\phi}_H \Big)\, E_a.
 \end{equation}

When we put everything together, we get:
\begin{prop} Let $L$ be a regular, $G$-regular, invariant Lagrangian. In order to carry out the
reconstruction by means of the mechanical connection, one needs to solve successively
\begin{enumerate}
\item the Lagrange-Poincar\'e field equations (\ref{l-eq}) for ${\breve\phi}(t)=(q^i=\phi^i(t), v^i_\alpha =\phi^i_\alpha(t), w^a_\alpha = \phi^a_\alpha(t) )$.
\item the equations (\ref{horliftmech}) for $\phi^a_H(t)$.
\item the reconstruction equation (\ref{mreceq}) for $g(t)$,
\end{enumerate}
to obtain the solution $\phi(t) =g(t){\breve\phi}_H(t)$ of the Euler-Lagrange field equations (\ref{lfield}).
\end{prop}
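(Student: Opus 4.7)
The plan is to verify the three steps assemble to produce a solution of the Euler--Lagrange field equations, by threading together results already established in the paper. Throughout I will exploit that, since $L$ is regular, a Lagrangian {\sc sopde} $\mathbf\Gamma$ exists on $T^1_kQ$; since $L$ is invariant, $\mathbf\Gamma$ is $G$-invariant by Proposition~\ref{Gammainv}; and since $L$ is $G$-regular, the mechanical $k$-connection is well defined and principal on $\pi_{T^1_kQ}\colon T^1_kQ\to(T^1_kQ)/G$.

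First I would argue that Step~(1) produces the data of an integral section of the reduced $k$-vector field $\breve{\mathbf\Gamma}$. In Section~\ref{sec4} the Lagrange--Poincar\'e equations (\ref{l-eq}) were derived precisely as the coordinate form of the condition that ${\breve\phi}(t)=(\phi^i,\phi^i_\alpha,\phi^a_\alpha)$ be an integral section of $\breve{\mathbf\Gamma}$, so no further work is required beyond citing that derivation. Second, I would check that Step~(2) produces a horizontal lift $\breve\phi_H$ of $\breve\phi$ for the mechanical $k$-connection. The key point here is that the horizontal component $\mathbf{H\Gamma}$ computed in (\ref{mg}), when expressed in the invariant frame $\{X_i,\widehat E_a\}$ via the relations $X_i=\partial/\partial q^i-\gamma_i^a\widehat E_a$ and $\widetilde E_b=K_b^a\partial/\partial q^a$, has $\widehat E_a$-components exactly $-\phi^i_\gamma\,K^a_b(\gamma^c_iA^b_c\delta^\gamma_\alpha+\tilde B^{\gamma b}_{\alpha i})\circ\breve\phi_H$; combined with (\ref{intsectionframe}) this is precisely equation (\ref{horliftmech}). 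Invariance of $\mathbf\Gamma$ together with principality of the mechanical connection ensures $\mathbf{H\Gamma}$ projects to $\breve{\mathbf\Gamma}$, i.e.\ $\mathbf{H\Gamma}=\breve{\mathbf\Gamma}^H$, so an integral section of $\mathbf{H\Gamma}$ covering $\breve\phi$ is indeed a horizontal lift in the sense of Section~\ref{sec62}.

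Third, I would invoke the general reconstruction proposition of Section~\ref{sec62}: given $\breve\phi_H$ and any solution $g\colon\r^k\to G$ of the reconstruction equation (\ref{receq}), the map $\phi(t)=g(t)\breve\phi_H(t)$ is an integral section of $\mathbf\Gamma$. To specialise (\ref{receq}) to the mechanical connection, I compute $\omega^k(\mathbf\Gamma\circ\breve\phi_H)$ using (\ref{mg}): only the $\widetilde E_a^C$ term survives, giving $(v^a_\alpha+v^i_\gamma\tilde B^{\gamma a}_{\alpha i})\circ\breve\phi_H\,\widetilde E_a^C$, which under the identification $VM\cong M\times\g$ reads exactly as the right-hand side of (\ref{mreceq}). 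So a solution of Step~(3) is indeed a solution of the general reconstruction equation for this particular connection.

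Finally, I close the argument by applying Proposition~\ref{sope1}: since $\mathbf\Gamma$ is a {\sc sopde} and $\phi(t)=g(t)\breve\phi_H(t)$ is an integral section of $\mathbf\Gamma$, it is automatically of the form $\phi=\tau^1_Q\circ\psi$ with $\psi=\phi^{(1)}$, and its base projection satisfies the second-order system (\ref{nn1}), which for a Lagrangian {\sc sopde} is precisely the Euler--Lagrange field equations (\ref{lfield}). The main obstacle I expect is the identification $\mathbf{H\Gamma}=\breve{\mathbf\Gamma}^H$ and the rewriting of the horizontal component in the frame $\{X_i,\widehat E_a\}$: this is purely bookkeeping but the frame-change between $\{X_i,\widetilde E_a\}$ (used to define the mechanical connection) and $\{X_i,\widehat E_a\}$ (used to write reduced quantities) must be done carefully using (\ref{K})--(\ref{gamma}), so it is where a careless computation would most easily go wrong.
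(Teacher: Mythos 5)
Your proposal is correct and follows essentially the same route as the paper, which presents this proposition as the assembly of the preceding results: the Lagrange--Poincar\'e equations characterise integral sections of $\breve{\mathbf\Gamma}$, the identification $\mathbf{H\Gamma}=\breve{\mathbf\Gamma}^H$ (from invariance of $\mathbf\Gamma$ and principality of the mechanical $k$-connection) yields (\ref{horliftmech}) via the frame change and (\ref{intsectionframe}), the vertical part in (\ref{mg}) specialises (\ref{receq}) to (\ref{mreceq}), and the general reconstruction proposition of Section~\ref{sec62} together with Proposition~\ref{sope1} closes the argument. The only nitpick is a notational slip at the end, where $\phi$ is used both for the integral section of $\mathbf\Gamma$ on $T^1_kQ$ and for its base projection to $Q$; the intended application of Proposition~\ref{sope1} is nonetheless clear.
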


%\tom{
%If we write the {\sc sopde} $k$-vector field ${\mathbf \Gamma}$ in terms of the frame $\{X_i,{\widetilde E}_a\}$ as
%\[
%\Gamma_\alpha =v^i_\alpha X^C_i + w^a_\alpha \widehat{E}_a^C +
%(\widehat{\Gamma}_\alpha )^j_\beta X_j^{ V_\beta}+ (\widehat{\Gamma}_\alpha )^a_\beta
%\widehat{E}_a^{ V_\beta} = v^i_\alpha X^C_i + w^a_\alpha  (A^b_a{\widetilde E}_b^C + (  \Upsilon^b_{ia}v^i_\beta + C^b_{ca} w^c_\beta ) {\widehat E}^{V_\beta}_b)  +
%(\widehat{\Gamma}_\alpha )^j_\beta X_j^{ V_\beta}+ (\widehat{\Gamma}_\alpha )^a_\beta
%\widehat{E}_a^{ V_\beta}.
%\]
%or
%\[
%\Gamma_\alpha = v^i_\alpha X^C_i   + v^a_\alpha {\widetilde E}^C_a   +
%(\widetilde{\Gamma}_\alpha )^j_\beta X_j^{ V_\beta}+ (\widetilde{\Gamma}_\alpha )^a_\beta
%\widetilde{E}_a^{ V_\beta},
%\]
%then its relation to is $ ( w^a_\alpha(  \Upsilon^b_{ia}v^i_\beta + C^b_{ca} w^c_\beta ) +(\widehat{\Gamma}_\alpha )^b_\beta ) A^c_b= (\widetilde{\Gamma}_\alpha )^c_\beta $ and $(\widehat{\Gamma}_\alpha )^j_\beta =(\widetilde{\Gamma}_\alpha )^j_\beta $....
%}

\section{An application on harmonic maps}

Harmonic maps are smooth maps $\phi: M \to   Q$ between two Riemannian manifolds $(M,g)$ and $(Q,h)$ which have the property that their tension field, given by
\[
\tau(\phi) = g^{\alpha\beta} \left(\spd{\phi^A}{t^\alpha}{t^\beta} -{}^g\Gamma^\delta_{\alpha\beta} \fpd{\phi^A}{t^\delta} + {}^h\Gamma^A_{BC} \fpd{\phi^B}{t^\alpha}\fpd{\phi^C}{t^\beta}\right),
\]
vanishes (see e.g.\ \cite{HeleinWood}). Here ${}^g\Gamma^\delta_{\alpha\beta}$ and ${}^h\Gamma^A_{BC}$ stand for the Christoffel symbols of $g$ and $h$, respectively. In the special case where $(M,g)$ is just $\r^k$ with its standard Euclidean metric, it is well-known that the above conditions can be thought of as the Lagrangian field equations of the Lagrangian
\[
L: T^1_k Q \to \r,\,\, (q^A, u^A_\alpha) \mapsto \frac{1}{2} \delta^{\alpha\beta}h_{AB}(q)u^A_\alpha u^B_\beta.
\]
For this Lagrangian, one may check that the $k$-vector field ${\mathbf\Gamma}$ with
\[
\Gamma_\alpha = u^A_\alpha\fpd{}{q^A} - \Gamma^A_{BC}u^B_\alpha u^C_\beta \fpd{}{u^A_\beta}
\]
is Lagrangian (we will simply write ${}^h\Gamma^A_{BC}=\Gamma^A_{BC}$ from now on).

Let's assume that the metric $h$ has a symmetry Lie group $G$  which acts freely and properly to the left as isometries, and that the corresponding basis of invariant vertical vector fields is denoted by ${\widehat E}_a$, as before. We may define a principal connection on  $Q\to Q/G$ by declaring that horizontal vector fields lie in the complement of vertical vector fields. This is equivalent with saying that the vector fields $X_i$ on $Q$ are defined by the relations $h(X_i,{\widehat E}_a) =0$ and by the fact that they project on coordinate vector fields on $Q/G$ (this is, in fact, the definition of the 'mechanical' connection of the Riemannian metric $h$, see e.g.\ \cite{MC}). We will set $h_{ij}=h(X_i,X_j)$  and $h_{ab}=h({\widehat E}_a,{\widehat E}_b)$. These are all invariant functions. We will further assume that the vertical part of the metric, $h_{ab}$, comes from a bi-invariant metric on $G$, or, equivalently, from an $Ad$-invariant inner product on $\g$. That is, we will assume that $h_{ab}$ are all constants satisfying
\[
h_{ab}C^{b}_{cd} + h_{cb}C^{b}_{ad}=0.
\]
In view of $\Upsilon^b_{ia} =-\gamma^c_i C^b_{ca}$, we also obtain that
\[
h_{ab}\Upsilon^{b}_{ic} + h_{cb}\Upsilon^{b}_{ia}=0.
 \]
 From these relations,  we may also see that $\delta^{\alpha\beta}  h_{db} C^b_{ac} w^c_\beta w^d_\alpha =0$ and $\delta^{\alpha\beta} h_{ab}\Upsilon^b_{ic} w^c_\beta w^a_\alpha =0$.

The reduced Lagrangian is $l=\frac{1}{2} \delta^{\alpha\beta}(h_{ij}v^i_\alpha v^j_\beta+h_{ab}w^a_\alpha w^b_\beta)$.
If one takes these last  two properties into account in the calculation of the the Lagrange-Poincar\'e equations (\ref{eqelred}), one easily verifies that the $k$-vector field  ${\breve\Gamma}_\alpha$ of Lemma~\ref{ivfgam}, with
\begin{equation} \label{ex1}
({\breve\Gamma}_\alpha)^j_\beta  =  -  \Gamma^j_{kl}v^k_\alpha v^l_\beta +h^{ji}h_{ab}K^b_{ik}v^k_\beta w^a_\alpha ,\qquad ({\breve\Gamma}_\alpha)^b_\beta = - \Upsilon^b_{kd}v^k_\beta w^d_\alpha
\end{equation}
satisfies the equations (\ref{eqelred}). The functions $\Gamma^i_{jk}$ are the Christoffel symbols of $h_{ij}$ (which is a Riemannian metric on $Q/G$).

The integral  sections of the reduced $k$-vector field $\breve{\mathbf \Gamma}$ will therefore be solutions of the PDEs
\[
\fpd{\phi^j}{t^\alpha} = \phi^j_\alpha, \qquad \fpd{\phi^j_\alpha}{t^\beta}  =  -  \Gamma^j_{kl}\phi^k_\alpha \phi^l_\beta +h^{ji}h_{ab}K^b_{ik}\phi^k_\beta \phi^a_\alpha ,\qquad \fpd{\phi^b_\alpha}{t^\beta}  = - \Upsilon^b_{kd}\phi^k_\beta \phi^d_\alpha.
\]
 From the first two equations, it is clear that the curvature $K^b_{ik}$ of the connection acts as an obstruction for the reduced equation to be again of the type of a harmonic map  $(\r^k,\delta_{\alpha\beta}) \to (Q/G,h_{ij})$.

In order to reconstruct the integral section of the field equations, we need to compute the horizontal lift ${\breve\phi}_H$ of an integral section of ${\breve{\mathbf\Gamma}}$, with respect to the mechanical $k$-connection we had introduced in Section~\ref{sec63}. This connection takes a rather simple form here. Indeed, it is clear that in the current setting, where we have defined the connection on $Q \to Q/G$ as the one for which  $h_{ia} =0$, we have that $g^{\alpha\beta}_{ij} = \delta^{\alpha\beta}h_{ij}$, $g^{\alpha\beta}_{ia} = 0$, $g^{\alpha\beta}_{ab} = \delta^{\alpha\beta}h_{ab}$ and therefore also ${\tilde B}^{\gamma a}_{\alpha i} =0$. The equation (\ref{horliftmech}) from which we may determine the horizontal lift takes therefore the form
 \begin{equation} \label{horliftmech2}
\fpd{\phi^a_H}{t^\alpha} = -\phi^i_\alpha \Big(\gamma^c_i K^a_b  A^b_c \circ {\breve\phi}_H\Big).
\end{equation}
Likewise, the reconstruction equation (\ref{mreceq}) becomes (with $v^a_\alpha = A^a_b w^b_\alpha$):
 \begin{equation} \label{mreceq2}
(g^{-1} g^{(1)})_\alpha =   \Big(A^a_b \circ \breve{\phi}_H\Big) \phi^b_\alpha E_a.
 \end{equation}

We will use an explicit example to show how one may reconstruct a solution, from a solution of the Lagrange-Poincar\'e equations. We will consider a 4-dimensional matrix Lie group $G$, whose typical element $g=(x,y,z,\theta)$ is of the type
\[
\left[ \begin {array}{cccc} 1&y\cos \theta +x\sin \theta &-y\sin \theta +x\cos \theta &z
\\ 0&\cos \theta &-\sin\theta &
x\\0&\sin \theta &\cos \theta &
-y\\ 0&0&0&1\end {array} \right].
\]
Left multiplication $L_g: G \to G$ is then given by
\begin{equation} \label{leftmult}
(\bar x, \bar y, \bar z, \bar\theta)\mapsto (x+ \bar x \cos\theta+\bar y \sin\theta,y - \bar x \sin\theta+\bar y \cos\theta, z + \bar z+ (x\bar x + y\bar y)\sin\theta +  (y\bar x-x\bar y)  \cos\theta,\theta+\bar\theta).
\end{equation}
In \cite{Ghanam} it has been shown that this is a group representation of the Lie algebra whose only non-vanishing brackets are given by $[e_2,e_3]=e_1$, $[e_2,e_4]=-e_3$ and $[e_3,e_4]=e_2$. One may find in \cite{Ghanam} the following basis for right-invariant vector fields
\[
{\widetilde E}_x = \fpd{}{x}-y\fpd{}{z}, \quad {\widetilde E}_y = \fpd{}{y}+x\fpd{}{z}, \quad {\widetilde E}_z = \fpd{}{z}, \quad {\widetilde E}_\theta = \fpd{}{\theta}-x\fpd{}{y} +y\fpd{}{x},
\]
or, if we set ${\widetilde E}_a = K^b_a \partial/\partial q^b$, then
\[K=
\left[ \begin {array}{cccc}  1&0  &-y  &0 \\
  0 & 1  & x  &0 \\
0    & 0 & 1 & 0\\
 y    & -x  &0  & 1
     \end {array} \right].
\]
One may easily verify that the list below gives a basis, consisting only of left-invariant vector fields:
\begin{eqnarray*}
&& {\widehat E}_x =  \cos\theta\left(\fpd{}{x} + y\fpd{}{z} \right)-\sin\theta \left( \fpd{}{y} -  x\fpd{}{z}\right),\quad {\widehat E}_z = \fpd{}{z},\\&&  {\widehat E}_y = \sin\theta\left(\fpd{}{x} + y\fpd{}{z} \right) +\cos\theta \left( \fpd{}{y} - x\fpd{}{z}\right),   \quad {\widehat E}_\theta = \fpd{}{\theta}.
\end{eqnarray*}
With these vector fields, the only non-vanishing structure constants are $C_{xy}^z=-2$, $C_{x\theta}^y=1$ and $C^x_{y\theta} =-1$. The matrix $A^a_b$ in the expression ${\widehat E}_a = A_a^b {\widetilde E}_b$ is then
\[
A= \left[ \begin {array}{cccc} \cos\theta& -\sin\theta  &2(y\cos \theta +x\sin \theta) &0
\\ \sin\theta&\cos \theta &  2(y\sin \theta -x\cos \theta) &
0\\ 0& 0 &1 &
0\\ -y&x&x^2+y^2&1\end {array} \right].
\]
Remark, for later use, that it is independent of $z$.

We will consider the manifold  $Q=\r\times G$ with its natural $G$-action. We will denote the coordinate  on $Q/G=\r$ by $q$, and $(x,y,z,\theta)$ for those on $G$, as before. The Riemannian metric
\[
h= dq \odot dq+ \gamma dq \odot  d\theta +dx\odot  dx+ dy \odot  dy- y dx\odot d\theta+ x dy\odot d\theta+ dz\odot d\theta
\]
satisfies ${\mathcal L}_{{\widetilde E}_a}h=0$, so that it is an invariant metric. The corresponding principal connection on $Q\to Q/G$ can be represented by the unique horizontal vector field  $X = \partial/\partial q -\gamma \partial/\partial z$ which projects on $\partial/\partial q$. Therefore, all $\Upsilon^b_{qa}=-\gamma C_{za}^b =0$.

 In the notation of what preceded, we have $h_{qq}=h(X,X)=1$. The vertical part of the metric,
\[
(h_{ab}) = dx\odot  dx+ dy \odot  dy- y dx\odot d\theta+ x dy\odot d\theta+ dz\odot d\theta
\]
represents (as it was already  mentioned in \cite{Ghanam}) a bi-invariant metric on $G$. We are therefore in the situation of the previous paragraph. We can use the reduced Lagrangian $k$-vector field (\ref{ex1}) to compute  integral sections $(t^\alpha) \mapsto (\phi^q(t), v^q_\alpha(t), w^x_\alpha(t),w^y_\alpha(t),w^z_\alpha(t),w^\theta_\alpha(t))$ of the Lagrange-Poincar\'e field equations. They satisfy:
\[
\fpd{\phi^q}{t^\alpha} = v^q_\alpha(t),\qquad \fpd{v^q_\beta}{t^\alpha}  = 0,\quad \fpd{w^a_\beta}{t^\alpha}  = 0,\]
from which we may conclude that
\[
\phi^q(t) = c_{\alpha}^q t^\alpha + b^q, \qquad w^a_\beta(t) = c^a_{\beta}.
\]

The equations (\ref{horliftmech2}) for the horizontal lifts are now
\[
\fpd{\phi^x_H}{t^\alpha} = 0, \qquad \fpd{\phi^y_H}{t^\alpha} = 0,\qquad \fpd{\phi^z_H}{t^\alpha} = -\gamma  c^q_\alpha, \qquad \fpd{\phi^\theta_H}{t^\alpha} = 0.
\]
It follows that:
\[
\phi^z_H(t) = - \gamma c_\alpha^q t^\alpha +b^z, \qquad \phi^a_H(t)= b^a \quad(a\neq z).
\]
Since the matrix $A$ does not depend on $z$,  the right-hand side of the reconstruction equations (\ref{mreceq2}) contains only the constants $c^a_\beta$ and $b^a$. It is therefore of the form
\[
C_\alpha^x E_x + C_\alpha^y E_y + C_\alpha^z E_z + C_\alpha^\theta E_\theta,
\]
for some other constants $C_\alpha^a$ (with, in particular, $C^\theta_\alpha =c^\theta_\alpha$).

With the help of the map (\ref{leftmult}), the expression of $g^{-1}g^{(1)}$, with $g(t)=(\phi_g^x(t),\phi_g^y(t),\phi_g^z(t),\phi_g^\theta(t))$ can be computed to be
\begin{eqnarray*}
(g^{-1}g^{(1)})_\alpha &=& \left( \cos(\phi_g^\theta(t)) \fpd{\phi_g^x}{t^\alpha} - \sin(\phi_g^\theta(t)) \fpd{\phi_g^y}{t^\alpha} \right)E_x + \left( \sin(\phi_g^\theta(t)) \fpd{\phi_g^x}{t^\alpha} + \cos(\phi^\theta_g(t)) \fpd{\phi_g^y}{t^\alpha} \right)E_y
\\
&& + \left(   \phi_g^x(t)\fpd{\phi_g^y}{t^\alpha}  - \phi_g^y(t) \fpd{\phi_g^x}{t^\alpha}  +\fpd{\phi_g^z}{t^\alpha} \right)E_z + \fpd{\phi_g^\theta}{t^\alpha} E_\theta.
\end{eqnarray*}

From the first two reconstruction equations (\ref{mreceq2}) we may then conclude that
\[ \fpd{\phi_g^x}{t^\alpha} = C^x_\alpha \cos(\phi_g^\theta(t)) +C^y_\alpha\sin(\phi_g^\theta(t)), \qquad \fpd{\phi_g^y}{t^\alpha} =- C^x_\alpha \sin(\phi_g^\theta(t)) +C^y_\alpha\cos(\phi_g^\theta(t)).
\]
The last reconstruction equation leads to  $\phi_g^\theta(t) = c^\theta_\alpha t^\alpha + B^\theta$. For computational convenience, let's consider only the simple case where the solution for $\phi^\theta_g$ is given by
\[
\phi^\theta_g(t) = t^1.
\]
 Due to the assumed integrability, the second partial derivatives $\displaystyle\fpd{}{t^1}\left(\fpd{\phi^x_g}{t^\beta}\right)$ and $\displaystyle\fpd{}{t^\beta}\left(\fpd{\phi_g^x}{t^1}\right)$ should agree. Since the last derivative automatically vanishes, we may conclude that the constants $C^x_\alpha$ are zero when $\alpha>1$. Then:
\[
\phi^x_g(t) = C^x_1\sin t^1 - C^y_1\cos t^1+ B^x.
\] Likewise,
\[
\phi^y_g(t) = C^x_1\cos t^1 +C^y_1\sin t^1+B^y.
\]
With that, the solution of (\ref{mreceq2}) for $\phi^z_g$ is
%\[\phi^z_g(t) = (B^x C^x_1 + B^yC^y_1) \cos t^1+ (B^xC^y_1 -B^yC^x_1) \sin t^1 -((C^x_1)^2+(C^y_1)^2)t^1+ C^z_\alpha t^\alpha +     B^z.
% \]
\[
\phi^z_g(t) = -(B^x C^x_1 + B^yC^y_1) \cos t^1- (B^xC^y_1 -B^yC^x_1) \sin t^1 +((C^x_1)^2+(C^y_1)^2)t^1+ C^z_\alpha t^\alpha + B^z.
\]

If we use the left multiplication (\ref{leftmult}), one may easiily see that the solution $\phi(t)=g(t)\phi_H(t)$ of the Lagrangian field equations can be written as:
\[
\begin{array}{ll}
   \phi^q(t) = c_{\alpha}^q t^\alpha + b^q, &\qquad \phi^\theta(t)=t^1+b^\theta,
\\[1mm]
  \phi^x(t) = {\bar C}^x_1\sin t^1 - {\bar C}^y_1\cos t^1 +{\bar B}^x, &\qquad \phi^y(t) = {\bar C}^x_1\cos t^1  + {\bar C}^y_1\sin t^1 +{\bar B}^y,\\[1mm]
&\hspace*{-6.1cm}
 \phi^z(t) = -({\bar B}^x {\bar C}^x_1 + {\bar B}^y{\bar C}^y_1) \cos t^1 - ({\bar B}^x{\bar C}^y_1 -{\bar B}^y{\bar C}^x_1) \sin t^1  + {\bar C}^z_\alpha t^\alpha + {\bar B}^z .
\end{array}
\]

% \phi^z(t) = ({\bar B}^x {\bar C}^x_1 + {\bar B}^y{\bar C}^y_1) \cos t^1 + ({\bar B}^x{\bar C}^y_1 -{\bar B}^y{\bar C}^x_1) \sin t^1  + {\bar C}^z_\alpha t^\alpha + \begin{color}{blue}{\bar D}^z \end{color}.
%\begin{color}{blue}
%Now using, (\ref{phizg}) we obtain
%$$\phi^z(t) = -({\bar B}^x {\bar C}^x_1 + {\bar B}^y{\bar C}^y_1) \cos t^1 - ({\bar B}^x{\bar C}^y_1 -{\bar B}^y{\bar C}^x_1) \sin t^1  + {\bar C}^z_\alpha t^\alpha + {\bar B}^z .$$
%\end{color}

{\sc Acknowledgements.} L. B\'{u}a and M.\ Salgado  acknowledge  the financial support of the Ministerio de Ciencia e Innovaci\'{o}n (Spain), projects MTM2011-22585, MTM2011-15725-E, Ministerio de Economía y Competitividad MTM2014-54855-P.

\end{document}